\newcommand{\real}{\mathbb{R}}
\newcommand{\Mwp}{\mathcal{M}}
\newcommand{\Uwp}{\mathcal{U}}
\newcommand{\Mwop}{M}
\newcommand{\pb}{b}
\newcommand{\pt}{p_c}
\newcommand{\Ip}{I^+_\Mwp}
\newcommand{\Ipm}{I^\pm_\Mwp}
\newcommand{\Imi}{I^-_\Mwp}
\newcommand{\Jp}{J^+_\Mwp}
\newcommand{\Jpm}{J^\pm_\Mwp}
\newcommand{\Jm}{J^-_\Mwp}
\newcommand{\IpM}{I^+_\Mwop}
\newcommand{\IpmM}{I^\pm_\Mwop}
\newcommand{\ImpM}{I^\mp_\Mwop}
\newcommand{\ImM}{I^-_\Mwop}
\newcommand{\IpF}{I^+_M}
\newcommand{\ImF}{I^-_M}
\newcommand{\IpXZ}{I^+_{XZ}}
\newcommand{\ImXZ}{I^-_{XZ}}
\newcommand{\JpM}{J^+_\Mwop}
\newcommand{\JpmM}{J^\pm_\Mwop}
\newcommand{\JmM}{J^-_\Mwop}
\newcommand{\giso}{g_\mathrm{iso}}
\newcommand{\hpert}{h^{(1)}}
\newcommand{\hess}{\operatorname{hess}}
\newcommand{\sign}{\operatorname{sign}}
\newcommand{\rrho}{\alpha}
\newcommand{\mina}{a}
\newcommand{\maxa}{A}
\theoremstyle{plain}
\newtheorem{thm}{Theorem}
\newtheorem{conj}{Conjecture}
\newtheorem{lem}{Lemma}
\newtheorem{prop}{Proposition}
\newtheorem*{simdiag}{Simultaneous Diagonalization Theorem {\normalfont{\cite[Thm. 13.4.3]{GaHo}}}}
\newtheorem*{Morselem}{Morse Lemma {\normalfont{\cite[Lem. 2.2]{Mil}}}}
\newtheorem{cond}{Condition}
\theoremstyle{definition}
\newtheorem{defn}{Definition}
\newtheorem{ex}{Example}
\theoremstyle{remark}
\newtheorem*{rem}{Remark}
\title{Topology change with Morse functions}
\author{Leonardo Garc\'ia-Heveling
 \thanks{Mathematics Department, Radboud University, Nijmegen, The Netherlands \\ \normalfont \ttfamily{l.heveling@math.ru.nl}}}
\begin{document}

\date{\vspace{-5ex}}
\maketitle

\begin{abstract}
 Topology change is considered to be a necessary feature of quantum gravity by some authors, and impossible by others. One of the main arguments against it is that spacetimes with changing spatial topology have bad causal properties. Borde and Sorkin proposed a way to avoid this dilemma by considering topology changing spacetimes constructed from Morse functions, where the metric is allowed to vanish at isolated points. They conjectured that these Morse spacetimes are causally continuous (hence quite well behaved), as long as the index of the Morse points is different from $1$ and $n-1$. In this paper, we prove a special case of this conjecture. We also argue, heuristically, that the original conjecture is actually false, and formulate a refined version of it.
 \\
 
 \textit{Keywords:} topology change, causal continuity, Morse spacetimes, degenerate spacetimes, Borde--Sorkin conjecture.
\end{abstract}

\section{Introduction}

In General Relativity, by solving the initial value problem for Einstein's Equations, one finds the time evolution of the spacetime metric. In this picture, the topology of the constant time slices always remains the same. More precisely, the maximal globally hyperbolic developement of some initial data $V$ is, on the level of topology (Geroch \cite{Ger2}) and differentiable structure (Bernal and S\'anchez \cite{BeSa1,BeSa2}), simply $V \times \real$. The question remains whether this rigid product structure is desirable, or whether we should allow the topology to change over time as well. There are several instances where topology change is desirable. The dynamical creation of a wormhole, for example, is necessarily a topology changing process, as it involves attaching a handle to space. Already in 1957, Wheeler argued that quantum fluctuations of spacetime should modify the topology \cite{Whe}. Moreover, in certain approaches to Quantum Gravity, instead of considering the deterministic evolution of a spatial slice under the Einstein Equations, the idea is to find the transition probability between two spatial slices $V_1,V_2$. This is done by computing a path-integral over all cobordisms betweeen $V_1$ and $V_2$; that is, manifolds $\Mwp$ with boundary $\partial \Mwp = V_1 \sqcup V_2$. These cobordisms also have to be equipped with a Lorentzian (or, in Euclidean Quantum Gravity, Riemannian) metric, and possibly satisfy some additional conditions. It is then natural to think that the transition probability between $V_1$ and $V_2$ might be non-zero also when $V_1$ and $V_2$ are not homeomorphic, as long as appropriate cobordisms exist. We refer to \cite{Dow,Sor} for further discussion on the role of topology change within Quantum Gravity.

Which properties should a Lorentzian cobordism satisfy, in order to consider it physically reasonable? In this paper, we will focus on the case of compact cobordisms (i.e.\ spatially closed universes). Geroch \cite{Ger} showed that any non-trivial (meaning with $V_1 \neq V_2$) compact Lorentzian cobordism must contain closed timelike curves. Because of this, the only way to have topology change without time travel is by allowing the spacetime metric to degenerate at certain singular points \cite{Kun,Yod1}. Notice that the case of non-compact time slices is less restricted, with examples of topology change without closed timelike curves and without singular points obtained by multiple authors (see S\'{a}nchez \cite{San2} for the most recent ones and for the overview of previous work on p.~16).

One interpretation is to consider the singular points as naked singularities, and not as points in the spacetime manifold. In this paper, however, we do the opposite: we consider the singular points as points in the spacetime, where nothing special happens, except that, in some sense, the topology change happens there. Our point of view implies that the spacetime metric is not Lorentzian everywhere, but this is not so bad, since the metric is not a physical observable in itself. Indeed, we will show that the causal and length structures can be satisfactorily generalized to include the singular points (some work on the curvature has also been done \cite{LoSo}). Still, allowing degenerate metrics does introduce many new questions and problems (irrespective of our point of view on the singular points). Already in the 1980s, Anderson and DeWitt showed that on their famous ``trousers spacetime", quantum fields create infinite bursts of energy in the presence of singular points \cite{AnDW}. This result was later refined and confirmed in Manogue et al.\ \cite{MCT} and Buck et al.\ \cite{BDJS}. The aim of subsequent work was to impose additional conditions that avoid such pathologies.

A concrete and very useful construction of degenerate Lorentzian metrics on cobordisms was given by Yodzis \cite{Yod1,Yod2} using Morse functions. This idea was further developed by Sorkin \cite{Sor2} and collaborators \cite{LoSo,DGS1,DGS2,BDGSS,DoGa,DoSu}, under the name of \emph{Morse geometries}. We continue this approach in the present paper.

The construction of a Morse geometry is as follows. For simplicity, all objects are assumed to be smooth. Let $\Mwp$ be a compact cobordism of dimension $n$, $h$ a Riemannian metric, $\zeta>1$ a constant, and $f$ a Morse function. Recall that a smooth function $f \colon \Mwp \to \real$ is called a Morse function if all its critical points (where $df=0$) are non-degenerate (not to be confused with the (non-)degeneracy of the spacetime metric). This is equivalent to saying that around each critical point, there exist coordinates $x^i$ such that
\begin{equation} \label{Morsefct}
    f = \frac{1}{2} \sum_i a_i (x^i)^2,
\end{equation}
where $a_i \neq 0$ are constants. It follows, in particular, that the critical points are isolated. The \emph{index} of a critical point is defined as the number of negative $a_i$ (see \cite{Mil} for more details). Louko and Sorkin define the \emph{Morse metric} corresponding to $(\Mwp,h,f,\zeta)$ by
\begin{equation} \label{metricinv}
 g = \left\Vert df \right\Vert_h^2 h - \zeta df \otimes df,
\end{equation}
which, in coordinates, gives
\begin{equation} \label{metriccoo}
 g_{\mu\nu} := (h^{\alpha \beta} \partial_\alpha f \partial_\beta f) h_{\mu\nu} - \zeta \partial_\mu f \partial_\nu f.
\end{equation}
Let $\Mwop = \Mwp \setminus \left( \partial \Mwp \cup \{p_i\}_i \right)$, where $\{p_i\}_i$ is the set of critical points of $f$. By abuse of notation, we call the restriction of $g$ to $\Mwop$ also $g$. Since $df$ vanishes only at the critical points, $g$ is Lorentzian on $\Mwop$, and the pair $(\Mwop,g)$ forms a spacetime in the usual sense. It is clear from \eqref{metricinv} that $f$ is a time function on $(\Mwop,g)$, when choosing the time orientation to be given by the gradient vector field $\nabla^\alpha f := h^{\alpha \beta} \partial_\beta f$. Following the nomenclature of Borde et al.\ \cite{BDGSS}, we call $(\Mwop,g)$ a \emph{Morse spacetime} and $(\Mwp,h,f,\zeta)$ a Morse geometry\footnote{In \cite{DGS1}, the inverted nomenclature is used.}. It is known that for any pair of connected $3$-manifolds, there exists a Morse geometry interpolating between them \cite{DoSu,DGS1}.

According to the following two conjectures, the infinite bursts of energy found by Anderson and DeWitt are only present on certain topology-changing spacetimes, but not on others.

\begin{conj}[Sorkin] \label{Sconj}
 A quantum field propagating on a Morse geometry $(\Mwp,h,f,\zeta)$ has an unphysical singular behaviour if and only if the Morse spacetime $(M,g)$ is causally discontinuous.
\end{conj}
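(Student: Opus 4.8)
\emph{Proof strategy.} Conjecture~\ref{Sconj} is a statement of mathematical physics whose two sides must first be given precise mathematical content, so the initial task is to fix those meanings. On the field side I would work with the free Klein--Gordon field on the genuine spacetime $(\Mwop,g)$ and read ``unphysical singular behaviour'' as: the renormalized two-point function of a natural state --- equivalently, the expectation value $\langle T_{\mu\nu}\rangle$ --- fails to extend to a locally integrable distribution across the Morse set $\{p_i\}$; this is exactly the feature found by Anderson and DeWitt on the trousers \cite{AnDW}. On the causal side, causal discontinuity of $(\Mwop,g)$ is already precise (the maps $q \mapsto \IpmM(q)$ fail to be outer continuous), so the conjecture becomes: non-integrability of $\langle T_{\mu\nu}\rangle$ near some $p_i$ holds iff outer continuity fails near some $p_i$. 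Both implications should be attacked \emph{locally}, using that by \eqref{Morsefct} and \eqref{metriccoo} the metric near a critical point is an explicit quadratic-form perturbation of a flat Lorentzian metric, hence, after rescaling the $x^i$, reduces to a model Morse spacetime depending only on the index $k$ and on $\zeta$.

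For the implication (causally discontinuous) $\Rightarrow$ (singular), I would localize near an offending point $p_i$. Causal discontinuity of the local model is witnessed by a jump of $\overline{\IpmM(q)}$ between points $q$ on the two sides of $p_i$ along the flow of $\nabla f$, and in the cases where this is known to happen --- the trousers ($k=1$, $n=2$) and its higher-dimensional relatives with $k\in\{1,n-1\}$ --- a whole one-parameter family of null geodesics becomes focused through $p_i$. The plan is then to compute the Hadamard parametrix of the model and show that this extra null-geodesic coincidence contributes a term to the short-distance expansion of the two-point function whose coefficient, pulled back to $\Mwop$, is not integrable near $p_i$. Transferring the conclusion to the global spacetime by a partition of unity and the propagation-of-singularities theorem (legitimate because $\Mwop$ is an honest spacetime) would complete this direction.

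For the converse, (causally continuous) $\Rightarrow$ (not singular), the idea is to use causal continuity to control the geometry enough to carry out a Hadamard-state construction. Causal continuity of $(\Mwop,g)$, together with the fact that $f$ is a time function (immediate from \eqref{metricinv}), should let one produce Cauchy-type surfaces for the thin ``sandwich'' neighbourhoods $f^{-1}((-\varepsilon,\varepsilon)) \cap \Mwop$, so that the Klein--Gordon evolution there is well posed; one then builds a state by the standard Hadamard-series deformation argument and checks, using the explicit --- degenerate, but only polynomially so --- form of $g$ near $p_i$ from \eqref{metriccoo}, that the Hadamard coefficients stay bounded, which gives a locally integrable $\langle T_{\mu\nu}\rangle$.

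The main obstacle is foundational rather than computational: there is no established framework for quantum field theory on a spacetime whose metric degenerates on a nonempty closed set, so the microlocal machinery that underlies the very notion of singular behaviour --- the wavefront-set characterization of Hadamard states, propagation of singularities, the local parametrix construction --- must first be extended to allow $g$ to vanish at $\{p_i\}$, and it is not even clear a priori that the quantum field, or a preferred state, is unique near a Morse point. A more concrete difficulty is that near a Morse point of index $1$ the light cone wraps around in a way that no single globally hyperbolic chart captures, so the refocusing estimate above cannot be localized inside a globally hyperbolic region; this is presumably also why the authors expect the naive conjecture to require refinement, and is where a counterexample to one of the two implications would most plausibly appear.
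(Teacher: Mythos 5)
This statement is a \emph{conjecture}, not a theorem: the paper explicitly says of Conjecture~\ref{Sconj} and Conjecture~\ref{BSconj} that ``Both of them remain open to this day,'' and the paper does not attempt, anywhere, to prove Conjecture~\ref{Sconj}. There is therefore no proof in the paper for your proposal to be compared against. The paper's actual contribution (Theorem~\ref{mainthm}) is a special case of the \emph{other} conjecture, the Borde--Sorkin Conjecture~\ref{BSconj}, which is a purely causal-theoretic statement about $(\Mwop,g)$ and involves no quantum field at all; the paper engages with Sorkin's Conjecture~\ref{Sconj} only as motivation and, in Section~\ref{secconc}, as an open problem, where it surveys QFT frameworks (Janssen's semi-global hyperbolicity, the Sorkin--Johnston state, Kontsevich--Segal complex metrics) that might one day make the conjecture precise.

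Your write-up is honest that it is a ``proof strategy'' rather than a proof, and you correctly identify the central obstruction: there is at present no established notion of ``unphysical singular behaviour of a quantum field'' on a Morse geometry, so even the statement lacks an agreed mathematical meaning, let alone a proof. A few steps in your sketch are also more optimistic than the literature supports: the identification of ``unphysical singular behaviour'' with non-integrability of $\langle T_{\mu\nu}\rangle$ near $\{p_i\}$ is one reasonable candidate but has not been shown to be equivalent to the Anderson--DeWitt divergence or to the behaviour on causally continuous geometries (the paper notes this has only been checked for the trousers); your claimed null-geodesic refocusing through $p_i$ for indices $1$ and $n-1$, and the resulting Hadamard-coefficient non-integrability, are unproved assertions that would themselves be major results; and the converse direction leans on a Hadamard-state construction in a setting (degenerate $g$, possibly non--globally-hyperbolic near $p_i$) where no such construction is known to exist. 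None of this is a defect of your understanding --- it is the state of the field --- but it does mean the proposal is a research program, not a proof, and should not be read as filling a gap the paper left implicit: the paper deliberately leaves Conjecture~\ref{Sconj} open.
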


\begin{conj}[Borde--Sorkin] \label{BSconj}
 The Morse spacetime $(\Mwop, g)$ induced by a Morse geometry $(\Mwp,h,f,\zeta)$ is causally continuous if and only if all critical points of $f$ have index different from $1$ and $n-1$.
\end{conj}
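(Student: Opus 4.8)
Recall that $(\Mwop,g)$ is causally continuous if and only if it is distinguishing and the set-valued maps $p\mapsto\IpM(p)$ and $p\mapsto\ImM(p)$ are outer continuous (equivalently: distinguishing and reflecting). Since $f$ is a smooth time function on $(\Mwop,g)$, this spacetime is stably causal, hence distinguishing, for free; so the entire content of Conjecture~\ref{BSconj} is the outer continuity of $I^\pm$. Because $g_{\mu\nu}$ from \eqref{metriccoo} is in fact a \emph{smooth} symmetric $2$-tensor on all of $\Mwp$ (it merely vanishes at the critical points), the causal relation makes sense as a closed relation on $\Mwp$, and I would first argue that outer continuity can fail only at points of $\Jp(\pt)\cup\Jm(\pt)$ for some critical point $\pt$: away from critical values the level sets of $f$ are locally Cauchy and $(\Mwop,g)$ is locally globally hyperbolic, where $I^\pm$ is automatically continuous. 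This reduces the problem to the causal structure of $g$ inside a small causally convex neighbourhood $U$ of a single critical point (if several critical points share a level set, their finitely many shadows are handled separately, since outer continuity is tested point by point).

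\textbf{Step 1: reduction to a model.} Fix a critical point $\pt$ of index $\lambda$ and Morse coordinates $x^i$ with $f=\tfrac12\big(-\sum_{i\le\lambda}(x^i)^2+\sum_{j>\lambda}(x^j)^2\big)$. I would first analyze the \emph{isotropic} case $h=\delta$, giving the model Morse metric $\giso=\|df\|_\delta^2\,\delta-\zeta\,df\otimes df$ on a ball around $0$; here $\|df\|_\delta^2=\sum_i(x^i)^2$ and everything is invariant under $O(\lambda)\times O(n-\lambda)$. For the general case, write $h=\delta+\hpert+\bigO(|x|^2)$ near $\pt$ and prove a cone-comparison lemma: after shrinking $U$ there are constants $0<\mina\le\maxa$ such that the $g$-causal cones lie between those of $\mina\,\giso$ and $\maxa\,\giso$, and moreover the degenerate null direction of $g$ at $\pt$ and its first-order behaviour are a controlled deformation of those of $\giso$. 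Since $I^\pm$ depends only on the conformal class, this should transfer outer (dis)continuity at $p$ from $\giso$ to $g$. I would also verify that the value $\zeta>1$ is immaterial, by exhibiting for any $\zeta,\zeta'>1$ a cone-preserving isotopy between the corresponding model metrics.

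\textbf{Step 2: the model computation and the index dichotomy.} For $\giso$ the causal cones are explicit, and I would compute $\ImM_{\giso}(p)$ for $p$ just to the future of the waist $\{f=0\}$. Three regimes arise. (a) $\lambda=0$ (a sphere is born) or $\lambda=n$ (a sphere dies): locally there is no region beyond the critical value on one side, so $I^\pm(p)$ is trivially continuous. (b) $2\le\lambda\le n-2$: both the $O(\lambda)$-block and the $O(n-\lambda)$-block contain a sphere of dimension $\ge1$, so a causal curve entering the high-curvature region near $\pt$ can be routed "around" it in either block; I would show this forces $\ImM(p)$ to fill a full neighbourhood of the relevant part of the pre-critical level set, and to do so continuously as $p\to\{f=0\}$, which gives outer continuity and hence the "if" direction. (c) $\lambda=1$ or $\lambda=n-1$: one block is one-dimensional, so the corresponding sphere is $S^0$ — two isolated points — and causal curves through the waist are pinched to pass between them; I would exhibit an explicit sequence $p_k\to p$ on the far side of the neck and a fixed $q$ with $q\in\ImM(p)$ but $q\notin\overline{\ImM(p_k)}$ for all $k$, i.e.\ the failure of outer continuity, which is the "only if" direction. (This pinching is precisely the geometric source of the Anderson--DeWitt energy burst.)

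\textbf{Main obstacle.} The genuinely hard point is carrying Step~1 out in full generality, and I expect it to be where the conjecture \emph{as literally stated} is most vulnerable. The cone-comparison lemma controls $I^\pm$ within a fixed level set, but outer continuity concerns how $\ImM(p)$ collapses onto the degenerate direction $\nabla^\alpha f$ as $p$ itself approaches the critical value, and that direction together with the shape of the nearby cones depends on $h$ in a way the crude constants $\mina,\maxa$ do not detect. It seems plausible that a sufficiently "tilted" $h$ near an index-$1$ point can restore continuity, and symmetrically that a tilted $h$ near an index-$2$ point in dimension $4$ can destroy it. In that case the honest outcome of this programme is a proof of Conjecture~\ref{BSconj} only under a (near-)isotropy hypothesis on $h$, plus a heuristic argument that no index-only criterion can be exactly correct and a refined conjecture whose condition involves $h$ as well as the index of the critical points.
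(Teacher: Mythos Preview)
The statement is a \emph{conjecture} the paper does not prove; it establishes only the special case of Theorem~\ref{mainthm} and in Section~\ref{seccounter} argues heuristically that the conjecture is \emph{false} as stated (Example~\ref{ex2}), proposing the refinement Conjecture~\ref{BSconjmod}. Your closing paragraph in fact lands at essentially the same place: a proof under a near-isotropy hypothesis on $h$, plus a refined conjecture depending on $h$ as well as the index.

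There are, however, concrete problems in Steps~1 and~2 that explain \emph{why} the programme stalls. The sandwich ``the $g$-causal cones lie between those of $\mina\,\giso$ and $\maxa\,\giso$'' is vacuous as written, since positive scalar multiples of a Lorentzian metric share the same null cones; what the paper actually obtains (Lemma~\ref{lemgiso}) is only a one-sided comparison, $g$ having wider cones than an isotropic model with parameter $\zeta/\zeta_c$, and this needs $\zeta_c<\zeta$. Example~\ref{ex1} shows the obstruction is genuine: as the anisotropy grows, radial directions that are $\giso$-timelike become $g$-spacelike, so no isotropic model bounds $g$ from both sides and outer continuity cannot be read off from $\giso$. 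Accordingly, the paper's proof of the ``if'' direction does not transfer causal continuity from $\giso$ by cone comparison; it instead establishes Condition~\ref{condopen} directly via a delicate monotonicity argument in adapted $(f,\rrho,\Theta,\Phi)$ coordinates (Lemmas~\ref{lemtip} and~\ref{lemtip2} in Section~\ref{secopen}), and this is where the bound $\zeta_c\le 8/5$ enters. Finally, your speculation that a tilted $h$ could \emph{restore} continuity at an index-$1$ point is already ruled out: the ``only if'' direction is a theorem of Dowker--Garcia--Surya \cite{DGS1}, valid for arbitrary $h$ and $\zeta$ (Theorem~\ref{thm2}(i)); the fragility is entirely on the ``if'' side.
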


Recall that causal continuity roughly means that the past and future $I^\pm(p)$ varies continuously with the point $p$ (see Appendix \ref{appcc} for details). Causal continuity was introduced in Hawking and Sachs \cite{HaSa} as a minimal requirement for a spacetime to be physically reasonable, for reasons unrelated to quantum theory. Thus Conjecture \ref{BSconj} is also interesting beyond the obvious link to Conjecture \ref{Sconj}. Conjecture \ref{Sconj} is mentioned as early as 1990 in Sorkin \cite{Sor2}, while the earliest reference for Conjecture \ref{BSconj} is an indirect source (Dowker and Garcia \cite{DoGa} from 1998). Both of them remain open to this day. Conjecture \ref{BSconj} has seen important progress trough the works of Borde, Dowker, Garcia, Sorkin and Surya \cite{BDGSS,DGS2,DGS1}. In this paper, we contribute to this effort by showing the following special case:

\begin{thm} \label{mainthm}
 Let $(\Mwp,h,f,\zeta)$ be a Morse geometry of dimension $n$ with a single critical point $\pt \in \Mwp$. Suppose that $\pt$ has index $\lambda \neq 0,1,n-1,n$, and is contained in a coordinate neighborhood where
 \begin{align} \label{fhintro}
    &f = \frac{1}{2} \sum_i a_i (x^i)^2,
    &h = \sum_i (dx^i)^2,
\end{align}
for some real constants $a_i \neq 0$ satisfying
\begin{equation} \label{largezetaintro}
\frac{1}{\zeta} < \left\vert \frac{a_i}{a_j} \right\vert < \zeta \quad \textrm{and} \quad \frac{5}{8}  \leq \left\vert \frac{a_i}{a_j} \right\vert \leq \frac{8}{5} \quad \textrm{for all $i,j$.}
\end{equation}
Then the corresponding Morse spacetime $(\Mwop,g)$ is causally continuous.
\end{thm}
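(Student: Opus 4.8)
The time function $f$ is in fact temporal: its gradient is timelike because $g(\nabla f,\nabla f)=(1-\zeta)\Vert df\Vert_h^4<0$. Hence $(\Mwop,g)$ is stably causal, in particular distinguishing, and since causal continuity is equivalent to being distinguishing and reflecting, everything reduces to proving reflectivity, i.e.\ $q\in\overline{I^+(p)}\Leftrightarrow p\in\overline{I^-(q)}$ for all $p,q\in\Mwop$. Here I would use a symmetry: $g$ is unchanged under $f\mapsto -f$ (this merely reverses the time orientation), and this substitution turns the index $\lambda$ into $n-\lambda$. As the hypotheses of the theorem are invariant under $\lambda\leftrightarrow n-\lambda$, it suffices to prove the single implication $q\in\overline{I^+(p)}\Rightarrow p\in\overline{I^-(q)}$ for every such Morse geometry.

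The next step is to localise the problem at $\pt$. On $\Mwop$ the metric $g$ is smoothly Lorentzian, and the pieces $\{f<0\}$ and $\{f>0\}$ are globally hyperbolic with the (compact) level sets of $f$ as Cauchy surfaces; global hyperbolicity fails only along $\{f=0\}\setminus\{\pt\}$. A limit-curve argument then shows that reflectivity of $(\Mwop,g)$ follows from a \emph{matching condition at $\pt$}: roughly, that $\overline{I^{\pm}}$ are closed near $\pt$ and that the sets $\overline{I^+(p)}$ accumulated as $p\to\pt$ through $\{f<0\}$ dovetail, without gaps, with the sets $\overline{I^-(q)}$ accumulated as $q\to\pt$ through $\{f>0\}$. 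Making this reduction precise — ruling out escape to $\partial\Mwp$, and splicing the near-$\pt$ statement to the causal simplicity of the two halves — requires care, but is not the essential difficulty.

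To verify the matching condition I would first treat the \emph{isotropic} case $|a_i|\equiv\mathrm{const}$, so that (after rescaling) $g=\giso$. Splitting $x=(X,Z)$ according to the sign of $a_i$ and setting $\xi:=f=\tfrac{1}{2}(|Z|^2-|X|^2)$ and $\eta:=|X|\,|Z|$, a direct computation gives
\begin{equation*}
\giso=-(\zeta-1)\,d\xi^2+d\eta^2+2\sqrt{\xi^2+\eta^2}\,\bigl[(\sqrt{\xi^2+\eta^2}-\xi)\,d\omega_X^2+(\sqrt{\xi^2+\eta^2}+\xi)\,d\omega_Z^2\bigr]
\end{equation*}
on $\{\eta\ge 0\}\setminus\{0\}$ times $S^{\lambda-1}\times S^{n-\lambda-1}$, with $S^{\lambda-1}$ collapsing along $\{\eta=0,\ \xi>0\}$ and $S^{n-\lambda-1}$ along $\{\eta=0,\ \xi<0\}$. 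From the $(\xi,\eta)$-block one reads off that future causal curves obey $|d\eta/d\xi|\le\sqrt{\zeta-1}$, an upper bound on $\IpXZ,\ImXZ$; the matching lower bound uses the hypothesis $\lambda\neq 0,1,n-1,n$, i.e.\ $\lambda\ge 2$ and $n-\lambda\ge 2$, so that $S^{\lambda-1}$ and $S^{n-\lambda-1}$ are \emph{connected} — a causal curve starting in $\{\xi<0\}$ can be routed so as to cross $\{\xi=0\}$ with $\eta>0$ (going ``around'' $\pt$) and then has room to reach any angular position. Hence $I^{\pm}_{\giso}$ of a point near $\pt$ is exactly this cone-times-spheres set; one checks that $\overline{I^{\pm}}=J^{\pm}$ near $\pt$ and that the matching condition holds. (If $\lambda=1$ or $n-1$ this fails: $S^0$ is disconnected, the two ``legs'' of $\{\xi<0\}$ cannot be joined by a causal curve staying in $\{\xi<0\}$, and the accumulated sets at $\pt$ no longer dovetail.)

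Finally, for general $a_i$ obeying \eqref{largezetaintro}, I would pass to the coordinates $y^i=\sqrt{|a_i|}\,x^i$, in which $f$ takes the standard form while $h$ becomes a constant multiple of the Euclidean metric plus a diagonal perturbation $\hpert$ whose relative size is governed by the ratios $|a_i/a_j|$. The $\zeta$-independent bound $\tfrac{5}{8}\le|a_i/a_j|\le\tfrac{8}{5}$ in \eqref{largezetaintro} is precisely what allows one to choose an \emph{isotropic}, exactly-solvable comparison metric $\gnarrow$ (of the form treated above, with aperture $\sqrt{\zeta'-1}$) whose null cones still lie strictly inside those of $g$, and a wider comparison metric on the other side. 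I would then argue that $I^{\pm}_g$ near $\pt$ keeps the qualitative cone-times-connected-spheres structure: the bound on apertures of $g$-causal curves comes from the wide comparison metric, while the existence of enough $g$-causal curves — including those going ``around'' $\pt$ and realising every angular endpoint — comes from $\gnarrow$ together with the (still valid) connectedness of $S^{\lambda-1}$ and $S^{n-\lambda-1}$; with this structural description, the matching argument of the previous step applies to $g$ verbatim. I expect this to be the main obstacle: it does not suffice to sandwich the \emph{open} chronological sets, since that only confines a point $q\in\overline{I^+_g(p)}$ to the region between the narrow and wide cones; one must control the \emph{closures} $\overline{I^{\pm}_g}$ all the way up to the degeneracy at $\pt$ and show the qualitative picture survives there — which is exactly what the numerical window in \eqref{largezetaintro} is there to guarantee.
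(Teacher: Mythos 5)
Your broad reduction aligns with the paper: temporality of $f$ handles distinguishing; localizing at $\pt$ via limit-curve arguments and the global hyperbolicity of $\{\pm f>0\}$ corresponds to Section~\ref{secpushup}; the isotropic analysis, with $\lambda\neq 1,n-1$ used to keep $S^{\lambda-1}$ and $S^{n-\lambda-1}$ connected, matches Appendix~\ref{appnbhd}; and your ``matching condition at $\pt$'' is essentially the paper's Condition~\ref{condopen} (openness of $I^\pm(\pt)$) combined with the push-up and limit-curve Lemmas~\ref{limcurv} and~\ref{pushup}.

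The genuine gap is in the last step, which you flag but do not close, and the route you hint at is not the one the paper takes. Sandwiching $g$ between a narrower isotropic $\gnarrow$ and a wider isotropic comparison metric cannot determine $\overline{I^\pm_g}$ near $\pt$: causal continuity is unstable under both cone-narrowing and cone-widening (the paper's Examples~\ref{ex3} and~\ref{ex4} exist to make exactly this point), so knowing $I^\pm_g$ sits between two well-understood families does not recover its structure at the degeneracy. You have also misattributed the role of the $8/5$ window: constructing the narrower comparison metric (Lemma~\ref{lemgiso}) only needs $\zeta_c<\zeta$. The paper in fact uses \emph{only} the one-sided, narrower comparison, and closes the gap with a different device, the radial-shift Lemma~\ref{lemtip}: in coordinates $(f,\rrho,\Theta,\Phi)$ with $\rrho=(\rho r)^{1/p}$, any timelike curve from $q$ to $\pt$ remains timelike after the shift $\rrho\mapsto\rrho-\varepsilon$; the proof tracks the metric coefficients under the gradient flows of $r$ and $-\rho$, and the bound $\zeta_c\le 8/5$ enters precisely in verifying $\partial_\epsilon\bigl(\Vert\nabla f\Vert^2/\Vert\nabla\rrho\Vert^2\bigr)\ge 0$ for a suitable choice of the exponent $p$. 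This produces a point $\tilde q\in J^+(q)$ on the $\rho=0$ axis, from which $\pt$ is reached by a $\giso$-timelike curve, so the isotropic openness result (Lemma~\ref{lemapp}) applies and Condition~\ref{condopen} follows without ever invoking a second comparison metric. Without an analogue of this shift argument, your ``cone-times-connected-spheres'' description of $I^\pm_g$ near $\pt$ is asserted rather than derived, and the proposal does not go through.
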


In Section \ref{secfull} (Proposition \ref{proploc}) we will show that one can \emph{always} find coordinates where \eqref{fhintro} holds, up to adding a perturbation to $h$ which vanishes at $\pt$. Moreover, we conjecture that the first part of \eqref{largezetaintro} is sharp, in the sense that its violation leads to causal discontinuity (see Example \ref{ex2} and Conjecture \ref{BSconjmod}).

Combining Theorem \ref{mainthm} with previous results by other authors (fleshed out below), we can summarize the current progress on Conjecture \ref{BSconj} in the next theorem.

\begin{thm} \label{thm2}
 Let $(\Mwp,h,f,\zeta)$ be a Morse geometry of dimension $n \geq 2$, and $(\Mwop,g)$ the corresponding Morse spacetime. Assume $f$ has a single critical point for each critical value.
 \begin{enumerate}
     \item If $f$ has at least one critical point of index $\lambda = 1, n-1$, then $(\Mwop,g)$ is causally discontinuous.
     \item If each critical point of $f$ has index $\lambda=0,n$, or has any index $\lambda \neq 1, n-1$ and is contained in a neighborhood as in Theorem \ref{mainthm}, then $(\Mwop,g)$ is causally continuous.
 \end{enumerate}
\end{thm}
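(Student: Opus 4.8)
The plan is to reduce the global property of causal continuity to a local analysis at each critical point, using that $f$ is a time function. Since the $g$-gradient of $f$ is timelike --- this follows from \eqref{metricinv} and $\zeta>1$ by a short computation --- $f$ is a temporal function, so $(\Mwop,g)$ is stably causal, hence strongly causal and in particular past- and future-distinguishing. By the Hawking--Sachs characterisation (see Appendix~\ref{appcc} and \cite{HaSa}), both parts then reduce to deciding whether $(\Mwop,g)$ is reflecting, i.e.\ whether $I^\pm$ is outer continuous. First I would order the critical values $c_1<\dots<c_k$ --- by hypothesis $c_j$ is attained at the single critical point $p_j$, of index $\lambda_j$ --- interlace them with regular values, and cover $\Mwop$ by finitely many open ``slabs'' of the form $f^{-1}(\textrm{open interval})$, each containing at most one critical point, consecutive slabs overlapping in a critical-point-free one. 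Two facts about such slabs will be used repeatedly: because $f$ strictly increases along future-directed causal curves, every slab is causally convex in $\Mwop$, so $I^\pm(p)\cap U$ coincides with the chronological future/past of $p$ computed inside the slab $U$; and a critical-point-free slab is globally hyperbolic --- each regular level set is a compact Cauchy surface for it, the no-imprisonment property from strong causality forcing inextendible causal curves trapped in the slab to have $f$ exhaust the interval --- hence causally continuous, a fortiori reflecting.

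For part (i) I would invoke the prior analyses of Borde, Dowker, Garcia, Sorkin and Surya \cite{BDGSS,DGS2,DGS1}: they establish that a Morse spacetime containing an index-$1$ or index-$(n-1)$ critical point alone at its critical value exhibits the ``generalised trousers'' pathology --- there is a point just to the future (for $\lambda_j=1$) or past (for $\lambda_j=n-1$) of $p_j$ at which $I^\mp$ fails to be outer continuous, the jump already visible inside the slab around $p_j$. The hypothesis that each critical value is attained only once is exactly what puts us in their setting, and is what makes the failure ``local'' enough to persist in $\Mwop$. Hence $(\Mwop,g)$ is not reflecting, so causally discontinuous.

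For part (ii), by the reduction above it suffices to check that $(\Mwop,g)$ is reflecting near each $p_j$, the critical-point-free slabs being harmless. If $\lambda_j\in\{0,n\}$, then in coordinates as in \eqref{fhintro} (which near $p_j$ one may take with $h$ the given metric, using Proposition~\ref{proploc} if necessary) all the $a_i$ share a sign, the level sets on one side of $c_j$ are ellipsoidal $(n-1)$-spheres collapsing onto $p_j$ and those on the other side are locally empty; a direct computation with \eqref{metriccoo} --- this is the innocuous creation or annihilation of an $S^{n-1}$, needing no constraint on $\zeta$ or on the ratios $a_i/a_j$ --- shows $I^\pm(p)$ varies continuously across $p_j$. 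If instead $\lambda_j\notin\{0,1,n-1,n\}$ and $p_j$ lies in a coordinate neighbourhood as in Theorem~\ref{mainthm}, then the data near $p_j$ form precisely a single-critical-point Morse geometry satisfying the hypotheses of Theorem~\ref{mainthm}, whose conclusion is obtained by analysing $I^\pm$ near the critical point and gluing with the globally hyperbolic exterior. Extracting from that argument that $I^\pm$ is continuous near $p_j$ in $(\Mwop,g)$ as well, and assembling the cases over $j=1,\dots,k$, shows that $(\Mwop,g)$ is reflecting and hence, being also distinguishing, causally continuous.

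I expect the main obstacle to be the passage from ``local'' to ``global'' --- justifying that the reflecting property of $(\Mwop,g)$ is really decided slab by slab. Two points make this nontrivial. First, a coordinate ball $B_j$ around $p_j$ is \emph{not} causally convex: in the Morse metric the light cones open up near $p_j$ (from \eqref{metricinv} the causal condition becomes $\|df\|_h^2\,h(v,v)\le\zeta\,(df(v))^2$, which is easy to satisfy as $\|df\|_h\to0$), so causal curves may leave $B_j$ spatially and return, and $I^\pm(p)\cap B_j$ need not be intrinsic to $B_j$; one must instead compare $I^\pm$ computed in the enclosing causally convex slab with $I^\pm$ in $\Mwop$, which requires controlling the Morse light cones near $p_j$ finely (as in the proof of Theorem~\ref{mainthm}). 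Second, taking closures does not distribute over the finite union $I^\pm(p)=\bigcup_\alpha\bigl(I^\pm(p)\cap U_\alpha\bigr)$, so outer continuity of each restricted piece does not formally give outer continuity of the whole. My expectation is that both are handled with the time function: a would-be reflecting failure is a pair $p,q$ with $p\in\overline{I^+(q)}$ but $q\notin\overline{I^-(p)}$, which forces $f(q)\le f(p)$; choosing a regular level strictly between $f(q)$ and $f(p)$ and cutting-and-pasting timelike curves there confines the offending behaviour to a single slab, where global hyperbolicity, the index-$0,n$ computation, or Theorem~\ref{mainthm} applies. Making this cut-and-paste uniform is the technical heart of the proof.
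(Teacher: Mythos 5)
Your proposal has the right skeleton and the right ingredients, and these match the paper's. However, the paper presents no new argument for Theorem \ref{thm2}: it is a pure citation summary. Part (i) is cited to \cite{DGS1}; the $\lambda = 0,n$ case of part (ii) is cited to \cite{BDGSS}; the remaining case of part (ii) is Theorem \ref{mainthm}; and, crucially, the reduction from several critical points (one per critical value) to a single critical point is also cited to \cite{DGS1}, in the precise form that the Morse spacetime is causally continuous if and only if every critical point has a causally continuous neighborhood. Your slab decomposition is an attempt to re-derive exactly this reduction, and you correctly flag the two obstructions --- coordinate balls around $\pt$ are not causally convex because the cones open up as $\Vert df \Vert_h \to 0$, and closures do not distribute over the union of slab pieces --- and correctly call the cut-and-paste ``the technical heart.'' But you stop short of carrying it out (for instance, your argument breaks down when $f(p)=f(q)$, so there is no regular level to cut at). Since this reduction is already settled in \cite{DGS1}, the clean move is simply to cite it, as the paper does, rather than rebuild it. One further caution: for the $\lambda = 0,n$ case you invoke Proposition \ref{proploc} as if it delivers coordinates in which $h$ is exactly the Euclidean metric, but that proposition only produces $h$ up to a perturbation $\hpert$ vanishing at the critical point, so your proposed ``direct computation with \eqref{metriccoo}'' would have to control that perturbation as well. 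The result does hold for general $h$, which is precisely why the paper cites \cite{BDGSS} there instead of giving a calculation.
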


The case $\lambda = 0,n$ in part (ii) was solved in Borde et al.\ \cite{BDGSS}, along with the special case of Theorem \ref{mainthm} corresponding to $| a_i | = 1$ for all $i$. Part (i) of Theorem \ref{thm2} was shown in Dowker et al.\ \cite{DGS1}. Also in \cite{DGS1}, it was shown that the case of multiple critical points (as long as there is only one per critical value) reduces to the case of a single critical point: the Morse spacetime is causally continuous if and only if every critical point has a causally continuous neighborhood.

The proof of Theorem \ref{mainthm} is contained in Section \ref{secproof}. In Section \ref{secfull}, we discuss the necessity of our assumptions, and possible generalizations of our proof. Based on this discussion, we propose a modified version of the Borde--Sorkin conjecture in Section \ref{secconc}, where we also give concluding remarks. Appendix \ref{appnbhd} contains results of \cite{BDGSS} that we need in our proofs, and Appendix \ref{appcc} gives some background on causal continuity.

\section{Proof of Theorem \ref{mainthm}} \label{secproof}

Before starting, let us give a brief outline of the proof. Recall from the introduction that the case of $| a_i | = 1$ for all $i$ has already been solved in Borde et al.\ \cite{BDGSS}, a result that we build upon. While in the case of $| a_i | = 1$ there are a lot of symmetries, which allow for good coordinate choices (see Appendix \ref{appnbhd}), this is no longer true in the general case. Our strategy is to extend the causal structure from $(\Mwop,g)$ to $\Mwp$, in a way that preserves its most important properties: openness of the chronological relation $I^+$, the push-up principle $J^+(I^+(q)) = I^+(J^+(q)) = I^+(q)$, and the properties of limits of causal curves. Once these properties are proven, causal continuity follows almost immediately, as it would in Minkowski spacetime.

The most difficult to establish, out of the three properties, is the openness of the chronological relation. We do this in Subsection \ref{secopen}. The argument is based on reduction to the $| a_i | = 1$ case. Once openness of the chronological relation is established, the rest of the proof can be performed without the need to make any coordinate choices whatsoever, and without further use of the assumptions \eqref{fhintro} and \eqref{largezetaintro}. This second part of the proof is contained in Subsection \ref{secpushup}. It requires heavy use of the limit curve theorems of Minguzzi \cite{Min}.

\subsection{Notation and first steps}

Throughout this section, we assume that $(\Mwp,h,f,\zeta)$ is a Morse geometry of dimension $n \geq 4$, with a single critical point $\pt$ of index $\lambda \neq 0,1,n-1,n$ lying in the interior of $\Mwp$. As in the introduction, we write $\Mwop := \Mwp \setminus \left( \partial \Mwp \cup \{\pt\} \right)$, and $g$ denotes the metric \eqref{metricinv}, which is Lorentzian on $\Mwop$ and degenerate-Lorentzian on $\Mwp$. We do not use Einstein's summation convention: all sums are written out, but without making explicit the summation limits. Hence $\sum_i$ means $\sum_{i=1}^n$, and similarly $\max_i$ means $\max_{i=1,...,n}$. For convenience, we refer to the hypothesis of Theorem \ref{mainthm} as Condition \ref{condmain}.

\begin{cond} \label{condmain}
 There exists an open set $\Uwp \subseteq \Mwp$ with $\pt \in \Uwp$, an open ball $\mathcal{B} \in \real^n$ around the origin, and a coordinate chart $\phi : \Uwp \to \mathcal{B}$ of $\Mwp$ such that $\phi(\pt) = 0$ and
 \begin{align} \label{cooscondmain}
    &f \circ \phi^{-1} = \frac{1}{2} \sum_{i} a_i (x^i)^2,
    &h \circ \phi^{-1} = \sum_{i} (dx^i)^2,
\end{align}
for some real constants $a_i \neq 0$. Moreover, setting
\begin{equation*}
 \zeta_c := \max_{i,j} \left\vert \frac{a_i}{a_j} \right\vert,
\end{equation*}
we have
\begin{equation} \label{largezeta}
    \zeta_c \leq \frac{8}{5} \qquad \textrm{and} \qquad \zeta_c < \zeta.
\end{equation}
\end{cond}

The value of $\zeta_c$ does not depend on the choice of coordinates, as long they satisfy \eqref{cooscondmain} (we give a detailed argument for this in Section \ref{secgennbhd}). We will usually suppress the coordinate map $\phi$ from the notation, and whenever we write $x^i$, it will refer to the coordinates as given by Condition \ref{condmain}. In these coordinates, the metric \eqref{metricinv} takes the form
\begin{equation} \label{metricx}
    g = \sum_{i,j} \left(a_i x^i dx^j \right)^2  - \zeta \left( \sum_k a_k x^k dx^k \right)^2.
\end{equation}

An important tool in our proof will be to reduce some computations to the case of \emph{isotropic neighborhoods} as studied in Borde et al.\ \cite{BDGSS} (see also Appendix \ref{appnbhd}). These are metrics where Condition \ref{condmain} is satisfied, but with the stronger requirement that $\vert a_i \vert = 1$ for all $i$. The following lemma gives us such an isotropic neighborhood metric $\giso$ with lightcones narrower than those of $g$.

\begin{lem} \label{lemgiso}
 Suppose that Condition \ref{condmain} is satisfied, and consider on $\Uwp$ the linear change of coordinates
 \begin{equation*}
     x^i \mapsto y^i := \sqrt{\vert a_i \vert} \zeta_c^{\frac{1}{4}} x^i.
 \end{equation*}
 Then the tensor given in these new coordinates by
 \begin{equation} \label{giso}
  \giso := \sum_{i,j} \left( y^i dy^j \right)^2  - \frac{\zeta}{\zeta_c} \left( \sum_k \sign(a_k) y^k dy^k \right)^2
 \end{equation}
 is a Lorentzian metric on $\Uwp \setminus \{\pt\}$, with lightcones narrower that those of $g$.
\end{lem}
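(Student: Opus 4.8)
The plan is to do all the comparison in the single coordinate chart $x^i$ furnished by Condition~\ref{condmain}, so that $g$ and $\giso$ appear as quadratic forms in the same variables and can be compared pointwise, and then to check that the induced time orientations agree. The argument splits into three short steps.

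\emph{Step 1: $\giso$ is Lorentzian.} In the $y$-coordinates one has $\giso = \lVert dF\rVert_\delta^2\,\delta - \tfrac{\zeta}{\zeta_c}\,dF\otimes dF$, where $\delta := \sum_i (dy^i)^2$ is the Euclidean metric and $F := \tfrac12\sum_k \sign(a_k)(y^k)^2$ is a Morse function whose only critical point in $\Uwp$ is $\pt$. Hence $\giso$ is precisely a metric of the Louko--Sorkin form \eqref{metricinv} built from the data $(\Uwp,\delta,F,\zeta/\zeta_c)$ --- in fact an isotropic neighborhood metric in the sense of \cite{BDGSS} (Appendix~\ref{appnbhd}), since $\lvert\sign(a_k)\rvert=1$ for all $k$ --- and by \eqref{largezeta} its parameter satisfies $\zeta/\zeta_c>1$. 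By the same reasoning as in the paragraph containing \eqref{metriccoo}, such a metric is Lorentzian exactly away from the critical points of $F$; therefore $\giso$ is Lorentzian on $\Uwp\setminus\{\pt\}$.

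\emph{Step 2: the $\giso$-cone is contained in the $g$-cone.} Since $x^i\mapsto y^i$ is diagonal, $dy^i=\sqrt{\lvert a_i\rvert}\,\zeta_c^{1/4}\,dx^i$, and substituting into \eqref{giso} gives
\[
 \giso=\zeta_c\sum_{i,j}\lvert a_i\rvert\,\lvert a_j\rvert\,(x^i\,dx^j)^2-\zeta\Bigl(\sum_k a_k x^k\,dx^k\Bigr)^2 .
\]
Comparing with \eqref{metricx}, $g$ and $\giso$ have the \emph{same} ``timelike part'' $-\zeta\,df\otimes df$ and differ only in their positive parts. So, evaluating on a tangent vector $v$ at a point with coordinates $x$, it suffices to prove
\[
 \Bigl(\sum_i a_i^2 (x^i)^2\Bigr)\Bigl(\sum_j (v^j)^2\Bigr)\ \le\ \zeta_c\Bigl(\sum_i \lvert a_i\rvert (x^i)^2\Bigr)\Bigl(\sum_j \lvert a_j\rvert (v^j)^2\Bigr),
\]
for this yields $g(v,v)\le\giso(v,v)$ at every point, whence $\giso(v,v)<0\Rightarrow g(v,v)<0$. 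The displayed inequality follows by bounding $\sum_i a_i^2(x^i)^2\le(\max_i\lvert a_i\rvert)\sum_i\lvert a_i\rvert (x^i)^2$ and $\sum_j(v^j)^2\le(\min_j\lvert a_j\rvert)^{-1}\sum_j\lvert a_j\rvert (v^j)^2$ and using the identity $\zeta_c=\max_{i,j}\lvert a_i/a_j\rvert=\max_i\lvert a_i\rvert/\min_j\lvert a_j\rvert$.

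\emph{Step 3: the time orientations agree, and the obstacle.} The time orientation of $g$ is given by the $h$-gradient $\nabla f$ and that of $\giso$ by the Euclidean gradient $\nabla^\delta F$; in the $x$-coordinates one computes $\nabla^\delta F=\sum_i\sign(a_i)x^i\,\partial_{x^i}$. Using $g(\nabla f,X)=(1-\zeta)\lVert df\rVert_h^2\,df(X)$ and $df(\nabla^\delta F)=\sum_i\lvert a_i\rvert(x^i)^2$, one gets
\[
 g(\nabla f,\nabla^\delta F)=(1-\zeta)\,\lVert df\rVert_h^2\,\sum_i\lvert a_i\rvert (x^i)^2<0
\]
on $\Uwp\setminus\{\pt\}$, so $\nabla^\delta F$ is future-directed for $g$. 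Combined with Steps~1--2 (which already ensure $\nabla^\delta F$ is $\giso$-timelike and hence $g$-timelike), this shows the future lightcone of $\giso$ is contained in that of $g$ at every point of $\Uwp\setminus\{\pt\}$, i.e.\ $\giso$ has lightcones narrower than those of $g$. The only step carrying genuine content is the quadratic-form inequality of Step~2: it is not an instance of Cauchy--Schwarz but relies on the precise value of $\zeta_c$ as the ratio of the largest to the smallest $\lvert a_i\rvert$; Steps~1 and 3 are routine bookkeeping once the diagonal coordinate change is in hand.
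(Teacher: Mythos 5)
Your proof is correct and takes essentially the same approach as the paper: identify the common timelike part $-\zeta\,df\otimes df$, compare the remaining positive parts pointwise via the bound $\lvert a_i/a_j\rvert \leq \zeta_c$, and invoke the isotropic neighborhood framework for the Lorentzian claim. The paper does the comparison in the $y$-coordinates (transforming $g$) while you do it in the $x$-coordinates (transforming $\giso$), which is a cosmetic difference; your explicit time-orientation check in Step~3 is a small piece of bookkeeping that the paper leaves implicit but that is indeed needed when the lemma is later used to conclude $\IpM(\tilde{q},\giso)\subseteq\IpM(\tilde{q})$.
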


\begin{proof}
 By \eqref{largezeta}, ${\zeta}/{\zeta_c} > 1$, and hence $\giso$ is a \emph{neighborhood metric} in the sense of Borde et al.\ (see Appendix \ref{appnbhd}). In particular, $\giso$ is Lorentzian everywhere except at the origin. In the $y$-coordinates, the metric \eqref{metricx} takes the form
 \begin{equation*}
     g = \frac{1}{\zeta_c} \sum_{i,j} \left\vert \frac{a_i}{a_j} \right\vert \left( y^i dy^j \right)^2  - \frac{\zeta}{\zeta_c} \left( \sum_k \sign(a_k) y^k dy^k \right)^2.
 \end{equation*}
 For a vector $V \in T\Uwp$ (with components $V^i$ in the $y$-coordinates), this means
 \begin{align*}
     g(V,V) &= \frac{1}{\zeta_c} \sum_{i,j} \left\vert \frac{a_i}{a_j} \right\vert \left( y^i V^j \right)^2  - \frac{\zeta}{\zeta_c} \left( \sum_k \sign(a_k) y^k V^k \right)^2 \\ &\leq \sum_{i,j} \left( y^i V^j \right)^2  - \frac{\zeta}{\zeta_c} \left( \sum_k \sign(a_k) y^k V^k \right)^2 \\ &= \giso(V,V)
 \end{align*}
 Hence if $\giso(V,V) \leq 0$, then also $g(V,V) \leq 0$. In other words, $\giso$ has narrower lightcones than $g$.
\end{proof}

Another crucial element in the proof will be the extension of the causal relation from $\Mwop$ to $\Mwp$. Let $\gamma \colon I \to \Mwp$ be a locally Lipschitz curve. By continuity, $\gamma^{-1}(\Mwop)$ is open in $\real$ and hence can be written as a union of intervals $\bigcup_i I_i$. If $\gamma \colon I_i \to M$ is future-directed (f.d.)\ causal for every $i$, then we say that $\gamma \colon I \to \Mwp$ is future-directed causal, and analogously for timelike and/or past-directed curves. This gives rise to a notion of futures and pasts $\Ipm(p), \Jpm(p)$ in $\Mwp$. Additionally, for $p \neq \pt$, we denote by $\IpmM(p), \JpmM(p)$ the usual past and future sets in the spacetime $(M,g)$.

Given a f.d. causal curve $\gamma \colon I \to \Mwp$ as above, since $f$ is a time function on $(M,g)$, $f \circ \gamma$ is strictly increasing on $\gamma^{-1}(M)$. Therefore $f$ is increasing along all of $\gamma$ and can only cross $\pt$ once, i.e.\ $\gamma^{-1}(\pt)$ is empty, a point, or a closed connected interval in $\real$. The following lemma tells us that furthermore no causal curve can be imprisoned in a neighborhood of $\pt$  (see \cite[pp.\ 61-62]{BEE} for the definition of non-imprisonment on non-degenerate spacetimes).

\begin{lem} \label{lempbdry}
 Let $\gamma \colon (a,b) \to M$ be a causal curve which is future inextendible in $M$. Then either $\lim_{s \to b} \gamma(s) = \pt$ or $\gamma$ runs into $\partial \Mwp$.
\end{lem}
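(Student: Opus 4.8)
The plan is to argue by contradiction: suppose $\gamma \colon (a,b) \to M$ is future inextendible in $M$, does \emph{not} converge to $\pt$, and does not run into $\partial \Mwp$. Since $\Mwp$ is compact, the image of $\gamma$ has a limit point $q \in \Mwp$ as $s \to b$; by assumption $q \neq \pt$ and $q \notin \partial \Mwp$, so $q \in M$. The idea is to show that $\gamma$ in fact converges to $q$ and extends through it in $M$, contradicting future inextendibility.

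First I would take a coordinate cube $K$ around $q$ whose closure is a compact subset of $M$ (possible since $M$ is an open manifold and $q \in M$). On $\overline{K}$ the Lorentzian metric $g$ is non-degenerate and we may choose an auxiliary Riemannian metric; the key point is that $f$ is a time function on $M$, and on the compact set $\overline{K}$ the gradient $\nabla f$ is bounded away from zero, so $f$ increases at a definite rate along any future-directed causal curve segment that stays in $\overline{K}$ — more precisely, $df(\gamma')$ is bounded below by a positive constant times the $h$-arclength speed. This is where I expect the main work to sit: I need the standard fact that in a compact coordinate region a future-directed causal curve has finite $h$-arclength on any parameter interval on which $f$ stays bounded, which follows from the causality condition $g(\gamma',\gamma')\le 0$ together with the explicit form \eqref{metricinv} of $g$ relating the null cones to those of $h$ and the level sets of $f$. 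Because $f\circ\gamma$ is monotone increasing and bounded above (by $\sup_{\overline K} f$ as long as $\gamma$ stays in $\overline K$), the curve cannot leave and re-enter $\overline K$ infinitely often without accumulating finite arclength, so eventually $\gamma$ is trapped in $\overline K$ for $s$ near $b$, has finite $h$-length there, and therefore extends continuously to $s=b$ with $\gamma(b)\in\overline K\subset M$.

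The remaining step is to upgrade this continuous extension to a causal extension, i.e.\ to check that $\gamma$ is actually \emph{future extendible in $M$}, contradicting the hypothesis. Here I would invoke the limit curve machinery: with the endpoint $\gamma(b)=q\in M$ now an honest interior point of the non-degenerate spacetime $(M,g)$, a future-directed causal curve approaching $q$ can be prolonged past $q$ by a short causal segment (for instance a piece of a future-directed timelike or null geodesic from $q$), and the concatenation is again a causal curve in $M$ by the definition of causal curve given just above the lemma (the pieces on $\gamma^{-1}(M)$ are causal). This contradicts future inextendibility. Hence one of the two excluded alternatives must hold: $\lim_{s\to b}\gamma(s)=\pt$, or $\gamma$ runs into $\partial\Mwp$.

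The main obstacle is the ``finite arclength'' estimate near $q$ and the control that forces $\gamma$ to stay in $\overline K$: one must rule out pathological behaviour where the curve oscillates, approaching $q$ while repeatedly wandering away. The monotonicity of the time function $f$ along $\gamma$ is the essential tool that prevents this — any excursion out of a small neighborhood of $q$ and back costs a definite amount of $f$-increase relative to $h$-length, and since $f$ is bounded on $\overline K$ only finitely much such length is available. Everything else is either the definition of causal curves in $\Mwp$ supplied in the text, or standard non-degenerate Lorentzian geometry on the open manifold $M$.
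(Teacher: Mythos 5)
The paper's own proof is much shorter and takes a different route: it observes that $f$ being a time function makes $(M,g)$ strongly causal, invokes the non-imprisonment theorem \cite[Prop.\ 3.13]{BEE} to conclude that a future inextendible causal curve must leave any compact $K\subseteq M$ for good, and then simply takes $K=\Mwp\setminus U$ for $U$ an arbitrary open set containing $\pt$ and $\partial\Mwp$ (such $K$ is compact because $\Mwp$ is). Your argument instead re-derives this non-imprisonment property from scratch via the estimate $h(\gamma',\gamma')^{1/2}\lesssim |df(\gamma')|$ on compacta of $M$, which is a legitimate alternative --- it is essentially one way of proving the cited proposition for strongly causal spacetimes that carry a time function with nowhere-vanishing gradient on compacta --- but it redoes work the paper outsources to the reference.

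There is, however, a gap in your ``eventually trapped'' step. From the finiteness of the $h$-arclength of $\gamma$ inside $\overline K$ you conclude that $\gamma$ cannot leave and re-enter $\overline K$ infinitely often. As written this does not follow: an excursion that only grazes $\overline K$ near its boundary can contribute arbitrarily little arclength, so infinitely many such excursions are compatible with your finite-length bound. The standard repair is a nested-compacta argument: fix compact coordinate balls $K_1$, $K_2$ with $q\in\operatorname{int}(K_1)$, $K_1\subset\operatorname{int}(K_2)$, $\overline{K_2}\subset M$, and put $d:=\operatorname{dist}_h(K_1,\partial K_2)>0$. Since $q$ is an accumulation point, $\gamma$ enters $K_1$ infinitely often; if it also left $K_2$ infinitely often, each round trip would accrue at least $2d$ of $h$-arclength inside $\overline{K_2}$, contradicting the finiteness estimate (which should be taken over the whole curve, using $\sup_{\Mwp}f-\inf_{\Mwp}f<\infty$, not just $\sup_{\overline K}f$). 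Hence $\gamma$ is eventually confined to $K_2$, where finite arclength makes it Cauchy and therefore convergent. With this correction the argument is sound; your closing step of prolonging $\gamma$ past the limit point by a short causal segment is fine, though it can be compressed to the observation that a causal curve possessing a future endpoint in $M$ is, by definition, future extendible in $M$.
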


\begin{proof}
 Because $f$ is a time function on $(M,g)$, $(M,g)$ is strongly causal. Then, by \cite[Prop.\ 3.13]{BEE}, given any compact set $K \subseteq M$, there exists $\delta>0$ such that $\gamma(s) \not\in K$ for all $s \in (b-\delta, b)$ (in other words, $\gamma$ must leave $K$ and never enter it again). Let $U \subseteq \Mwp$ be any open set (not necessarily connected) containing $\pt$ and $\partial \Mwp$. Then we can choose $K = \Mwp \setminus U$, and hence there exists $\delta>0$ such that $\gamma(s) \in U$ for all $s \in (b-\delta,b)$. Since $U$ was arbitrary, we are done. 
\end{proof}

\subsection{Openness of chronological pasts and futures} \label{secopen}

In this subsection, we characterize the past $\Imi(\pt)$ of the critical point $\pt$. Every statement has a time-reversed analogue for the future $\Ip(\pt)$ (which we do not write out explicitly). The following condition is very important. It states that if from a point $q \in M$ we can reach $\pt$ via timelike curves, then we can also reach a whole neighborhood of $\pt$. This is a well-known fact for spacetimes without singular points.

\begin{cond}[Openness of $\Ip$] \label{condopen}
 For every $q \in \Ipm(\pt)$ there exists a neighborhood $U$ of $\pt$ such that $U \setminus \{\pt\} \subseteq \ImpM(q)$.
\end{cond}

An important consequence of Condition \ref{condopen} is that the chronological relation is not altered by removing $\pt$.

\begin{lem}[{$\IpM = \Ip \cap \Mwop$}] \label{avoidp}
 Suppose Condition \ref{condopen} is satisfied. Then, for every $p \in \Mwop$ it holds that $\IpM(p) = \Ip(p) \cap \Mwop$.
\end{lem}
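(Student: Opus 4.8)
The plan is to prove the two inclusions separately. The inclusion $\IpM(p) \subseteq \Ip(p) \cap \Mwop$ holds with no extra hypotheses: a future-directed timelike curve of $(M,g)$ issuing from $p$ has image in $\Mwop$, so it never meets $\pt$, and therefore, by the very definition of the extended causal relation, it is also a future-directed timelike curve in $\Mwp$; hence its endpoint lies in $\Ip(p)$, and of course in $\Mwop$.

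For the reverse inclusion I would fix $q \in \Ip(p) \cap \Mwop$ and a future-directed timelike curve $\gamma\colon[0,1]\to\Mwp$ with $\gamma(0)=p$, $\gamma(1)=q$. If $\gamma$ never meets $\pt$, its image lies in $\Mwop$ (the boundary $\partial\Mwp$ plays no role, as a timelike curve between interior points cannot meet the spacelike level sets of $f$ that make up $\partial\Mwp$), so $\gamma$ is a timelike curve of $(M,g)$ and $q\in\IpM(p)$. Otherwise, the nonempty closed set $\gamma^{-1}(\pt)\subseteq[0,1]$ has a minimum $t_0$ and a maximum $t_1$. By minimality of $t_0$, the interval $[0,t_0)$ is a connected component of $\gamma^{-1}(\Mwop)$, so $\gamma|_{[0,t_0)}$ is a future-directed timelike curve of $(M,g)$ with $\lim_{s\to t_0}\gamma(s)=\pt$; this exhibits $\pt\in\Ip(p)$, that is, $p\in\Imi(\pt)$. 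Symmetrically, maximality of $t_1$ shows via $\gamma|_{(t_1,1]}$ that $q\in\Ip(\pt)$.

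Now I would invoke Condition~\ref{condopen} for both signs. Applied to $p\in\Imi(\pt)$ it yields a neighborhood $U$ of $\pt$ with $U\setminus\{\pt\}\subseteq\IpM(p)$, and applied to $q\in\Ip(\pt)$ it yields a neighborhood $V$ of $\pt$ with $V\setminus\{\pt\}\subseteq\ImM(q)$. Since $U\cap V$ is again a neighborhood of $\pt$ in a manifold of dimension $\geq 1$, I can pick $r\in(U\cap V)\setminus\{\pt\}$; then $r\in\IpM(p)$ and $q\in\IpM(r)$, so transitivity of the chronological relation in the genuine spacetime $(M,g)$ (equivalently, the standard push-up $\IpM(\IpM(p))\subseteq\IpM(p)$) gives $q\in\IpM(p)$, completing the proof.

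I expect the only real obstacle to be the bookkeeping in the case where $\gamma$ meets $\pt$: one must split $\gamma$ at its \emph{first} and \emph{last} visits to $\pt$, not at an arbitrary visit, so that the two leftover pieces genuinely lie in $\Mwop$ and can be read as timelike curves of $(M,g)$; after that, Condition~\ref{condopen} is exactly the tool needed to reroute around the missing point $\pt$ through a common point $r$ of two punctured neighborhoods, and everything else is ordinary causality theory in $(M,g)$.
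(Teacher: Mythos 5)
Your proof is correct and follows essentially the same route as the paper: restrict the given timelike curve to show $\pt \in \Ip(p) \cap \Imi(q)$, apply Condition~\ref{condopen} for both time orientations to get punctured neighborhoods $U \setminus \{\pt\} \subseteq \IpM(p)$ and $V \setminus \{\pt\} \subseteq \ImM(q)$, and reroute through a common point of $U \cap V \setminus \{\pt\}$. The only difference is bookkeeping: you split at the first and last visits $t_0, t_1$ to $\pt$, whereas the paper just picks any $c$ with $\gamma(c) = \pt$ and restricts $\gamma$ to $[a,c]$ and $[c,b]$ (each is already a timelike curve in $\Mwp$ under the extended definition, so the extra care about extremal visits is not needed, though it does no harm).
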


\begin{proof}
 The inclusion $\IpM(p) \subset \Ip(p) \cap \Mwop$ is trivial. It remains to show th eother direction. Let $q \in \IpM(p)$, and let $\gamma: [a,b] \to \Mwp$ be a timelike curve from $p$ to $q$. If $\gamma$ avoids $\pt$, there is nothing to prove. Hence suppose that $\gamma(c)=\pt$ for some $c$. Then $\pt \in \Ip(p) \cap \Imi(q)$, so by Condition \ref{condopen} we can find neighborhoods $U$, $V$ of $\pt$ such that $U \setminus \{ \pt \} \subseteq \IpM(p)$ and $V \setminus \{ \pt \} \subseteq \ImM(q)$. But then we can find a point $z \in U \cap V \setminus \{ \pt \}$, and a timelike curve $\sigma : [a,b] \to \Mwop$ from $p$ to $q$ passing though $z$.
\end{proof}

In Appendix \ref{appnbhd} (Lemma \ref{lemapp}), we show that Condition \ref{condopen} holds for the isotropic metric $\giso$, which is simpler than $g$, and has narrower lightcones (Lemma \ref{lemgiso}). Making use of this fact, we show through the following lemma that Condition \ref{condopen} also holds for our metric of interest $g$.

\begin{lem} \label{lemnbhd}
 Condition \ref{condmain} implies Condition \ref{condopen}.
\end{lem}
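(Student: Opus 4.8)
The plan is to transfer the openness property from the isotropic metric $\giso$ to $g$ via the comparison of lightcones established in Lemma 1. Fix $q \in \Ipm(\pt)$, so there is a future- or past-directed timelike curve (for $g$) joining $q$ and $\pt$; by symmetry assume it is past-directed from $\pt$, i.e.\ $q \in \Imi(\pt)$. The goal is to produce a neighborhood $U$ of $\pt$ with $U \setminus \{\pt\} \subseteq \ImM(q)$.

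First I would localize: by Lemma 2 (non-imprisonment), the final segment of the timelike curve from $\pt$ to $q$ enters and stays in the coordinate neighborhood $\Uwp$ of Condition 1, so without loss of generality I may assume the whole curve from $\pt$ lies in $\Uwp$ until it first reaches $q$, and moreover I may shrink to a point $q' \in \Uwp \setminus \{\pt\}$ with $q' \in \Imi(\pt)$ and $q \in \ImM(q')$ — since once I have an open neighborhood mapping into $\ImM(q')$, push-forward along the timelike curve from $q'$ to $q$ (valid in the genuine spacetime $\Mwop$, away from $\pt$) gives the corresponding statement for $q$. So it suffices to treat the case $q \in \Uwp \setminus \{\pt\}$ joined to $\pt$ by a $g$-timelike curve inside $\Uwp$.

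Next comes the cone comparison. Since $\giso$ has strictly narrower lightcones than $g$ (Lemma 1), every $\giso$-timelike vector is $g$-timelike; more importantly I need the converse direction of information flow, so instead I would run the argument on $\giso$ directly: the $g$-timelike curve from $\pt$ to $q$ need not be $\giso$-timelike, so this is where care is required. The fix is to note that what I actually know from $q \in \Imi(\pt)$ plus Condition 1 is weaker than what I want, and the honest route is: apply Lemma \ref{lemapp} (Condition 2 holds for $\giso$) not to $q$ itself but to a point in $\ImpM[\giso]$ — I must first show $q \in I^-_{\giso}(\pt)$ or replace $q$ by such a point. Here is the key observation: near $\pt$ the metric $g$ is, in the $y$-coordinates of Lemma 1, a metric whose cones \emph{contain} the $\giso$-cones, and one can check directly from the coordinate form \eqref{metricx} that the radial curve $s \mapsto s\,v$ (for a suitable direction $v$ with $v^i$ aligned so that $\sum_k \sign(a_k) y^k v^k$ dominates) is $\giso$-timelike; such a radial $\giso$-timelike curve from $\pt$ exists and lands in $\ImpM[\giso](\pt)$, and by Lemma \ref{lemapp} a full punctured neighborhood of $\pt$ lies in $I^-_{\giso}$ of points along it. Then, because $\giso$-timelike implies $g$-timelike, that punctured neighborhood lies in $\ImM(\cdot)$ of a point $z$ which is itself $g$-timelike-related to $\pt$; and finally I connect $z$ to the given $q$ using the openness I have just produced together with a standard push-up in $\Mwop$. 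Spelling out which radial directions are $\giso$-timelike, and checking the chain of inclusions closes up to give $U \setminus \{\pt\} \subseteq \ImM(q)$ for the originally given $q$, is the technical heart.

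The main obstacle I anticipate is precisely the mismatch of directions: Lemma 1 gives $\giso \leq g$ on the quadratic form, hence $\giso$-causal $\Rightarrow$ $g$-causal, which lets me \emph{transport chronological relations from $\giso$ into $g$}, but the hypothesis $q \in \Ipm[g](\pt)$ lives on the $g$-side. Bridging this requires showing that the existence of \emph{some} $g$-timelike curve reaching $\pt$ already forces $q$ (or a nearby substitute) into the $\giso$-chronological cone, which is not automatic and must be extracted from the explicit coordinate expressions \eqref{metricx} and \eqref{giso} near $\pt$ — essentially, that the $g$-cone and the $\giso$-cone, while different, have the same "radial reach" towards $\pt$ because the discrepancy between them is controlled by the bound $\zeta_c \leq 8/5$ in \eqref{largezeta}. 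Once that comparison is in hand, the rest is the routine combination of Lemma \ref{lemapp}, Lemma 1, and push-up in the honest spacetime $\Mwop$.
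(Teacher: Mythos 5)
You correctly identify the overall architecture — transfer openness from $\giso$ (Lemma~\ref{lemapp}) to $g$ via the narrower-cones inclusion $\IpM(\cdot,\giso)\subseteq\IpM(\cdot)$ of Lemma~\ref{lemgiso} — and, to your credit, you also correctly spot where this plan could break: the hypothesis $q\in\Imi(\pt)$ is a statement about $g$, while Lemma~\ref{lemapp} needs a point in the $\giso$-chronological past of $\pt$. This is indeed the heart of the matter. However, your proposal does not actually close this gap, and the step you wave at ("finally I connect $z$ to the given $q$") is precisely the one that fails.

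Concretely: you produce a point $z$ on a radial $\giso$-timelike curve into $\pt$, then apply Lemma~\ref{lemapp} to get $U\setminus\{\pt\}\subseteq I^+_{\Mwop}(z,\giso)\subseteq\IpM(z)$. To conclude $U\setminus\{\pt\}\subseteq\IpM(q)$ you must then establish $z\in\JpM(q)$ (so that push-up applies). But nothing in your construction makes $z$ causally to the future of $q$: the radial $\giso$-timelike curve through $\pt$ has no a priori relation to the given $q$. Both $q$ and $z$ are in $\Imi(\pt)$, but that does not order them. The openness you have produced runs the wrong way — it tells you what is in $\IpM(z)$, not that $z\in\IpM(q)$. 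The paper's resolution is to construct the intermediate point \emph{starting from a $g$-timelike curve $\gamma$ from $q$ to $\pt$ itself}: Lemma~\ref{lemtip} shows that shifting the $\rrho$-coordinate of $\gamma$ by a constant $\varepsilon$ produces a new $g$-timelike curve $\sigma$, which hence stays in $\JpM(q)$; pushing $\varepsilon$ to the first value where $\rrho=0$ yields (Lemma~\ref{lemtip2}) a point $\tilde q\in\JpM(q)$ lying on the axis $\{\rho=0\}$. Such a $\tilde q$ is $\giso$-timelike related to $\pt$ along an integral curve of $\nabla f$, which is what feeds into Lemma~\ref{lemapp}. This deformation argument (Lemma~\ref{lemtip}, whose Step~2 is exactly where the bound $\zeta_c\leq 8/5$ is used) is the technical core of Lemma~\ref{lemnbhd}, and it is entirely absent from your proposal. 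Without it, you have the scaffolding but not the bridge.
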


The rest of this subsection is dedicated to proving Lemma \ref{lemnbhd}. We start by discussing coordinate choices. Assume w.l.o.g.\ that we have ordered our coordinates $x^i$, where $f$,$h$ take the form \eqref{cooscondmain}, so that $a_i <0$ for $i=1,...,\lambda$ and $a_j >0$ for $j=\lambda+1,...,n$. We then define the following ``radial" coordinates
 \begin{align*}
  &r^2 := \frac{1}{2} \sum_{i=1}^{\lambda} -a_i (x^i)^2,
  &\rho^2 := \frac{1}{2} \sum_{j=\lambda+1}^{n} a_j (x^j)^2.
 \end{align*}
 By following the flow of the gradient vector $\nabla r$ (by which we mean the gradient taken with respect to $h$, so that $h(\nabla r, \cdot) = dr( \cdot)$) we get a diffeomorphism from $\real^{n-\lambda} \setminus \{0\}$ to $\real \times S^{\lambda - 1}$. This gives us a coordinate system $(r,\theta_1,...,\theta_{\lambda - 1})$ on $\real^{\lambda}$, where we view $\real^{\lambda}$ as the subspace spanned by the $x^i$ coordinates with $i=1,...,\lambda$. Essentially, all we have done is changing to polar coordinates, but it is important that we have done so in a way that the angular directions are $g$-orthogonal to the $r$-direction. We can do the same construction with $\rho$, obtaining coordinates $(\rho,\phi_1,...,\phi_{n-\lambda-1})$ on $\real^{n-\lambda}$. Furthermore, we have
 \begin{align}
  &f = \rho^2 - r^2,
  &\rrho := (\rho r)^\frac{1}{p}, \label{frrho}
 \end{align}
 where $p>0$ is a constant, $f$ is just our Morse function, and $\rrho$ is chosen so that $h(\nabla f, \nabla \rrho) = 0$. Using $(f,\rrho,\theta_1,...,\theta_\lambda,\phi_1,...,\phi_{n-\lambda-1})$ as our coordinates, the Euclidean metric $h$ takes the form
 \begin{align*}
     h = \frac{df^2}{\Vert \nabla f \Vert^2} + \frac{d\rrho^2}{\Vert \nabla \rrho \Vert^2} + h_\Theta + h_\Phi,
 \end{align*}
 and thus the Morse metric $g$ takes the form
 \begin{equation} \label{metricrrho}
     g = -(\zeta-1) df^2 + \frac{\Vert \nabla f \Vert^2}{\Vert \nabla \rrho \Vert^2} d\rrho^2 + \Vert \nabla f \Vert^2 (h_\Theta + h_\Phi).
 \end{equation}
 Here we have used that, by definition, $\Vert \nabla f \Vert = \Vert df \Vert$. Having chosen our coordinates, we now state a lemma that constitutes the most important step in the proof of Lemma \ref{lemnbhd}.

\begin{lem} \label{lemtip}
 Suppose Condition \ref{condmain} is satisfied. Let $q \in \Imi(\pt)$, and let $\gamma \colon [0,1] \to \Mwp$ be any f.d.\ timelike curve from $\gamma(0) = q$ to $\gamma(1) = \pt$, which we express in coordinates as
 \begin{equation} \label{gammaincoos}
     \gamma(s) = (f(s),\rrho(s), \Theta(s), \Phi(s)).
 \end{equation}
 Then, for every $0 < \varepsilon < \rrho(0)$, the curve $\sigma \colon [0,s_\varepsilon] \to \Mwp$ given by
 \begin{equation} \label{defsigma}
     \sigma(s) := (f(s),\rrho(s) - \varepsilon, \Theta(s), \Phi(s))
 \end{equation}
 is f.d.\ timelike. Here $s_\varepsilon := \min \{s \in (0,1) \mid \rrho(s) = \varepsilon \}$.
\end{lem}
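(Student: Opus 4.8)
The plan is to show directly that the curve $\sigma$ defined by shifting the $\rrho$-coordinate of $\gamma$ downward by $\varepsilon$ is future-directed timelike, by examining the tangent vector $\dot\sigma$ against the metric \eqref{metricrrho}. Writing $\gamma$ in the coordinates $(f,\rrho,\Theta,\Phi)$, its velocity is $\dot\gamma = (\dot f, \dot\rrho, \dot\Theta, \dot\Phi)$, and the velocity of $\sigma$ is $\dot\sigma = (\dot f, \dot\rrho, \dot\Theta, \dot\Phi)$ as well, since the shift is by a constant. The only thing that changes between $\gamma$ and $\sigma$ is the \emph{point} at which we evaluate the metric coefficients. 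From \eqref{metricrrho},
\begin{equation*}
 g(\dot\sigma,\dot\sigma) = -(\zeta-1)\dot f^2 + \frac{\Vert \nabla f \Vert^2}{\Vert \nabla \rrho \Vert^2}\dot\rrho^2 + \Vert \nabla f \Vert^2 (h_\Theta + h_\Phi)(\dot\Theta + \dot\Phi, \dot\Theta + \dot\Phi),
\end{equation*}
and the corresponding expression for $g(\dot\gamma,\dot\gamma) \le 0$ differs only in that the coefficients $\Vert \nabla f \Vert$, $\Vert \nabla \rrho \Vert$, $h_\Theta$, $h_\Phi$ are evaluated along $\gamma$ rather than along $\sigma$. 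So the strategy is: control how each of these four coefficients changes when $\rrho$ is decreased by $\varepsilon$ at fixed $f$, fixed angles, and then assemble the inequalities to conclude $g(\dot\sigma,\dot\sigma) \le g(\dot\gamma,\dot\gamma) \le 0$, with the time-orientation ($\dot f > 0$, which is unchanged) giving future-directedness. The role of $s_\varepsilon$ is that for $s \in [0,s_\varepsilon]$ we have $\rrho(s) \ge \varepsilon$, so $\rrho(s) - \varepsilon \ge 0$ and $\sigma$ stays in the coordinate domain; and since $\varepsilon < \rrho(0)$, the interval $[0,s_\varepsilon]$ is nonempty.

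The key computations are the dependence of $\Vert \nabla f \Vert$, $\Vert \nabla \rrho \Vert$, $h_\Theta$, $h_\Phi$ on $\rrho$. Recall $f = \rho^2 - r^2$ and $\rrho = (\rho r)^{1/p}$, so $r$ and $\rho$ are determined by $(f,\rrho)$; decreasing $\rrho$ at fixed $f$ means decreasing the product $\rho r$ while keeping $\rho^2 - r^2$ fixed. One computes $\Vert \nabla f \Vert^2 = 4(\rho^2 + r^2)$ and, using $h(\nabla f, \nabla \rrho)=0$ together with the explicit form of $\rrho$, an expression for $\Vert \nabla \rrho \Vert^2$; the angular metrics $h_\Theta$ and $h_\Phi$ are (up to the $\Vert \nabla f\Vert^2$ prefactor already pulled out) the round metrics on spheres of radii proportional to $r$ and $\rho$ respectively, hence monotone in $r,\rho$. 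The point I expect to be decisive — and the main obstacle — is checking that \emph{all four} coefficient changes go the right way simultaneously: decreasing $\rho r$ at fixed $\rho^2 - r^2$ should not increase $\Vert \nabla f\Vert^2/\Vert \nabla \rrho\Vert^2$ in a way that spoils the $d\rrho^2$ term, nor increase the angular prefactors relative to what the (now smaller) $\dot\rrho$-budget can absorb. This is where Condition \ref{condmain}, specifically the bound $\zeta_c \le 8/5$ on the anisotropy and $\zeta > \zeta_c$, must enter: the constant $p$ in \eqref{frrho} and the relation between the coordinates $x^i$ and $(r,\rho)$ carry the $a_i$'s, and the pinching $5/8 \le |a_i/a_j| \le 8/5$ is exactly what is needed to make the monotonicity estimates close.

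Concretely, I would proceed as follows. First, fix $s \in [0,s_\varepsilon]$ and set $\rho_0, r_0$ to be the values along $\gamma(s)$ and $\rho_1, r_1$ the values along $\sigma(s)$ (same $f$, so $\rho_1^2 - r_1^2 = \rho_0^2 - r_0^2$ and $\rho_1 r_1 < \rho_0 r_0$). Second, express the metric inequality $g(\dot\sigma,\dot\sigma) \le g(\dot\gamma,\dot\gamma)$ as the requirement that
\begin{equation*}
 \left(\frac{\Vert\nabla f\Vert^2_{\sigma}}{\Vert\nabla\rrho\Vert^2_{\sigma}} - \frac{\Vert\nabla f\Vert^2_{\gamma}}{\Vert\nabla\rrho\Vert^2_{\gamma}}\right)\dot\rrho^2 + \big(\Vert\nabla f\Vert^2_\sigma (h_\Theta + h_\Phi)_\sigma - \Vert\nabla f\Vert^2_\gamma(h_\Theta+h_\Phi)_\gamma\big)(\dot\Theta+\dot\Phi,\dot\Theta+\dot\Phi) \le 0,
\end{equation*}
and then show each of the two parenthesized differences is $\le 0$ term by term — the $d\rrho^2$-coefficient because the ratio $\Vert\nabla f\Vert^2/\Vert\nabla\rrho\Vert^2$ is monotone increasing in $\rho r$ at fixed $f$, and the angular term because both the prefactor $\Vert\nabla f\Vert^2 = 4(\rho^2+r^2)$ and the sphere radii decrease when $\rho r$ decreases at fixed $\rho^2-r^2$ (here one uses that fixing $\rho^2 - r^2$ and shrinking $\rho r$ forces both $\rho$ and $r$ toward the larger root, which requires $|a_i/a_j|$ near $1$ to keep the coordinate change benign). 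Third, note $\dot f$ is identical along $\gamma$ and $\sigma$, so the sign of $g(\dot\sigma,\dot\sigma)$ combined with $\dot f>0$ on each subinterval where $\sigma$ misses $\pt$ gives that $\sigma$ is f.d.\ timelike in the sense defined before Lemma \ref{lempbdry}. Finally, verify the endpoint and domain bookkeeping: $\sigma(0) = (f(0), \rrho(0)-\varepsilon, \Theta(0),\Phi(0))$ is a genuine point of $\Mwop$ (its $\rrho$-coordinate is positive since $\varepsilon < \rrho(0)$), and $\sigma$ is defined on all of $[0,s_\varepsilon]$ because $\rrho(s) \ge \varepsilon$ there. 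The delicate monotonicity lemma in the second step is, I expect, the crux; everything else is bookkeeping with \eqref{metricrrho}.
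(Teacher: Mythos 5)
Your overall blueprint matches the paper's: decrease only the $\rrho$-coordinate, observe that the components of $\dot\sigma$ and $\dot\gamma$ in the $(f,\rrho,\Theta,\Phi)$ chart are identical, and then show that each metric coefficient in \eqref{metricrrho} fails to increase under the shift. But two of the specific claims you lean on are false in the anisotropic case, and it is precisely the anisotropic case that makes this lemma nontrivial. First, $h_\Theta$ and $h_\Phi$ are \emph{not} round sphere metrics scaled by $r$ and $\rho$: the level sets of $r=\bigl(\tfrac12\sum_{i\le\lambda}(-a_i)(x^i)^2\bigr)^{1/2}$ are ellipsoids, and the angular coordinates are defined by the flow of $\nabla r$, whose Cartesian differential is $\operatorname{diag}(e^{a_i t})$ with \emph{unequal} exponents. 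The correct argument (the paper's Step 1) is that $\pi_\Theta\dot\sigma=\pi_\Theta DF\,\pi_\Theta\dot\gamma$, that orthogonal projection does not increase $h$-norm, and that $DF$ is a contraction because each exponent is nonpositive; your scaling picture collapses the whole step and would not even see the need for any condition on the $a_i$'s. Second, the identity $\Vert\nabla f\Vert^2 = 4(\rho^2+r^2)$ holds only when all $\lvert a_i\rvert$ are equal; in general $\Vert\nabla f\Vert^2=\sum_i a_i^2(x^i)^2$, which satisfies $2\min_i\lvert a_i\rvert\,(\rho^2+r^2)\le \Vert\nabla f\Vert^2\le 2\max_i\lvert a_i\rvert\,(\rho^2+r^2)$, and its monotone decrease under the shift is a separate elementary observation (Cartesian coordinates shrink under the flows of $\nabla r$ and $-\nabla\rho$).

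The deeper omission is in what you call the crux. You assert that $\Vert\nabla f\Vert^2/\Vert\nabla\rrho\Vert^2$ is monotone in $\rho r$ at fixed $f$, and say the bound $\tfrac58\le\lvert a_i/a_j\rvert\le\tfrac85$ ``is exactly what is needed,'' but you never derive the inequality, and you do not mention that $p$ in $\rrho=(\rho r)^{1/p}$ is a \emph{free} parameter to be chosen. The paper's Step 2 computes $\partial_\epsilon\bigl(\Vert\nabla f\Vert^2/\Vert\nabla\rrho\Vert^2\bigr)$ explicitly and finds that positivity requires $A^2/a^2\le\min\{\nu,(3\nu+2)/5\}$ with $\nu=2-1/p\in(1,2)$; only by taking $\nu$ close to $2$ does the bound $A^2/a^2\le 8/5$ suffice. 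Your write-up has the right shape of argument and correctly localizes the difficulty, but as stated it hides the two genuinely hard points (the $DF$-contraction/projection argument and the quantitative estimate with the free exponent) behind assertions that are either incorrect or unproven.
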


\begin{proof}
 The statement is trivially true if $\rrho(0) = 0$ (since then there exist no suitable $\varepsilon$), and otherwise $s_\varepsilon$ is well-defined (the minimum exists) by continuity of $\rrho(s)$. Moreover, our choice of $\epsilon$ and $s_\varepsilon$ ensures that $\rrho(s)-\varepsilon \geq 0$ for all $s \in [0,s_\varepsilon]$, so that the curve $\sigma$ is also well-defined.
 
 Note that shifting $\rrho$ by a constant $\varepsilon$ while leaving $f,\Theta,\Phi$ fixed (as is done in \eqref{defsigma}), is equivalent to shifting both $\rho^2$ and $r^2$ by a quantity $\epsilon(s)$. We are going to show that $g(\dot\sigma(s),\dot\sigma(s)) \leq g(\dot\gamma(s),\dot\gamma(s)) < 0$. This will be done in multiple steps, corresponding to various terms in \eqref{metricrrho}.
 
 \textbf{Step 1} (Angular part)\textbf{.} Let $\pi_\Theta$ denote the orthogonal projection onto the subspace of the tangent space spanned by the $\Theta$ angular directions. We will show that
 \begin{equation*}
     h(\pi_\Theta \dot{\sigma},\pi_\Theta  \dot{\sigma}) \leq h(\pi_\Theta  \dot{\gamma},\pi_\Theta  \dot{\gamma}).
 \end{equation*}
 An analogous statement holds for $\pi_\Phi$. We proceed by computing $\pi_\Theta \dot{\sigma}$. Notice that shifting $r(s)^2$ to $r(s)^2 - \epsilon(s)$ means following the flow $F \colon \Mwp \times \real \to \Mwp$ of the vector field $\nabla r$ for a certain time $t(s) > 0$. Then
 \begin{equation} \label{flow}
     \dot\sigma(s) = DF(\gamma(s),t(s)) \dot\gamma(s) + \frac{\partial F}{\partial t} (\gamma(s),t(s)) \dot{t}.
 \end{equation}
 Similary, shifting $\rho^2$ means following the flow of $- \nabla \rho$. Notice also that
 \begin{equation*}
     \frac{\partial F}{\partial t} (\gamma(s),t(s)) = \nabla r (F(\gamma(s),t(s))),
 \end{equation*}
 Hence the second term on the RHS of \eqref{flow} only adds a contribution to the $r$ component of $\dot\sigma(s)$ (but not to the angular components). We can compute $DF$ by solving the ODE
 \begin{equation*}
     \frac{\partial}{\partial t} DF(x,t) = D(\nabla r)(F(x,t)) DF(x,t)
 \end{equation*}
 with initial condition $DF(x,0) = \operatorname{Id}$. In Cartesian coordinates $D\nabla r$ takes a block diagonal form:
 \begin{equation*}
      D\nabla r_{ij} = \begin{cases}
      a_i \delta_{ij} &\textrm{for $i,j=1,...,\lambda $}, \\
      \delta_{ij} &\textrm{for $i,j=\lambda+1,...,n $}, \\
      0 &\textrm{otherwise},
      \end{cases}
 \end{equation*}
 hence
 \begin{equation} \label{DF}
     DF(x,t)_{ij} = \begin{cases}
      e^{a_i t} \delta_{ij} &\textrm{for $i,j=1,...,\lambda $}, \\
      \delta_{ij} &\textrm{for $i,j=\lambda+1,...,n $}, \\
      0 &\textrm{otherwise}.
      \end{cases}
 \end{equation}
 Moreover, we have that $DF(x,t) \partial_r \propto \partial_r$ because $F$ is the flow of a vector field collinear to $\partial r$. Therefore
 \begin{equation*}
     \pi_\Theta DF \ V = \pi_\Theta DF \pi_\Theta V \textrm{ for any } V \in TM,
 \end{equation*}
 and thus we can write 
 \begin{align*}
     h(\pi_\Theta \dot{\sigma},\pi_\Theta  \dot{\sigma}) &= h(\pi_\Theta DF \pi_\Theta \dot{\gamma},\pi_\Theta DF \pi_\Theta  \dot{\gamma}) \\
     &\leq h(DF \pi_\Theta \dot{\gamma},DF \pi_\Theta  \dot{\gamma}) \\
     &\leq h(\pi_\Theta \dot{\gamma},\pi_\Theta  \dot{\gamma}).
 \end{align*}
 Here we have first used that the orthogonal projection $\pi_\Theta$ cannot increase the norm, and then that $DF$ cannot increase the norm either, because it does not increase any of the Cartesian components \eqref{DF}.
 
 \textbf{Step 2} ($\rrho$ direction)\textbf{.} We want to show that $\frac{\Vert \nabla f \Vert^2}{\Vert \nabla \rrho \Vert^2}$ does not increase when shifting $r^2$ and $\rho^2$ by $\epsilon$, so that we do not get a larger contribution in \eqref{metricrrho}. Thus in what follows we view $r$ and $\rho$ as functions of $\epsilon$, in the sense that $r^2 = r_0^2 + \epsilon$ and $\rho^2 = \rho_0^2 + \epsilon$ with respect to some reference values $r_0, \rho_0$ (but we will omit the subscript $0$ from the notation). From this point of view, what we want to show is
 \begin{equation*}
     \frac{\partial}{\partial \epsilon} \Big\vert_{\epsilon = 0} \frac{\Vert \nabla f \Vert^2}{\Vert \nabla \rrho \Vert^2} \geq 0.
 \end{equation*}
 We begin with some preliminary computations, where $\nu := 2 - \frac{1}{p}$, and all derivatives are evaluated at $\epsilon = 0$.
 \begin{align*}
     \frac{\partial}{\partial \epsilon} \rho^2 &= 1, \\
     \frac{\partial}{\partial \epsilon} \rho^4 &= 2 \rho^2, \\
     \frac{\partial}{\partial \epsilon} (r \rho)^{2\nu} &= \nu (r \rho)^{2\nu-2} (r^2 + \rho^2), \\
     \Vert \nabla f \Vert^2 &= \Vert d f \Vert^2 = \Vert d(r^2) \Vert^2 + \Vert d(\rho^2) \Vert^2, \\
     \Vert \nabla \rrho \Vert^2 &= \Vert d \rrho \Vert^2 = \frac{1}{(2p(r\rho)^\nu)^2} \left(\rho^4 \Vert d(r^2) \Vert^2 + r^4 \Vert d(\rho^2) \Vert^2\right).
 \end{align*}
 Moreover, we need the following estimates, where $a = 2 \min_{i = 1,...,n} a_i$, $A = 2 \max_{i = 1,...,n} a_i$.
 \begin{align*}
      \mina \rho^2 \leq \ &\Vert d(\rho^2) \Vert^2 \leq \maxa \rho^2, \\
      \mina \leq \frac{\partial}{\partial \epsilon} &\Vert d(\rho^2) \Vert^2 \leq \maxa.
 \end{align*}
 These are easily proven in Cartesian coordinates. 
 
 Applying the chain rule and substituting the previous computations and estimates, we get, after a lengthy but trivial computation, the estimate
 \begin{align*}
     \frac{\partial}{\partial \epsilon} \frac{\Vert \nabla f \Vert^2}{\Vert \nabla \rrho \Vert^2} 
     \geq \frac{(\nu a^2 - A^2)(r^6 +\rho^6) + \left((3\nu + 2) a^2 - 5 A^2\right)(\rho^4 r^2 + \rho^2 r^4)}{(2p(r\rho)^\nu)^2 \Vert \nabla \rrho \Vert^4},
 \end{align*}
 where the RHS is guaranteed to be positive if
 \begin{equation*}
     \frac{A^2}{a^2} \leq \min \{ \nu, \frac{3 \nu + 2}{5} \}.
 \end{equation*}
 Since $\nu \in (1,2)$ only enters in our choice of coordinates, we can freely choose it. In particular, as long as
 \begin{equation*}
     \frac{A^2}{a^2} \leq \frac{8}{5},
 \end{equation*}
 we can choose $\nu$ close enough to $2$ so that $\frac{\partial}{\partial \epsilon} \frac{\Vert \nabla f \Vert^2}{\Vert \nabla \rrho \Vert^2} \geq 0$.
 
 \textbf{Step 3.} (Final argument)\textbf{.} It simply remains to compare $g(\dot\sigma,\dot\sigma)$ to $g(\dot\gamma,\dot\gamma)$, term by term, according to \eqref{metricrrho}.
 By step 1, we have
 \begin{align*}
     h_\Theta(\dot\sigma,\dot\sigma) &\leq h_\Theta(\dot\gamma,\dot\gamma), \\
     h_\Phi(\dot\sigma,\dot\sigma) &\leq h_\Phi(\dot\gamma,\dot\gamma).
 \end{align*}
  Moreover, by computing $\Vert df \Vert^2$ in Cartesian coordinates, and using the fact that under the flow of $\nabla r$ and $- \nabla \rho$, the Cartesian coordinates are non-increasing (in absolute value), it is easy to check that
  \begin{equation*}
      \Vert df \Vert^2 (\sigma(s)) \leq \Vert df \Vert^2 (\gamma(s)).
  \end{equation*}
  By step 2, we have that
  \begin{equation*}
      \frac{\Vert \nabla f \Vert^2}{\Vert \nabla \rrho \Vert^2}(\sigma(s)) \leq \frac{\Vert \nabla f \Vert^2}{\Vert \nabla \rrho \Vert^2}(\gamma(s)),
  \end{equation*}
  and since the $\partial_\rrho$ component of $\dot\sigma$ is the same as that of $\dot\gamma$ (because $\rrho_{\gamma(s)}$ and $\rrho_{\sigma(s)}$ only differ by a constant),
  \begin{equation*}
      d\rrho(\dot\sigma) = d\rrho(\dot\gamma).
  \end{equation*}
  Finally, because $f_{\gamma(s)} = f_{\sigma(s)}$, we have
 \begin{equation*}
     df(\dot\sigma) = df(\dot\gamma).
 \end{equation*}
 Plugging all of the above into \eqref{metricrrho}, we conclude that
 \begin{equation*}
     g(\dot\sigma,\dot\sigma) \leq g(\dot\gamma,\dot\gamma),
 \end{equation*}
 as desired.
\end{proof}

The following lemma is an easy consequence of Lemma \ref{lemtip}, and from it we can derive Lemma \ref{lemnbhd}.

\begin{lem} \label{lemtip2}
  Suppose Condition \ref{condmain} is satisfied, and let $q \in \Imi(\pt)$. Then there exists a point $\tilde{q} \in \JpM(q)$ such that $\rrho_{\tilde{q}}=0$ and $f_{\tilde{q}}<0$. Equivalently, $\rho_{\tilde{q}}=0$ and $r_{\tilde{q}}>0$.
\end{lem}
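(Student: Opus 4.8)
The plan is to deduce from Lemma~\ref{lemtip} the existence of a f.d.\ timelike curve of $(\Mwop,g)$ that starts at $q$ and reaches the locus $\{\rrho=0\}$ at a point where $f<0$; the endpoint of such a curve is automatically of the required form, since a point with $\rrho=0$ has $\rho r=0$, and if moreover $f<0$ there then $r\neq 0$ (otherwise $f=\rho^2\geq 0$), so $\rho=0$ and $r=\sqrt{-f}>0$. Note first that $q\in\Imi(\pt)$ forces $f_q<0$: a curve witnessing $q\in\Imi(\pt)$ is f.d.\ timelike on $(\Mwop,g)$ until it reaches $\pt$, where $f=0$, and $f$ is a time function. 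Hence the lemma is trivial if $\rrho_q=0$ (take $\tilde q=q$), and we assume $\rrho_q>0$. Fix a f.d.\ timelike curve $\gamma\colon[0,1]\to\Mwp$ from $q$ to $\pt$ with $\gamma(s)\neq\pt$ for $s<1$, written in coordinates as in \eqref{gammaincoos}; then $f$ is strictly increasing along $\gamma$ with $f(1)=0$, so $f(s)<0$ for $s<1$, while $\rrho(s)\to 0$ as $s\to 1$. If $\rrho(s^*)=0$ for some $s^*\in(0,1)$, then $\gamma|_{[0,s^*]}$ lies in $\Mwop$ and joins $q$ to $\tilde q:=\gamma(s^*)$, which has $\rrho=0$ and $f=f(s^*)<0$, and we are done. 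So it remains to treat the case $\rrho(s)>0$ for all $s\in[0,1)$.

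In this case the naive attempt — apply Lemma~\ref{lemtip} and send $\varepsilon\to 0$ — fails, because the shifted curve $\sigma_\varepsilon$ of \eqref{defsigma} starts at $\sigma_\varepsilon(0)=(f_q,\rrho_q-\varepsilon,\Theta_q,\Phi_q)$, a point that lies on the spacelike level set $\{f=f_q\}$ but is distinct from $q$, so no causal curve of $(\Mwop,g)$ joins $q$ to $\sigma_\varepsilon(0)$ and one cannot concatenate. The remedy is to replace the constant inward shift $\varepsilon$ by a \emph{variable} shift $\delta(s)$ that vanishes near $s=0$: pick $s_1\in(0,1)$ with $\rrho>0$ on $[0,s_1]$, and a smooth nondecreasing $\delta\colon[0,1]\to[0,\delta_0]$ with $\delta$ vanishing near $s=0$, increasing to a small value $\delta_0>0$ on (most of) $[0,s_1]$, and $\delta\equiv\delta_0$ on $[s_1,1]$. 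Define $c(s):=(f(s),\rrho(s)-\delta(s),\Theta(s),\Phi(s))$ on $[0,s_{\delta_0}]$, where $s_{\delta_0}:=\min\{s:\rrho(s)=\delta_0\}$. If $\delta_0<\min_{[0,s_1]}\rrho$ then $s_1<s_{\delta_0}<1$, the curve $c$ is well defined, $c(0)=q$, and $c(s_{\delta_0})=(f(s_{\delta_0}),0,\Theta(s_{\delta_0}),\Phi(s_{\delta_0}))$ has $\rrho=0$ and $f=f(s_{\delta_0})<0$.

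It remains to verify that $c$ is f.d.\ timelike, which is the crux. On the part of $[0,s_1]$ where $\delta$ vanishes $c$ coincides with $\gamma$, and on $[s_1,s_{\delta_0}]$ it coincides with the constant-shift curve $\sigma_{\delta_0}$ of Lemma~\ref{lemtip} restricted to part of its domain; both pieces are f.d.\ timelike. On the ramp one reruns the estimates from the proof of Lemma~\ref{lemtip} with $\delta$ in place of $\varepsilon$: the angular estimate (Step~1), the monotonicity of $\Vert\nabla f\Vert^2/\Vert\nabla\rrho\Vert^2$ (Step~2), and the bound on $\Vert df\Vert^2$ (Step~3) only use that the position $c(s)$ has every Cartesian coordinate, as well as $r^2$ and $\rho^2$, decreased, so they carry over verbatim; the one new contribution is to the $d\rrho^2$ term of \eqref{metricrrho}, yielding $g(\dot c,\dot c)\leq g(\dot\gamma,\dot\gamma)+\frac{\Vert\nabla f\Vert^2}{\Vert\nabla\rrho\Vert^2}(c)\,\dot\delta\,(\dot\delta-2\dot\rrho)$. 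Since the ramp is a compact subset of $\{\rrho>0\}$ along $\gamma$, the coefficient $\Vert\nabla f\Vert^2/\Vert\nabla\rrho\Vert^2$ and $|\dot\rrho|$ are bounded there while $-g(\dot\gamma,\dot\gamma)$ is bounded below by a positive constant, so a sufficiently small $\dot\delta$ keeps $g(\dot c,\dot c)<0$ throughout the ramp (shrinking $\delta_0$ if necessary to secure $\delta_0<\min_{[0,s_1]}\rrho$). Hence $c$ is f.d.\ timelike, it avoids $\pt$ because $f<0$ along it, and $\tilde q:=c(s_{\delta_0})\in\IpM(q)\subseteq\JpM(q)$ has the desired properties. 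The main obstacle is exactly this point: recognizing that the shift of Lemma~\ref{lemtip} has to be ramped up from zero so the curve still emanates from $q$, and carrying the Lemma~\ref{lemtip} estimates through for a non-constant shift; the rest is bookkeeping.
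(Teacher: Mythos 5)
Your proof is correct, but it takes a more laborious route than the paper. You correctly identify the obstruction that the constant-shift curve $\sigma_\varepsilon$ of Lemma~\ref{lemtip} does not start at $q$ (it starts at a distinct point on the same $f$-level set, hence not in $\JpM(q)$), and you repair this by ramping the shift from $0$ to $\delta_0$. That requires you to redo the Lemma~\ref{lemtip} estimates with a time-dependent shift, track the extra $\dot\delta(\dot\delta-2\dot\rrho)$ contribution to the $d\rrho^2$ term, and appeal to a uniform lower bound on $-g(\dot\gamma,\dot\gamma)$ along the ramp (which requires an implicit smoothing or reparametrization of $\gamma$, a point you gloss over). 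The paper sidesteps all of this with a cheaper observation: apply Lemma~\ref{lemtip} with a \emph{constant} shift, but starting from the intermediate point $\gamma(1/2)$ instead of $\gamma(0)=q$. Since $\gamma(1/2)\in\IpM(q)$ and $\IpM(q)$ is \emph{open} in the ordinary spacetime $\Mwop$, the shifted start point $\hat q=(f(1/2),\rrho(1/2)-\varepsilon,\Theta(1/2),\Phi(1/2))$ lies in $\IpM(q)$ for small enough $\varepsilon$; one then concatenates a timelike curve from $q$ to $\hat q$ with the constant-shift $\sigma$ provided directly by Lemma~\ref{lemtip}. Your argument buys self-containedness (no use of openness of $\IpM$ to bridge the gap), but at the cost of re-proving a strengthened Lemma~\ref{lemtip} and of some extra regularity bookkeeping; the paper's trick gets the same conclusion essentially for free.
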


\begin{proof}
 The equivalence of the two statements follows simply by definition \eqref{frrho}. Now for the proof of existence: If $\rrho_q = 0$, choose $\tilde{q} = q$, and we are done because $q \in \Imi(\pt)$ implies $f_q < f_{\pt} = 0$. Otherwise, choose a f.d.\ timelike curve $\gamma \colon [0,1] \to \Mwp$ from $\gamma(0) = q$ to $\gamma(1) = \pt$, and write it in components as in \eqref{gammaincoos}. If $\rrho(1/2)=0$, choose $\tilde{q} = \gamma(1/2)$, noting that $\gamma(1/2) \in \Imi(\pt)$ and therefore $f_{\gamma(1/2)}<f_{\pt} = 0$. If $\rrho(1/2)\neq 0$, then since $\gamma(1/2) \in \IpM(q)$, we can choose $0 < \varepsilon < \rrho(1/2)$ small enough so that $\hat{q} := (f(1/2),\rrho(1/2)-\varepsilon,\Theta(1/2),\Phi(1/2)) \in \IpM(q)$. Then, by Lemma \ref{lemtip}, there exists a f.d.\ timelike curve $\sigma$ from $\hat{q}$ until some point $\tilde{q} := \sigma(s_\epsilon)$ such that $\rrho_{\tilde{q}}=0$. Moreover, $f_{\tilde{q}}=f(s_\epsilon)<f(1)=0$.
\end{proof}

\begin{proof}[Proof of Lemma \ref{lemnbhd}]
 Let $q \in \Imi(\pt)$. Then we can choose $\tilde{q} \in \JpM(q)$ as in Lemma \ref{lemtip2}. We claim that $\tilde{q} \in \Imi(\pt)$, not only with respect to our metric $g$, but even with respect to the metric $\giso$ (see Lemma \ref{lemgiso}). To prove this claim, simply note that we can reach $\pt$ from $\tilde{q}$ by following the integral curve of $\nabla f$ through $\tilde{q}$ (which has $\rho=0$ initially, hence $\rho=0$ on the whole integral curve, while $r$ must decrease, thus we reach $\pt$). By Lemma \ref{lemapp}, $\tilde{q} \in \Imi(\pt,\giso)$ implies that there exists a neighborhood $U$ of $\pt$ such that $U \setminus \{ \pt \} \subseteq \IpM(\tilde{q},\giso)$. By Lemma \ref{lemgiso}, $\IpM(\tilde{q},\giso) \subseteq \IpM(\tilde{q})$, and since $\tilde{q} \in \JpM(q)$, it follows that $U \setminus \{ \pt \} \subseteq \IpM(q)$, as desired.
\end{proof}

\subsection{Limit curves, push-up and proof of Theorem \ref{mainthm}} \label{secpushup}

Having established the crucial Lemma \ref{lemnbhd}, the rest of the proof of Theorem \ref{mainthm} does not require any computations in coordinates. Yet it follows the same philosophy of showing that the causal relation on $\Mwp$ has some of the same (good) properties that it would have on a non-degenerate spacetime.

The next lemma is a sort of limit curve theorem, but can also be interpreted as telling us that $(\Mwp,h,f,\zeta)$ is causally simple (see \cite[p.\ 65]{BEE} for causal simplicity of non-degenerate spacetimes).

\begin{lem}[$\overline{\Ip} \subseteq \Jp$] \label{limcurv}
 Suppose Condition \ref{condopen} is satisfied. Let $(p_i)_i$, $(q_i)_i$ be sequences of points in $\Mwp$ such that $q_i \in \Ip(p_i)$. If $p_i \to p$ and $q_i \to q$, then $q \in \Jp(p)$.
\end{lem}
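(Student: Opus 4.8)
The plan is to run a limit-curve argument in the genuine spacetime $(\Mwop,g)$ and to glue the resulting pieces across the critical point $\pt$, the point being that ``no causal information is lost'' at $\pt$. I will use throughout that $f$ is nondecreasing along every f.d.\ causal curve of $\Mwp$ (strictly increasing away from $\pt$, since it is a time function on $(\Mwop,g)$, and strictly across $\pt$ since $f(\pt)=0$ lies between the values on the two sides) and that $f$ is bounded on the compact manifold $\Mwp$. The case $p=q$ is trivial by reflexivity of $\Jp$. I treat $p,q\in\Mwp\setminus\partial\Mwp$; the case where $p$ or $q$ lies on $\partial\Mwp$ is handled by the same arguments together with the collar structure of the cobordism near $\partial\Mwp$ (by monotonicity of $f$, a f.d.\ causal curve issuing from an interior point cannot return to $\partial\Mwp$).

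The heart of the matter is the following sub-claim, the assertion that $\Jp$ is ``closed at $\pt$'': \emph{if $m_k\to\pt$, $w_k\in\Ip(m_k)$ and $w_k\to w\in\Mwp\setminus\partial\Mwp$, then $w\in\Jp(\pt)$} (together with its time-dual, obtained by reversing the time orientation). Granting the sub-claim, Lemma \ref{limcurv} follows quickly. If $p=\pt$ (resp.\ $q=\pt$) we are exactly in the situation of the sub-claim (resp.\ of its time-dual). If $p,q\neq\pt$, then by Condition \ref{condopen} and its time-reverse, applied as in the proof of Lemma \ref{avoidp}, we may take each $\gamma_i$ to be a f.d.\ timelike curve of $(\Mwop,g)$ from $p_i\to p\in\Mwop$ to $q_i\to q$, parametrised by $h$-arc length. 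Minguzzi's limit curve theorem \cite{Min} yields, along a subsequence, a f.d.\ causal limit curve $\gamma$ from $p$: either $\gamma$ reaches $q$, in which case $q\in\JpM(p)\subseteq\Jp(p)$ and we are done; or $\gamma$ is future inextendible in $\Mwop$, and then Lemma \ref{lempbdry} forces $\gamma(s)\to\pt$ (it cannot run into $\partial\Mwp$, by monotonicity of $f$ and the fact that $p,q$ are interior), so that $\gamma$, completed by the endpoint $\pt$, gives $\pt\in\Jp(p)$. In the latter case a diagonal extraction along $\gamma_i\to\gamma$ produces a subsequence $i_k$ and parameters $s_k$ with $m_k:=\gamma_{i_k}(s_k)\to\pt$ while the tail $\gamma_{i_k}|_{[s_k,\cdot]}$ still connects $m_k$ to $q_{i_k}\to q$; the sub-claim then gives $q\in\Jp(\pt)$, and concatenating at $\pt$ yields $q\in\Jp(p)$.

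To prove the sub-claim, pick f.d.\ timelike curves $\beta_k$ from $m_k$ to $w_k$. If some $\beta_k$ meets $\pt$ then $w_k\in\Ip(\pt)$, and Condition \ref{condopen} (time-reversed) lets us replace $m_k$ by a point $m_k'\to\pt$ with $w_k\in\IpM(m_k')$ realised by a timelike curve missing $\pt$; so we may assume $m_k\neq\pt$ and $\beta_k\subseteq\Mwop$. Now \emph{reverse} each $\beta_k$ and apply the limit curve theorem to the resulting past-directed causal curves of $(\Mwop,g)$, \emph{based at $w_k\to w\in\Mwop$} (this is the crucial move: the extraction is never based at $\pt$). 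Along a subsequence we obtain a past-directed causal limit curve $\check\gamma$ from $w$. If its $h$-length is finite it extends continuously to its past endpoint, which an arc-length estimate identifies as $\pt$ (the $\beta_k$ end at $m_k\to\pt$). If its $h$-length is infinite, $\check\gamma$ is past inextendible in $\Mwop$, so by the time-dual of Lemma \ref{lempbdry} it runs to $\pt$ or to $\partial\Mwp$; the latter is excluded because each $\beta_k$ stays in the set $\{f(m_k)\le f\le f(w_k)\}$, which for $k$ large lies in a fixed compact subset of $\Mwp\setminus\partial\Mwp$ (using $f(m_k)\to f(\pt)$ and that $w$ is interior). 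Either way $\check\gamma$ reaches $\pt$, and reversing it exhibits $w\in\Jp(\pt)$.

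I expect the genuine obstacle to be precisely this interplay with the degenerate point: since $(\Mwop,g)$ is only a spacetime away from $\pt$, one cannot apply the limit curve theorem ``at $\pt$'', and a careless limit of the $\gamma_i$ may simply terminate at $\pt$ and lose the connection to $q$. The remedy is to always base the limit-curve extractions at the non-$\pt$ endpoint and to combine Minguzzi's theorem with the monotonicity of the time function $f$ and the non-imprisonment Lemma \ref{lempbdry}, so that inextendible limit curves are forced to run into $\pt$ rather than escape to $\partial\Mwp$, after which the pieces concatenate through $\pt$. The rest — careful bookkeeping of subsequences and diagonal extractions, and the verification that the concatenated curves are f.d.\ causal in the sense defined for $\Mwp$ — is routine.
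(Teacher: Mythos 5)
Your proof is correct and uses the same key ingredients as the paper: Minguzzi's limit curve theorem applied in the genuine spacetime $(\Mwop,g)$, the non-imprisonment Lemma \ref{lempbdry} to force inextendible limit curves to run into $\pt$, and concatenation of the resulting pieces through $\pt$. The only difference is organizational --- you extract the future-directed and past-directed limit curves in two separate applications of the limit curve theorem (with a diagonal extraction in between), isolating the behaviour near $\pt$ into a sub-claim that also handles the $p=\pt$ and $q=\pt$ cases cleanly, whereas the paper invokes the two-piece case of Minguzzi's theorem once to obtain both the future-inextendible curve from $p$ and the past-inextendible curve from $q$ simultaneously and treats the $\pt$-endpoint cases as an analogous third case.
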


\begin{proof}
 We first show the case $p_i,q_i,p,q \neq \pt$. Then, by Lemma \ref{avoidp}, there exists a sequence of f.d.\ timelike curves $\gamma_i \colon [a_i,b_i] \to  \Mwop$ such that $\gamma_i(a_i) = p_i$ and $\gamma_i(b_i) = q_i$. The idea is quite simple: we claim that $(\gamma_i)$, up to a subsequence, converges to a causal curve $\gamma\colon [a,b] \to \Mwp$. We show this by applying the usual limit curve theorem \cite[Thm.\ 3.1]{Min} on the spacetime $\Mwop$. It is necessary to distinguish between the case when the limit curve is also in $\Mwop$, and the case when the limit curve crosses over the singular point $\pt$ (then, technically speaking, there are two limit curves in $\Mwop$, which can be joined in $\Mwp$).
 \begin{itemize}
  \item \textbf{Case 1:} The sequence $\gamma_i$ converges uniformly to a causal curve $\gamma\colon [a,b] \to \Mwop$, or to a single point. Either way, $q \in \JpM(p) \subseteq \Jp(p)$, and we are done.
  \item \textbf{Case 2:} There exist reparametrizations $\gamma^p_i \colon [0,b^p_i) \to \Mwop$ of $\gamma_i$ and a future endless (in $\Mwop$) causal curve $\gamma^p \colon [0,\infty) \to M$ with $\gamma(0)=p$ such that $\gamma^p_i \to \gamma^p$ uniformly on compact subsets. Analogously, there exist reparametrizations $\gamma^q_i \colon (-b^q_i,0] \to \Mwop$ of $\gamma_i$ and a past endless causal curve $\gamma^q \colon (-\infty,0] \to M$ with $\gamma^q(0)=y$ such that $\gamma^q_i \to \gamma^q$ uniformly on compact subsets.
 \end{itemize}
 In case 2, we claim that $\lim_{t \to \infty} \gamma^p(t) = \lim_{s \to -\infty} \gamma^q(s) = \pt$. This is a direct consequence of Lemma \ref{lempbdry} and the fact that $f$ is bounded away from $0,1$ on $\gamma^p$ and $\gamma^q$, hence $\gamma^p, \gamma^q$ cannot run into the boundary $\partial \Mwp$. But if $\lim_{t \to \infty} \gamma^p(t) = \lim_{s \to -\infty} \gamma^q(s) = \pt$, then (after suitable reparametrization) we can extend $\gamma^p, \gamma^q$ to $\pt$ and concatenate them, forming a causal curve in $\Mwp$ that joins $p$ with $q$, as desired.
 
 In case that some of $p_i,q_i,p,q$ equal $\pt$, we can proceed with an analogous proof, but we have to add a third case, where $\pt$ is an endpoint of the limit curve.
\end{proof}

The next lemma is well-known for non-degenerate spacetimes.

\begin{lem}[Push-up] \label{pushup}
 Suppose Condition \ref{condopen} is satisfied. If $q \in \Jp(p)$, then $\Ip(q) \subseteq \Ip(p)$.
\end{lem}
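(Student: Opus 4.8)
The plan is to run the classical Lorentzian push‑up argument on $\Mwp$, with Condition~\ref{condopen} taking over the role that convex normal neighbourhoods play at the critical point $\pt$. Fix a f.d.\ causal curve $\gamma$ from $p$ to $q$ and, using $z\in\Ip(q)$, a f.d.\ timelike curve $\eta$ from $q$ to $z$; then $\gamma*\eta$ is a f.d.\ causal curve from $p$ to $z$ containing the timelike segment $\eta$. The ingredients are: Lemma~\ref{avoidp} (chronological relations in $\Mwp$ and in $(M,g)$ agree away from $\pt$); the classical push‑up and openness of $I^\pm$ in the spacetime $(M,g)$; transitivity of $\Ip$ on $\Mwp$, which is immediate from the definition, since a concatenation of f.d.\ timelike curves is again f.d.\ timelike even when the junction point is $\pt$; and the fact that $f$ is a time function on $(M,g)$ extending continuously to $\pt$ with $f(\pt)=0$. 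This last point constrains how $\gamma$ and $\eta$ can meet $\pt$: the preimage of $\pt$ under a f.d.\ causal curve is connected ("at most one crossing"), and if $\eta$ meets $\pt$ then $f(q)<0$, which forces $\gamma$ to avoid $\pt$ altogether.

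First I would clear the two easy cases. If neither $\gamma$ nor $\eta$ meets $\pt$, then $p,q,z\in\Mwop$ and classical push‑up applied to $\gamma*\eta$ gives $z\in\IpM(p)\subseteq\Ip(p)$. If $\eta$ meets $\pt$, then by the remark above $\gamma$ stays in $\Mwop$, so $q\in\JpM(p)$; the part of $\eta$ running into $\pt$ shows $q\in\Imi(\pt)$, so by Condition~\ref{condopen} a punctured neighbourhood of $\pt$ lies in $\IpM(q)$, whence the point of $\eta$ just beyond $\pt$ lies in $\IpM(q)$, and the remainder of $\eta$ puts $z$ in its chronological future; thus $z\in\IpM(q)$ by transitivity in $(M,g)$, and classical push‑up gives $z\in\IpM(p)\subseteq\Ip(p)$.

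It remains to treat the case where $\gamma$ meets $\pt$ (so $\eta$ does not). Suppose first $q=\pt$. Then $\pt\in\Jp(p)$, and applying the standard inclusion $\JpM(p)\subseteq\overline{\IpM(p)}$ to the part of $\gamma$ before $\pt$ gives $\pt\in\overline{\IpM(p)}$, while $z\in\Ip(\pt)$ means $\pt\in\Imi(z)$, so Condition~\ref{condopen} puts a punctured neighbourhood of $\pt$ inside $\ImM(z)$; the two open sets $\IpM(p)$ and $\ImM(z)$ then share a point $w$, giving $p\ll w\ll z$ in $(M,g)$, hence $z\in\IpM(p)\subseteq\Ip(p)$. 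The general subcase ($q\neq\pt$, hence $z\neq\pt$, and $\gamma$ crosses $\pt$ at an interior parameter $c$) reduces to this one as soon as one knows $z\in\Ip(\pt)$: indeed $\pt\in\Jp(p)$ because $\pt$ lies on $\gamma$, and then $\pt\in\Jp(p)$ together with $z\in\Ip(\pt)$ is precisely the situation just handled.

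The crux is therefore to prove: if $q\in\Jp(\pt)$ (via $\gamma|_{[c,1]}$) and $z\in\Ip(q)$ (via $\eta$), then $z\in\Ip(\pt)$. This is the step I expect to be the main obstacle, because the concatenation $\gamma|_{[c,1]}*\eta$ is only causal near $\pt$, so it does not directly exhibit a timelike curve from $\pt$ to $z$. Applying classical push‑up in $(M,g)$ to $\gamma|_{[c,1]}*\eta$ first yields $z\in\IpM(\gamma(s))$, hence timelike curves in $\Mwop$ from $\gamma(s)$ to $z$, for every $s\in(c,1]$, with $\gamma(s)\to\pt$; in particular $q$ lies in the open set $\Imi(z)$ and $\pt\in\overline{\Imi(z)}$. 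The delicate point is to upgrade this to $\pt\in\Imi(z)$ itself — equivalently, a local version of $\Jp(\pt)\subseteq\overline{\Ip(\pt)}$, i.e.\ that causal curves issuing from $\pt$ are limits of timelike ones. I would establish it by a limit‑curve analysis in the style of the proof of Lemma~\ref{limcurv}, using the no‑imprisonment Lemma~\ref{lempbdry} to pin down the behaviour of the limiting curve at $\pt$; once it is known that the open set $\Imi(z)$ meets $\Ip(\pt)$, transitivity of $\Ip$ gives $z\in\Ip(\pt)$. The entire argument has an obvious time‑dual, which handles $\Ip(\pt)$; together these complete the proof.
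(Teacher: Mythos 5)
Your reduction into cases closely parallels the paper's own: the paper's Case~1 is your $q=\pt$ subcase, its Case~2 is your ``crux'' ($q\in\Jp(\pt)$, $z\in\Ip(q)\Rightarrow z\in\Ip(\pt)$), and its Case~3 is your general reduction. Your $q=\pt$ argument is correct and essentially the same as the paper's Case~1, merely phrased through $\pt\in\overline{\IpM(p)}$ rather than by picking a point of the causal curve $\sigma$ directly inside the neighbourhood $U$ supplied by Condition~\ref{condopen}. Your handling of the case where $\eta$ meets $\pt$, using $f$ to force $\gamma\subset\Mwop$ and then Condition~\ref{condopen}, is also sound.

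The gap is in the crux step, exactly where you flag uncertainty. You propose to show $\Jp(\pt)\subseteq\overline{\Ip(\pt)}$ ``by a limit-curve analysis in the style of the proof of Lemma~\ref{limcurv}, using the no-imprisonment Lemma~\ref{lempbdry}''. But the limit-curve machinery goes the wrong way for this: it says that a sequence of causal curves has a subsequence converging to a \emph{causal} curve. Applied to the timelike curves from $\gamma(s)$ to $z$ with $\gamma(s)\to\pt$, it yields only a causal curve from $\pt$ to $z$, i.e.\ $z\in\Jp(\pt)$, which you already know (it is immediate from transitivity of $\Jp$). Neither Lemma~\ref{limcurv} nor Lemma~\ref{lempbdry} upgrades a causal limit to a timelike curve, and the usual local argument behind $\Jp\subseteq\overline{\Ip}$ — pushing up inside a convex normal neighbourhood — is unavailable at the degenerate point $\pt$. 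That is precisely the difficulty, and it is not addressed.

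The paper closes this gap by a direct construction (following \cite[Prop.~2.1]{AGH}), not by limit curves: with $\sigma$ the causal curve from $\pt$ to $q$ and $y_n:=\sigma(1/n)$, one inductively chooses $z_l\in\ImM(z_{l-1})\cap\IpM(y_l)\cap B^h_{1/l}(y_l)$, using only openness of $\IpmM$ and classical push-up in $\Mwop$. The concatenation of the future-directed timelike segments from $z_{l+1}$ to $z_l$ is a single timelike curve, and the ball constraint $z_l\in B^h_{1/l}(y_l)$ forces $z_l\to\pt$, so this curve has past endpoint $\pt$. This produces $z\in\Ip(\pt)$ outright. Notably, Condition~\ref{condopen} is not needed for this step at all (the paper invokes it only in its Case~1). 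You would need to replace your limit-curve sketch with a constructive argument of this type.
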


\begin{proof} 
 If $p = q$, the result is trivial, so assume $p \neq q$, and let $\sigma \colon [0,1] \to \Mwp$ be a causal curve from $p$ to $q$.
 
 \textbf{Case 1: $q = \pt$.} Let $\tilde{q} \in \IpM(\pt)$ (see Figure \ref{figcase1}). Then, by Condition \ref{condopen}, there exists a neighborhood $U$ of $\pt$ such that $U \setminus \{ \pt \} \subseteq \ImM(\tilde{q})$. Because $\sigma(1) = \pt$, there must exist some $0 < s_0 < 1$ such that $\sigma(s_0) \in U \setminus \{ \pt \}$. But then $\tilde{q} \in \IpM(\sigma(s_0))$, and since also $\sigma(s_0) \in \JpM(p)$, we conclude by the standard push-up lemma in $\Mwop$ that $\tilde{q} \in \IpM(p) \subseteq \Ip(p)$. Since $\tilde{q}$ was arbitrary, we are finished with this case.
 
 \textbf{Case 2: $p = \pt$.} The argument is similar to the one in \cite[Prop.~2.1]{AGH}. Let $\tilde{q} \in \Ip(q) = \IpM(q)$ (see Figure \ref{figcase2}). We construct a timelike curve $\gamma$ from $\pt$ to $\tilde{q}$. Let $y_n := \sigma(1/n)$, and choose a point $z_1 \in \IpM (y_1) \cap \ImM(\tilde{q})$. By openness of $\ImM(z_1)$, and since, by the usual push-up lemma in $M$, $y_2 \in \JmM(y_1) \subset \ImM(z_1)$, we may choose $z_2 \in \ImM(z_1) \cap \IpM(y_2) \cap B^h_{1/2}(y_2)$. Here $B^h_{1/2}(y_2)$ denotes the ball of radius $1/2$ around $y_2$, measured with respect to the Riemannian metric $h$. Iterating this procedure, we obtain a sequence $(z_l)_l$ such that $z_l \in \ImM(z_{l-1}) \cap \IpM(y_l) \cap B^h_{1/l}(y_l)$. Then we construct $\gamma$ by joining all the timelike segments going from $z_l$ to $z_{l+1}$. Since, by contruction, $\lim_{l \to \infty} z_l = \lim_{l \to \infty} y_l = \pt$, the timelike curve $\gamma$ connects $\pt$ and $\tilde{q}$.
 
 \textbf{Case 3: $p,q \neq \pt$.} If $\sigma$ lies entirely in $\Mwop$, the result follows from the standard theory. Therefore, we assume w.l.o.g.\ that $\sigma(\frac{1}{2}) = \pt$. Then, in particular, $q \in \Jp(\pt)$, so by case 2, we have that $\Ip(q) \subseteq \Ip(\pt)$. But since also $\pt \in \Jp(p)$, by case 1 it follows that $\Ip(\pt) \subseteq \Ip(p)$, and we are done.
\end{proof}

\begin{figure}
\centering
  \begin{subfigure}[b]{0.49\textwidth}
    \centering
    \begin{tikzpicture}[scale=1.5]
     \fill[blue!25] (0,0) ellipse (1.3 and 0.8);
     \draw[thick,red] (-2,-2) -- (0,0);
     \filldraw[black] (0,2) circle (1.3pt) node[anchor=south] {$\tilde{q}$};
     \filldraw[black] (0,0) circle (1.3pt) node[anchor=west] {$\sigma(1) = \pt$};
     \filldraw[black] (-2,-2) circle (1.3pt) node[anchor=west] {$\sigma(0) = p$};
     \filldraw[black] (-0.5,-0.5) circle (1.3pt) node[anchor=west] {$\sigma(s_0)$};
     \draw  plot[smooth, tension=.7] coordinates {(0,0) (0.2,0.5) (-0.1,1.4) (0,2)};
     \draw[-{Stealth[scale=1.1]}] (0.1995,0.49) -- (0.2,0.5);
     \draw  plot[smooth, tension=.7] coordinates {(-0.5,-0.5) (-0.1,0.5) (-0.2,1.5) (0,2)};
     \draw[-{Stealth[scale=1.1]}] (-0.102,0.49) -- (-0.1,0.5);
     \node at (-0.7,0.3) {$U$};
    \end{tikzpicture}
  \caption{Case 1.} \label{figcase1}
  \end{subfigure} \hfill
  \begin{subfigure}[b]{0.49\textwidth}
   \centering
   \begin{tikzpicture}[scale=1.5]
    \draw[thick,red] (-2,-2) -- (0,0);
    \filldraw[black] (0,2) circle (1.3pt) node[anchor=south] {$\tilde{q}$};
    \filldraw[black] (0,1) circle (1.3pt) node[anchor=south east] {$z_1$};
    \filldraw[black] (0,0) circle (1.3pt) node[anchor=north west] {$y_1 = q$};
    \filldraw[black] (-1,-0.5) circle (1.3pt) node[anchor=south east] {$z_2$};
    \filldraw[black] (-1,-1) circle (1.3pt) node[anchor=north west] {$y_2$};
    \filldraw[black] (-1.5,-1.25) circle (1.3pt) node[anchor=south east] {$z_3$};
    \filldraw[black] (-1.5,-1.5) circle (1.3pt) node[anchor=north west] {$y_3$};
    \filldraw[black] (-2,-2) circle (1.3pt) node[anchor=north west] {$\pt$};
    \draw  plot[smooth, tension=1] coordinates {(0,2) (0,1) (-1,-0.5) (-1.5,-1.25) (-2,-2)};
    \node at (-0.3,-0.7) {\textcolor{red}{$\sigma$}};
    \node at (-0.7,0.3) {$\gamma$};
   \end{tikzpicture}
   \caption{Case 2.}  \label{figcase2}
  \end{subfigure}
  \caption{An illustration of the proof of Lemma \ref{pushup}. The red line represents $\sigma$, and the black curves represent future-directed causal curves.} 
\end{figure}
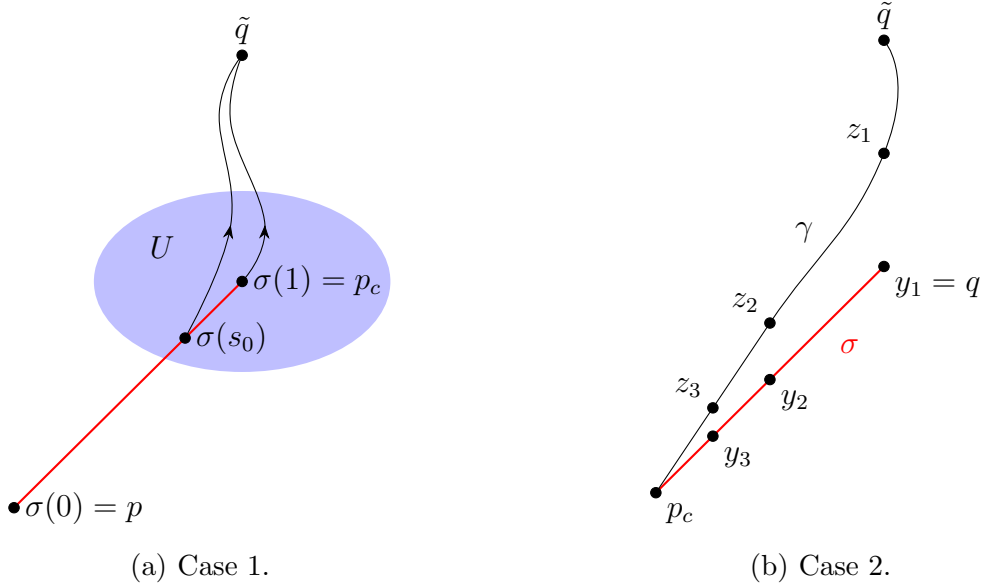

Lemma \ref{lemfinal} below, together with Lemma \ref{lemnbhd}, completes the proof of Theorem \ref{mainthm}.

\begin{lem} \label{lemfinal}
 If Condition \ref{condopen} is satisfied, then the Morse spacetime $(\Mwop,g)$ is causally continuous.
\end{lem}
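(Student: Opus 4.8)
The plan is to appeal to the standard characterization of causal continuity recalled in Appendix~\ref{appcc}: a distinguishing spacetime is causally continuous precisely when it is \emph{reflecting}, i.e.\ when, for all $p,q \in \Mwop$, the inclusion $\IpM(q) \subseteq \IpM(p)$ holds if and only if $\ImM(p) \subseteq \ImM(q)$. That $(\Mwop,g)$ is distinguishing is immediate, since it is strongly causal --- $f$ being a time function, as was already used in the proof of Lemma~\ref{lempbdry}. So the entire content of the lemma is the reflecting property, and by the time-reversal symmetry of the setup it is enough to prove the single implication
\[
 \ImM(p) \subseteq \ImM(q) \ \Longrightarrow\ \IpM(q) \subseteq \IpM(p),
\]
the reverse implication being its time-reverse after interchanging $p$ and $q$.

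To establish this implication I would run the usual Minkowski-type argument, but carry out the limit-curve and push-up steps in the \emph{extended} causal structure on $\Mwp$, where Lemmas~\ref{limcurv} and~\ref{pushup} are available, and use Lemma~\ref{avoidp} to pass between the sets $\IpM,\ImM$ in $\Mwop$ and the sets $\Ip,\Imi$ in $\Mwp$. Concretely, assume $\ImM(p) \subseteq \ImM(q)$ and fix $x \in \IpM(q)$; the goal is $x \in \IpM(p)$. First choose an interior point $x'$ of a future-directed timelike curve in $\Mwop$ from $q$ to $x$, so that $x' \in \IpM(q)$ and $x \in \IpM(x')$. Next, since $(\Mwop,g)$ is a spacetime, there are points $q_i \in \ImM(p)$ on a past-directed timelike curve issuing from $p$ with $q_i \to p$; by hypothesis $q_i \in \ImM(q)$ as well, and then transitivity of $\ll$ applied to the chain $q_i \ll q \ll x'$ yields $q_i \in \ImM(x') \subseteq \Imi(x')$ for every $i$. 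Applying Lemma~\ref{limcurv} to the constant sequence $x'$ and the sequence $q_i \to p$ gives $x' \in \Jp(p)$; Lemma~\ref{pushup} then gives $\Ip(x') \subseteq \Ip(p)$, whence $x \in \Ip(x') \subseteq \Ip(p)$ (using $x \in \IpM(x') \subseteq \Ip(x')$). Since $x \in \Mwop$, Lemma~\ref{avoidp} finally upgrades this to $x \in \Ip(p) \cap \Mwop = \IpM(p)$, as desired.

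I do not expect a genuine obstacle at this stage: essentially all of the substantive work has been done in Lemmas~\ref{avoidp}, \ref{limcurv} and~\ref{pushup}, so that the argument above is really the statement that ``a causally simple spacetime with one point removed is still causally continuous'', adapted to the degenerate setting. The one step that needs a moment's care is that the reflecting condition is a statement about \emph{chronological} sets, whereas Lemma~\ref{limcurv} only delivers the \emph{causal} relation $x' \in \Jp(p)$; the auxiliary point $x'$ together with the push-up Lemma~\ref{pushup} is exactly the mechanism that converts this causal conclusion back into the chronological one, $x \in \Ip(p)$, that the definition of reflecting requires. Once this is in place, combining the present lemma with Lemma~\ref{lemnbhd} completes the proof of Theorem~\ref{mainthm}.
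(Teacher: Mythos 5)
Your proposal is correct and takes essentially the same route as the paper: distinguishing comes from $f$ being a time function, and reflectivity is obtained by combining Lemma~\ref{avoidp}, the limit-curve Lemma~\ref{limcurv}, and the push-up Lemma~\ref{pushup}, exactly as the paper does. The only cosmetic difference is that the paper applies Lemma~\ref{limcurv} directly to conclude $p \in \Jm(q)$ and then invokes push-up at the set level to get $\Ip(q) \subseteq \Ip(p)$, so the auxiliary point $x'$ you introduce to pass from a causal to a chronological statement is harmless but not needed.
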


\begin{proof}
  By Definition \ref{defcc} (in Appendix \ref{appcc}), $(\Mwop,g)$ is causally continuous if it is distinguishing and reflecting. Because $f$ is a time function, $(\Mwop,g)$ must be distinguishing \cite[Sec.\ 2]{BDGSS}. Thus we only need to prove reflectivity. Let $p,q \in \Mwop$ be such that $\ImM(p) \subseteq \ImM(q)$ (the future case is analogous). We need to prove that $\IpM(q) \subseteq \IpM(p)$. By the time-reverse of Lemma \ref{avoidp}, $\Imi(p) \subseteq \Imi(q)$, and then, since $p \in \overline{\Imi(p)} \subseteq \overline{\Imi(q)}$, Lemma \ref{limcurv} tells us that $p \in \Jm(q)$. But then, by Lemma \ref{pushup}, $\Ip(q) \subseteq \Ip(p)$, which again by Lemma \ref{avoidp} implies $\IpM(q) \subseteq \IpM(p)$.
\end{proof}

\begin{rem}
 One may even say that the Morse geometry $(\Mwp,h,f,\zeta)$ is globally hyperbolic. Firstly, it follows from Lemma \ref{lempbdry} that $f$ is a Cacuhy time function, in the sense that any causal curve that is inextendible in $\Mwp$, must start at one boundary component and end at the other, crossing each level set exactly once. Secondly, by compactness of $\Mwp$, it is easy to see that the causal diamonds $\Jp(p) \cap \Jm(q)$ are compact, for all $p,q \in \Mwp$. However, both arguments are also true when $(M,g)$ is causally discountinuous, such as in the index $1,n-1$ case. The point we would like to make here, is that one should additionally require Condition \ref{condopen} to hold, and then the causal structure of $\Mwp$ is very well-behaved.
 
 We can turn this remark into a mathematically precise statement by using the language of Lorentzian length spaces, introduced by Kunzinger and S\"amann \cite{KuSa} (see also \cite{BuGH}, where topology change is discussed in this context). Lorentzian length spaces are topological spaces equipped with a notion of causal order and satisfying a set of axioms which, in particular, imply a version of Condition \ref{condopen} \cite[Lem.\ 2.12]{KuSa}. A somewhat related point is that $(\Mwop,g)$ is semi-globally hyperbolic, meaning that it can be divided into globally hyperbolic pieces, which in our case are separated by the critical level sets of $f$. The notion of semi-globally hyperbolic spacetime was introduced by Janssen in \cite{Jan}, with the goal of defining quantum field theories on them (note the connection to Conjecture \ref{Sconj}).
\end{rem}

\section{Towards a full resolution of the Borde--Sorkin Conjecture} \label{secfull}

Throughout this section, we employ the same notational conventions as in Section \ref{secproof}, except that we allow our Morse functions to have multiple critical points. The current progress on Conjecture \ref{BSconj} is summarized in Theorem \ref{thm2} in the Introduction. What still remains open is the case when $f$ has critical points of index $\lambda = 2,...,n-2$, and $h$ is arbitrary. In other words, we do not know what happens if we drop Condition \ref{condmain}.

Notice that Condition \ref{condmain} is basically telling us two things:
\begin{enumerate}
    \item We can find a coordinate neighborhood $\Uwp$ of $\pt$ where both $h$ and $f$ take a specified standard form.
    \item We have the bounds $\zeta_c < \zeta$ and $\zeta_c \leq 8/5$ (see \eqref{largezeta}), which can be interpreted as a bound on how much anisotropy is allowed.
\end{enumerate}
In the first part of this section, we show that the neighborhood $\Uwp$ can always be found, the only difference being that in the general case, we need to add a perturbation to $h$ that vanishes at $\pt$. In the second part of this section, we give a candidate counterexample to Conjecture \ref{BSconj}, which suggests that $\zeta > \zeta_c$ is a necessary condition for causal continuity. We then conclude by proposing a modified version of the conjecture which takes this into account.

\subsection{Generalized standard neighborhoods} \label{secgennbhd}

The statement of the next proposition can be seen as a weaker version of the first part of Condition \ref{condmain}.

\begin{prop} \label{proploc}
 Let $(\Mwp,h,f,\zeta)$ be a Morse geometry and $\pt$ be a critical point of $f$. Then there exists an open neighborhood $\Uwp \subseteq \Mwp$ of $\pt$, an open ball $\mathcal{B} \in \real^n$ around the origin, and a coordinate chart $\phi : \Uwp \to \mathcal{B}$ such that $\phi(\pt) = 0$ and
 \begin{align*}
    &f \circ \phi^{-1} = \frac{1}{2} \sum_{i} a_i (x^i)^2,
    &h \circ \phi^{-1} = \sum_{i} (dx^i)^2 + \sum_{k,l} \hpert_{k l}(x) dx^k dx^l,
\end{align*}
for some real constants $a_i \neq 0$ and some tensor $\hpert$ satisfying $\hpert(\pt)=0$.
\end{prop}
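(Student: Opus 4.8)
The strategy is a two-step reduction: first normalize $f$ using the Morse Lemma, then normalize the leading-order part of $h$ at the critical point while leaving the already-achieved normal form of $f$ untouched. The key point is that the second step must be carried out by a \emph{linear} change of coordinates, since only linear maps preserve the quadratic form \eqref{Morsefct} (up to rescaling the $a_i$), and we have enough freedom in the linear group to simultaneously diagonalize the pair consisting of the Hessian of $f$ and the value $h(\pt)$ of the Riemannian metric at $\pt$. Any discrepancy between $h$ and the Euclidean metric away from $\pt$ is then absorbed into the perturbation $\hpert$, which by construction vanishes at $\pt$.

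\medskip

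\emph{Step 1 (normalize $f$).} By the Morse Lemma \cite[Lem.\ 2.2]{Mil}, there is a chart $\psi$ around $\pt$ with $\psi(\pt)=0$ in which $f\circ\psi^{-1} = \frac12\sum_i \varepsilon_i (u^i)^2$ with $\varepsilon_i = \pm 1$; equivalently $f\circ\psi^{-1}=\frac12\sum_i a_i' (u^i)^2$ after trivially rescaling, but it is cleaner to keep $\varepsilon_i=\pm1$ for now. Write $b := h(\pt)$ for the (constant, positive-definite, symmetric) coefficient matrix of $h$ in these $u$-coordinates at the origin, and let $\eta := \diag(\varepsilon_1,\dots,\varepsilon_n)$ be the Hessian of $f\circ\psi^{-1}$ at $0$.

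\emph{Step 2 (simultaneously diagonalize $b$ and $\eta$).} Apply the Simultaneous Diagonalization Theorem \cite[Thm.\ 13.4.3]{GaHo} to the pair $(b,\eta)$: since $b$ is positive-definite and $\eta$ is symmetric, there is an invertible matrix $P$ with $P^\top b P = \operatorname{Id}$ and $P^\top \eta P = \diag(\mu_1,\dots,\mu_n)$ for some reals $\mu_i$. Since $\eta$ is nondegenerate, all $\mu_i\neq 0$. Now define new coordinates $x = P^{-1}u$, i.e.\ $\phi := P^{-1}\circ\psi$ on a possibly smaller neighborhood $\Uwp$ whose image is an open ball $\mathcal{B}$ around the origin. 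In the $x$-coordinates, $f\circ\phi^{-1} = \frac12\, x^\top (P^\top\eta P)\, x = \frac12\sum_i \mu_i (x^i)^2$, so we set $a_i := \mu_i \neq 0$. Meanwhile $h\circ\phi^{-1}$ has coefficient matrix $H(x)$ with $H(0) = P^\top b P = \operatorname{Id}$, because the value of $h$ at $\pt$ transforms by the same congruence. Therefore, writing
\begin{equation*}
 \hpert_{kl}(x) := H_{kl}(x) - \delta_{kl},
\end{equation*}
we obtain $h\circ\phi^{-1} = \sum_i (dx^i)^2 + \sum_{k,l}\hpert_{kl}(x)\,dx^k dx^l$ with $\hpert(0) = H(0)-\operatorname{Id} = 0$, i.e.\ $\hpert(\pt)=0$ as claimed. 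Smoothness of $\hpert$ is immediate since $h$ and the chart are smooth.

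\medskip

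\emph{Main obstacle.} The only subtlety is making sure the normalization of $h$ at $\pt$ does not destroy the normal form of $f$: this is exactly why one cannot first fix $h$ (say by Gram--Schmidt) and then invoke the Morse Lemma, since the Morse Lemma's coordinate change is generally nonlinear and would spoil $h = \sum (dx^i)^2$ away from $\pt$ while only controlling $h$ at $\pt$ if at all. Performing the steps in the order above — Morse first, then a single linear congruence that diagonalizes $f$'s Hessian \emph{and} orthonormalizes $h(\pt)$ simultaneously — circumvents this, and the rest is bookkeeping. As a side remark, this construction also makes transparent the coordinate-independence of $\zeta_c$ claimed after Condition \ref{condmain}: if \eqref{cooscondmain} holds in two charts, the transition map must be linear (it preserves the quadratic form $f$ up to the identity on the sphere level sets) \emph{and} orthogonal (it preserves $h=\sum(dx^i)^2$), hence it permutes the eigenvalues $a_i$ of the common bilinear form, leaving $\max_{i,j}|a_i/a_j|$ invariant.
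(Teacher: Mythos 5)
Your proof is correct and follows essentially the same route as the paper's: Morse Lemma first to normalize $f$, then a linear change of coordinates to simultaneously diagonalize $\hess f(\pt)$ and $h(\pt)$, absorbing the remaining $x$-dependence of $h$ into $\hpert$. The only cosmetic difference is that you fold the paper's final rescaling step into the simultaneous diagonalization by using the stronger (and equally standard) normalization $P^\top h(\pt) P = \operatorname{Id}$, which is fine.
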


The proof relies on the following two results from the literature.

\begin{simdiag}
 Let $H$, $D$ be two real symmetric $n \times n$ matrices, and let $H$ be positive definite. Then there exists a real non-singular matrix $\Lambda$ such that both $\Lambda^T H \Lambda$ and $\Lambda^T D \Lambda$ are diagonal.
\end{simdiag}

\begin{Morselem}
 Let $\Mwp$ be a manifold, $f \colon \Mwp \to \real$ be a Morse function and $\pt$ be a critical point of $f$. Then there exists an open neighborhood $\Uwp \subseteq \Mwp$ of $\pt$, an open ball $\mathcal{B} \subseteq \real^n$ around the origin, and a coordinate chart $\phi : \Uwp \to \mathcal{B}$ such that $\phi(\pt) = 0$ and
 \begin{equation*}
    f \circ \phi^{-1} = \frac{1}{2} \sum_{i} a_i (x^i)^2,
 \end{equation*}
 for some real constants $a_i \neq 0$.
\end{Morselem}

\begin{proof}[Proof of Proposition \ref{proploc}]
By the Morse Lemma, we can find coordinates where $f$ already has the desired form. Then we apply a linear change of coordinates in order to simultaneously diagonalize the bilinear forms $h(\pt)$ and $\hess f(\pt)$. Finally, we scale each coordinate, in order to normalize our new basis with respect to $h(\pt)$.
\end{proof}

We use Proposition \ref{proploc} to formulate a relaxed version of Condition \ref{condmain}. Let $(\Mwp,h,f,\zeta)$ be a Morse geometry, and suppose we have chosen a critical point $\pt$.

\begin{cond} \label{condrelax}
 The constants $a_i$ appearing in Proposition \ref{proploc} (applied to $\pt$) satisfy
 \begin{equation*}
    \zeta > \zeta_c := \max_{i,j} \left\vert \frac{a_i}{a_j} \right\vert.
 \end{equation*}
\end{cond}

We do not include the bound $\zeta_c \leq 8/5$, because it will not be relevant in the upcoming examples, and it seems likely to not be a necessary condition for causal continuity. Note also that in this paper, we take the point of view that $\zeta$ is specified as part of the Morse geometry. If, instead, we only specify $h$ and $f$, then we can always choose $\zeta > \zeta_c$ at a given critical point (hence also at any finite number of critical points). In this sense, Condition \ref{condrelax} is not very restrictive. Note, in any case, that $\zeta_c$ depends on both $h$ and $f$ in a neighborhood of $\pt$.

Since the coordinate system that we get from Proposition \ref{proploc} is not necessarily unique, the question arises whether the truth or falsehood of Condition \ref{condrelax} depends on any coordinate choices. The answer is no. To see this, note that $\max_i a_i$ is the maximum value of the quadratic form $\hess f (\pt)$ applied to the $h(\pt)$-unit ball. Similarly, $\min_i a_i$ is the minimum. These maxima and minima are independent of the basis, so we conclude that the value of $\zeta_c$ is the same among all coordinate bases satisfying the properties listed in Proposition \ref{proploc}.

One is left to wonder how much our proofs in Section \ref{secproof} are affected when adding the perturbation $\hpert$ that appears in Proposition \ref{proploc}. We can say the following:
\begin{itemize}
    \item If Condition \ref{condrelax} is satisfied, then a generalization of Lemma \ref{lemgiso} holds. The only difference is that in \eqref{giso}, we need to replace $\zeta$ by $1 < \hat{\zeta} < \zeta$. This makes the lightcones of $\giso$ a bit narrower, compensating for the fact that the perturbation $\hpert$ might have also made the lightcones of $g$ a bit narrower.
    \item The proofs in Section \ref{secopen} are no longer valid.
    \item If we can prove that Condition \ref{condrelax} implies Condition \ref{condopen} (compare with Lemma \ref{lemnbhd}), then causal continuity follows by the same arguments as in Section \ref{secpushup}.
    \item On general spacetimes, causal continuity is not stable under perturbations of the metric. This remains true even if we require said perturbations to always widen or narrow the lightcones with respect to the original metric (see Examples \ref{ex3} and \ref{ex4} in Appendix \ref{appcc}).
\end{itemize}

\subsection{A potential counterexample} \label{seccounter}

The discussion in Section \ref{secgennbhd} leads to a natural question: is Condition \ref{condrelax} necessary in order to have causal continuity? If the answer is yes, it would mean that Conjecture \ref{BSconj} is false in its original form. We believe that this is indeed so. Constructing examples that violate Condition \ref{condrelax} is easy, but showing that they are causally discontinuous is not (and we do not achieve it in this paper).

The following examples are meant to illustrate what happens when Condition \ref{condrelax} is not satisfied. For convenience, we take $\Mwp$ non-compact and without boundary, the idea being that it represents a neighborhood of a critical point in some larger Morse geometry.

\begin{ex} \label{ex1}
 Let $\Mwp$ be an open ball in $\real^2$, centered around the origin $\pt := 0$, and equipped with coordinates $(x, y)$. Define
\begin{align*}
 h &:= dx^2 + dy^2, 
 & f:= - \frac{1}{2}\left( x^2 + \pb y^2 \right),
\end{align*}
so that
\[
g = (x^2 + \pb^2 y^2)(dx^2+dy^2) - \zeta (x dx + \pb y dy)^2.
\]
The case $\pb = 1$ is considered in Appendix \ref{appnbhd}: it is a \emph{neighborhood spacetime} in the sense of Borde et al. \cite{BDGSS}. We will refer to is as an \emph{isotropic} neighborhood, while in the case of $\pb \neq 1$, we will talk about \emph{anisotropic} neighborhoods.

Consider a radial line $\gamma(s) = (s,ms)$, for $m \in \real$. Then
\begin{equation} \label{radialeq}
 g\left(\dot{\gamma}(s),\dot{\gamma}(s)\right) = \left( (1-\zeta)\pb^2 m^4 + (1 + \pb^2 -2\zeta \pb) m^2 + 1 - \zeta \right) s^2.
\end{equation}
If $\pb = 1$, this quantity reduces to
\begin{equation*}
   g\left(\dot{\gamma}(s),\dot{\gamma}(s)\right) = \left(1-\zeta\right)\left(m^2+1\right)^2 s^2, 
\end{equation*}
which negative for all $m$, hence all radial lines are timelike (see Figure \ref{figex1a}). In this case, the past of $\pt$, which is the whole Morse spacetime $\Mwop = \Mwp \setminus \{ \pt \}$, can be written as a single TIP, $\Mwop = \Imi(\pt) = \ImM(\gamma)$, for any future directed timelike curve $\gamma$ that ends at $\pt$ (see Appendix \ref{appnbhd}). 

Taking the limit $\pb \to 0$ in \eqref{radialeq}, we get an expression that is positive whenever $m^2 > 1-\zeta$. By continuity, we conclude that for $0 < \pb < 1$ small enough, $g(\dot{\gamma}(s),\dot{\gamma}(s))$ can be negative, zero or positive, depending on $m$ (the dependence, however, is more complicated than in the $\pb \to 0$ limit). Concretely, for $\pb$ small (relative to $\zeta$):
\begin{itemize}
 \item There exist null radial lines, which form the boundaries of the future sets\footnote{Here by \emph{future set} we mean a set $\mathcal{F}$ such that $\IpM(\mathcal{F}) = \mathcal{F}$. Analogously, a \emph{past set} $\mathcal{P}$ satisfies $\ImM(\mathcal{P}) = \mathcal{P}$.} $\mathcal{F}_1,\mathcal{F}_2$ and the pasts sets $\mathcal{P}_1,\mathcal{P}_2$. Intuitively, this happens because the lightcones tilt much faster when moving in the $x$ direction, compared to moving in the $y$ direction (see Figure \ref{figex1b}). Regardless, we still have that $M = \Imi(\pt)$.
 \item $\Ip (q)$ is not open, for any $q \in \Mwop$. This is because $\pt \in \Ip (q)$, but no neighborhood of $\pt$ is entirely contained in $\Ip(q)$ (compare with Lemma \ref{lemnbhd}).
\end{itemize}
Note that there also exists an intermediate case, when $\pb$ has the exact value so that setting \eqref{radialeq} equal to zero has degenerate solutions, and then the boundaries of the sets $\mathcal{F}_1,\mathcal{F}_2,\mathcal{P}_1,\mathcal{P}_2$ overlap. The case of $\pb > 1$ very large can be reduced to the case of $\pb < 1$ small by rescaling both $x$ and $y$ by a factor of $1/\pb$ (then also $h$ is rescaled, but this does not affect $g$).

Since the critical point in this example has index $0$, the Morse spacetime $(M,g)$ is causally continuous, no matter how small we choose $\pb$ (see Theorem \ref{thm2}). In fact, $(M,g)$ is even globally hyperbolic, with $f$ being a Cauchy time function. Note however that in this example, $\Ip(\pt) = \emptyset$, which simplifies things a lot.
\end{ex}

 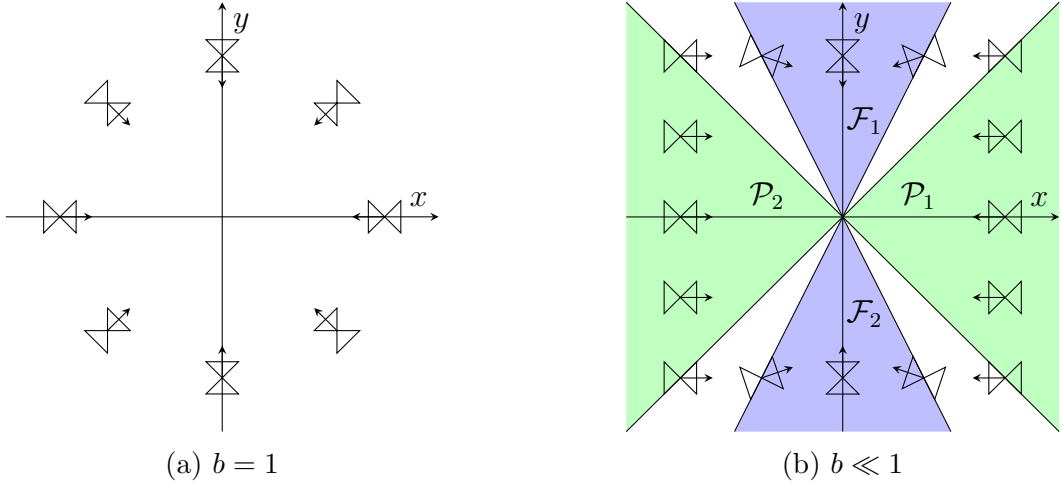
\begin{figure}
 \centering
 \begin{subfigure}[b]{0.49\textwidth}
 \centering
 \begin{tikzpicture}
   \begin{axis}[xmin=-2,xmax=2,
                 ymin=-2,ymax=2,
                 axis equal image,
                 axis on top=true,
                 axis x line=middle,
                 axis y line=middle,
                 xlabel={$x$},
                 ylabel={$y$},
                 xtick=\empty,
                 ytick=\empty,
                 clip=false]
        \draw [-stealth](axis cs: 1.06,1.06) -- (axis cs: 0.85,0.85);
        \draw (axis cs: 1.06,0.85) -- (axis cs: 0.85,1.06) -- (axis cs: 1.27,1.06) -- (axis cs: 1.06,1.27) -- cycle;
        \draw [-stealth](axis cs: -1.06,1.06) -- (axis cs: -0.85,0.85);
        \draw (axis cs: -1.06,0.85) -- (axis cs: -0.85,1.06) -- (axis cs: -1.27,1.06) -- (axis cs: -1.06,1.27) -- cycle;
        \draw [-stealth](axis cs: 1.06,-1.06) -- (axis cs: 0.85,-0.85);
        \draw (axis cs: 1.06,-0.85) -- (axis cs: 0.85,-1.06) -- (axis cs: 1.27,-1.06) -- (axis cs: 1.06,-1.27) -- cycle;
        \draw [-stealth](axis cs: -1.06,-1.06) -- (axis cs: -0.85,-0.85);
        \draw (axis cs: -1.06,-0.85) -- (axis cs: -0.85,-1.06) -- (axis cs: -1.27,-1.06) -- (axis cs: -1.06,-1.27) -- cycle;
        \draw [-stealth](axis cs: 1.5,0) -- (axis cs: 1.2,0);
        \draw (axis cs: 1.65,0.15) -- (axis cs: 1.65,-0.15) -- (axis cs: 1.35,0.15) -- (axis cs: 1.35,-0.15) -- cycle;
        \draw [-stealth](axis cs: -1.5,0) -- (axis cs: -1.2,0);
        \draw (axis cs: -1.65,0.15) -- (axis cs: -1.65,-0.15) -- (axis cs: -1.35,0.15) -- (axis cs: -1.35,-0.15) -- cycle;
        \draw [stealth-](axis cs: 0,-1.2) -- (axis cs: 0,-1.5);
        \draw (axis cs: 0.15,-1.65) -- (axis cs: -0.15,-1.65) -- (axis cs: 0.15,-1.35) -- (axis cs: -0.15,-1.35) -- cycle;
        \draw [stealth-](axis cs: 0,1.2) -- (axis cs: 0,1.5);
        \draw (axis cs: 0.15,1.65) -- (axis cs: -0.15,1.65) -- (axis cs: 0.15,1.35) -- (axis cs: -0.15,1.35) -- cycle;
     \end{axis}
   \end{tikzpicture}
   \caption{$\pb = 1$}  \label{figex1a}
 \end{subfigure} \hfill
 \begin{subfigure}[b]{0.49\textwidth}
 \centering
   \begin{tikzpicture}
    \begin{axis}[xmin=-2,xmax=2,
                 ymin=-2,ymax=2,
                 axis equal image,
                 axis on top=true,
                 axis x line=middle,
                 axis y line=middle,
                 xlabel={$x$},
                 ylabel={$y$},
                 xtick=\empty,
                 ytick=\empty,
                 clip=false]
        \draw[name path=a] (axis cs:0,0) -- (axis cs: 2,2);
        \draw[name path=b] (axis cs:0,0) -- (axis cs: 2,-2);
        \draw[name path=c] (axis cs:0,0) -- (axis cs: -2,2);
        \draw[name path=d] (axis cs:0,0) -- (axis cs: -2,-2);
        \draw[name path=e] (axis cs:0,0) -- (axis cs: 1,2);
        \draw[name path=f] (axis cs:0,0) -- (axis cs: 1,-2);
        \draw[name path=g] (axis cs:0,0) -- (axis cs: -1,2);
        \draw[name path=h] (axis cs:0,0) -- (axis cs: -1,-2);
        \path[name path=line1] (axis cs: -1,2) -- (axis cs: 0,2);
        \path[name path=line2] (axis cs: 0,2) -- (axis cs: 1,2);
        \path[name path=line3] (axis cs: -1,-2) -- (axis cs: 0,-2);
        \path[name path=line4] (axis cs: 0,-2) -- (axis cs: 1,-2);
        \addplot[green!25] fill between[of = a and b];
        \addplot[green!25] fill between[of = c and d];
        \addplot[blue!25] fill between[of = e and line2];
        \addplot[blue!25] fill between[of = g and line1];
        \addplot[blue!25] fill between[of = h and line3];
        \addplot[blue!25] fill between[of = f and line4];
        \draw [-stealth](axis cs: 1.5,1.5) -- (axis cs: 1.2,1.5);
        \draw [-stealth](axis cs: 1.5,-1.5) -- (axis cs: 1.2,-1.5);
        \draw [-stealth](axis cs: -1.5,1.5) -- (axis cs: -1.2,1.5);
        \draw [-stealth](axis cs: -1.5,-1.5) -- (axis cs: -1.2,-1.5);
        \draw (axis cs: 1.65,1.65) -- (axis cs: 1.65,1.35) -- (axis cs: 1.35,1.65) -- (axis cs: 1.35,1.35) -- cycle;
        \draw (axis cs: -1.65,1.65) -- (axis cs: -1.65,1.35) -- (axis cs: -1.35,1.65) -- (axis cs: -1.35,1.35) -- cycle;
        \draw (axis cs: 1.65,-1.65) -- (axis cs: 1.65,-1.35) -- (axis cs: 1.35,-1.65) -- (axis cs: 1.35,-1.35) -- cycle;
        \draw (axis cs: -1.65,-1.65) -- (axis cs: -1.65,-1.35) -- (axis cs: -1.35,-1.65) -- (axis cs: -1.35,-1.35) -- cycle;
        \draw [-stealth](axis cs: 0.75,1.5) -- (axis cs: 0.46,1.4);
        \draw (axis cs: 0.55,1.6) -- (axis cs: 0.65,1.3) -- (axis cs: 0.85,1.7) -- (axis cs: 0.95,1.4) -- cycle;
        \draw [-stealth](axis cs: -0.75,1.5) -- (axis cs: -0.46,1.4);
        \draw (axis cs: -0.55,1.6) -- (axis cs: -0.65,1.3) -- (axis cs: -0.85,1.7) -- (axis cs: -0.95,1.4) -- cycle;
        \draw [-stealth](axis cs: 0.75,-1.5) -- (axis cs: 0.46,-1.4);
        \draw (axis cs: 0.55,-1.6) -- (axis cs: 0.65,-1.3) -- (axis cs: 0.85,-1.7) -- (axis cs: 0.95,-1.4) -- cycle;
        \draw [-stealth](axis cs: -0.75,-1.5) -- (axis cs: -0.46,-1.4);
        \draw (axis cs: -0.55,-1.6) -- (axis cs: -0.65,-1.3) -- (axis cs: -0.85,-1.7) -- (axis cs: -0.95,-1.4) -- cycle;
        \draw [-stealth](axis cs: 1.5,0) -- (axis cs: 1.2,0);
        \draw (axis cs: 1.65,0.15) -- (axis cs: 1.65,-0.15) -- (axis cs: 1.35,0.15) -- (axis cs: 1.35,-0.15) -- cycle;
        \draw [-stealth](axis cs: -1.5,0) -- (axis cs: -1.2,0);
        \draw (axis cs: -1.65,0.15) -- (axis cs: -1.65,-0.15) -- (axis cs: -1.35,0.15) -- (axis cs: -1.35,-0.15) -- cycle;
        \draw [stealth-](axis cs: 0,-1.2) -- (axis cs: 0,-1.5);
        \draw (axis cs: 0.15,-1.65) -- (axis cs: -0.15,-1.65) -- (axis cs: 0.15,-1.35) -- (axis cs: -0.15,-1.35) -- cycle;
        \draw [stealth-](axis cs: 0,1.2) -- (axis cs: 0,1.5);
        \draw (axis cs: 0.15,1.65) -- (axis cs: -0.15,1.65) -- (axis cs: 0.15,1.35) -- (axis cs: -0.15,1.35) -- cycle;
        \draw [-stealth](axis cs: 1.5,0.75) -- (axis cs: 1.2,0.75);
        \draw (axis cs: 1.65,0.9) -- (axis cs: 1.65,0.6) -- (axis cs: 1.35,0.9) -- (axis cs: 1.35,0.6) -- cycle;
        \draw [-stealth](axis cs: -1.5,0.75) -- (axis cs: -1.2,0.75);
        \draw (axis cs: -1.65,0.9) -- (axis cs: -1.65,0.6) -- (axis cs: -1.35,0.9) -- (axis cs: -1.35,0.6) -- cycle;
        \draw [-stealth](axis cs: 1.5,-0.75) -- (axis cs: 1.2,-0.75);
        \draw (axis cs: 1.65,-0.9) -- (axis cs: 1.65,-0.6) -- (axis cs: 1.35,-0.9) -- (axis cs: 1.35,-0.6) -- cycle;
        \draw [-stealth](axis cs: -1.5,-0.75) -- (axis cs: -1.2,-0.75);
        \draw (axis cs: -1.65,-0.9) -- (axis cs: -1.65,-0.6) -- (axis cs: -1.35,-0.9) -- (axis cs: -1.35,-0.6) -- cycle;
        \node at (axis cs:0.7,0.2) {$\mathcal{P}_1$};
        \node at (axis cs:-0.7,0.2) {$\mathcal{P}_2$};
        \node at (axis cs:0.2,0.9) {$\mathcal{F}_1$};
        \node at (axis cs:0.2,-0.9) {$\mathcal{F}_2$};
     \end{axis}
   \end{tikzpicture}
  \caption{$\pb \ll 1$} \label{figex1b}
  \end{subfigure}
  \caption{The causal structure of Example \ref{ex1}.}  \label{figex1}
\end{figure}

Building upon the previous example, we propose our candidate counterexample to Conjecture \ref{BSconj}.

\begin{ex} \label{ex2}
Let $\Mwp$ be an open ball in $\real^4$, centered around the origin $\pt := 0$, and equipped with coordinates $(x, y, z, w)$. Define
\begin{align*}
 h &:= dx^2 + dy^2 + dz^2 + dw^2, 
 & f:= \frac{1}{2}\left( -x^2 - \pb y^2 + z^2 + w^2 \right),
\end{align*}
and $g$, as usual, by \eqref{metriccoo}. We claim that reflectivity (see Definition \ref{defcc}) is violated by the pair of points
\begin{align*}
 p &:= (1,0,0,0),
 & q:= (m,0,1,0),
\end{align*}
 where $m := \frac{\sqrt{\zeta}-1}{\sqrt{\zeta-1}}$. From the causal analysis of the punctured $(x,z)$-plane (see Figure \ref{figex2} and Appendix \ref{appnbhd}) it follows that $\IpM(q) \subseteq \IpM(p)$. However, $\ImXZ(q) \not\supseteq \ImXZ(p)$, where the subscript $XZ$ means that we are only considering causal curves in the punctured $(x,z)$-plane. Nonetheless, it is possible that $\ImM(q) \supseteq \ImM(p)$ when also considering causal curves that leave said plane. In particular, if $\pb = 1$ (or close enough to $1$), we see from the analysis in Example \ref{ex1} (see also Figure \ref{figapp1a}) that from $p$ we can reach the negative $x$-axis via a future-directed timelike curve contained in the $(x,y)$-plane. Then, from the negative $x$-axis, we can reach $q$. If $\pb$ is too small, however, this construction is no longer possible (see Figure \ref{figex1b}), suggesting that probably $\ImM(q) \not\supseteq \ImM(p)$. This is not a bulletproof argument, of course, because we are ignoring all timelike curves that are not contained in any coordinate plane.
\end{ex}
  
\begin{figure}
  \begin{center}

   \begin{tikzpicture}
    \begin{axis}[xmin=-2,xmax=2,
                 ymin=-2,ymax=2,
                 axis equal image,
                 axis background/.style={fill=blue!25},
                 axis on top=true,
                 axis x line=middle,
                 axis y line=middle,
                 xlabel={$x$},
                 ylabel={$z$},
                 xtick=\empty,
                 ytick=\empty,
                 clip=false]
     \filldraw[black] (axis cs: 1,0) circle (2pt);
     \node at (axis cs:1,-0.3) {$p$};
     \filldraw[black] (axis cs: -0.577,1) circle (2pt) node[anchor=north east] {$q$};
     \draw (axis cs:0,0) -- (axis cs: -2*0.577,2);
     \draw[name path=futureofpc] (axis cs:0,0) -- (axis cs: -2*0.577,-2);
     \draw[name path=pastofpc] (axis cs:0,0) -- (axis cs: -2,-2*0.577);
     \draw[dashed] (axis cs:0,0) -- (axis cs: 2,2*0.577);
     \draw[dashed] (axis cs:0,0) -- (axis cs: 2,-2*0.577);
     \addplot[domain=-2:-0.577,samples=20,name path=futureofp] {0.577*x+2*0.577*sqrt(x^2+1)}; 
     \addplot[domain=-0.577:0.84,samples=30,name path=pastofp] {0.577*x+2*0.577*sqrt(x^2+1)}; 
     \addplot[domain=0:0.925,samples=25, name path=pastofp2]({0.577*x+0.577*sqrt(4*x^2+3)},{x}); 
     \addplot[domain=0:0.925,samples=25, name path=pastofp1]({0.577*x+0.577*sqrt(4*x^2+3)},{-x}); 
     \addplot[domain=0:2,samples=50,name path=futureofp2]({-0.577*x+0.577*sqrt(4*x^2+3)},{x}); 
     \addplot[domain=0:2,samples=50,name path=futureofp1]({-0.577*x+0.577*sqrt(4*x^2+3)},{-x}); 
     \path[name path=xaxis1] (axis cs: -2,0) -- (axis cs: 0,0);
     \path[name path=xaxis2] (axis cs: -0.577,0) -- (axis cs: 0,0);
     \path[name path=xaxis3] (axis cs: 0,0) -- (axis cs: 1,0);
     \path[name path=xaxis4] (axis cs: 1,0) -- (axis cs: 2,0);
     \path[name path=line1] (axis cs: -2,2) -- (axis cs: -2*0.577,2);
     \path[name path=line2] (axis cs: -2*0.577,2) -- (axis cs: 0.84,2);
     \path[name path=line3] (axis cs: 1.358,2) -- (axis cs: 2,2);
     \path[name path=line4] (axis cs: 1.358,-2) -- (axis cs: 2,-2);
     \path[name path=line5] (axis cs: -2,-2) -- (axis cs: -2*0.577,-2);
     \addplot[white] fill between[of = futureofp2 and pastofp2];
     \addplot[white] fill between[of = line3 and pastofp2];
     \addplot[white] fill between[of = futureofp1 and pastofp1];
     \addplot[white] fill between[of = line4 and pastofp1];
     \addplot[green!25] fill between[of = pastofpc and futureofp];
     \addplot[white] fill between[of = line1 and futureofp];
     \addplot[white] fill between[of = line5 and pastofpc];
     \addplot[blue!50] fill between[of = line2 and pastofp];
     \addplot[green!25] fill between[of = pastofp1 and pastofp2];
     \node at (axis cs:-1.4,0.4) {$\ImXZ(q)$};
     \node at (axis cs:-0.5,1.7) {$\IpXZ(q)$};
     \node at (axis cs:1.8,-0.3) {$\ImXZ(p)$};
     \node at (axis cs:0.5,-1.2) {$\IpXZ(p)$};
     \end{axis}
   \end{tikzpicture}
  \end{center}
  \caption{The points $p,q$ of Example \ref{ex2} and their futures and pasts (restricted to the $(x,z)$-plane $XZ$).} \label{figex2}
 \end{figure}
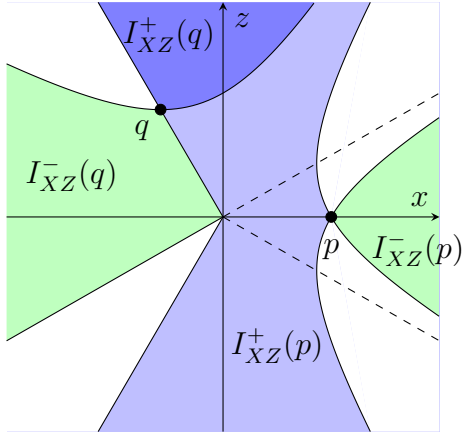

\section{Conclusions} \label{secconc}

With the proof of Theorem \ref{mainthm}, we have established a new case of the Borde--Sorkin conjecture (Conjecture \ref{BSconj} in the Introduction). In doing so, we have advanced the current progress on the conjecture to that summarized in Theorem \ref{thm2}. Along the way, we have developed a notion of causal structure for Morse geometries that includes the critical points. This supports the view that degenerate metrics are physically reasonable, and is a first step towards understanding quantum fields on Morse geometries (in view of Conjecture \ref{Sconj}).

Let us briefly mention here three recent approaches to quantum field theory that are specially relevant for topology change. The algebraic approach of Janssen \cite{Jan} has been developed specifically with topology change as one of its applications, but the existence of states in this approach is still an open problem. Another approach is that of Sorkin and Johnston \cite{Joh,Sor3}; so far it has been applied to the trousers topology change \cite{BDJS} (confirming the energy divergences there), but, to our knowledge, not to any of the causally continuous Morse geometries (where Conjecture \ref{Sconj} predicts a well-behaved QFT). Lastly, the recent paper of Kontsevich and Segal \cite{KoSe} defines QFTs on the category of cobordisms with certain complex metrics. Real Lorentzian metrics arise as a limit case of these complex metrics, and so do Morse metrics \cite{LoSo,Wit}. So far, however, it is only known that a QFT is induced on the limit spacetime when the latter is a non-degenerate and globally hyperbolic \cite[Thm.\ 5.2]{KoSe}. It remains to be seen if this result can be generalized to Morse geometries.

Another important conclusion of the present paper is that we have found a potential counterexample to Conjecture \ref{BSconj} (Example \ref{ex2} in Section \ref{seccounter}). Despite being of dimension $4$ and having only a critical point of index $2$, we believe that our example is causally discontinuous, due to being highly anisotropic (i.e.\ because it has $\zeta_c > \zeta$). The lack of symmetries and good coordinate choices has prevented us from proving this fully. Regardless, we propose the following refinement of Conjecture \ref{BSconj}, which incorporates a bound on the anisotropy (Condition \ref{condrelax}).

\begin{conj} \label{BSconjmod}
 Let $(\Mwp,h,f,\zeta)$ be a Morse geometry of dimension $n$, and $(\Mwop,g)$ the corresponding Morse spacetime. Assume $f$ has a single critical point for each critical value. Then $(M,g)$ is causally continuous if and only if the following hold:
 \begin{enumerate}
     \item None of the critical points has index $1$ or $n-1$,
     \item Condition \ref{condrelax} is satisfied at every critical point of index different from $0,n$.
 \end{enumerate}
\end{conj}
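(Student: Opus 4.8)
The plan is to prove the two implications of Conjecture \ref{BSconjmod} separately, in each case first reducing to a single critical point by the localization theorem of Dowker et al.\ \cite{DGS1}: since $f$ has at most one critical point per critical value, $(\Mwop,g)$ is causally continuous if and only if every critical point has a causally continuous neighbourhood. One may then work in a generalized standard neighbourhood $\Uwp$ as in Proposition \ref{proploc}, where $f = \frac{1}{2}\sum_i a_i (x^i)^2$ and $h = \sum_i (dx^i)^2 + \hpert$ with $\hpert(\pt)=0$. The indices $\lambda = 0,n$ are handled by Borde et al.\ \cite{BDGSS}, so throughout one assumes $2 \le \lambda \le n-2$, hence $n \ge 4$.

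For the ``if'' direction, assume Condition \ref{condrelax}, i.e.\ $\zeta_c < \zeta$. The route is exactly that of Section \ref{secproof}. First establish the perturbed analogue of Lemma \ref{lemgiso}: choosing $\hat\zeta$ with $\zeta_c < \hat\zeta < \zeta$ and shrinking $\Uwp$ so that the $\hpert_{kl}$ are suitably small, the isotropic metric $\giso$ built with $\hat\zeta$ instead of $\zeta$ still has lightcones narrower than those of $g$ on $\Uwp \setminus \{\pt\}$. Granting this, the entire content of Subsection \ref{secpushup} --- the limit-curve lemma $\overline{\Ip} \subseteq \Jp$, the push-up lemma, and the deduction of reflectivity and causal continuity --- carries over unchanged, since it uses only Condition \ref{condopen} and the non-imprisonment Lemma \ref{lempbdry}, never the coordinate form of $h$ or the bound $\zeta_c \le 8/5$. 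The substantive task is therefore to prove that Condition \ref{condrelax} implies Condition \ref{condopen}, i.e.\ the analogue of Lemma \ref{lemnbhd} with the perturbation $\hpert$ present and with $\zeta_c$ allowed all the way up to $\zeta$.

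For the ``only if'' direction, one contrapositive half is immediate: index $1$ or $n-1$ forces causal discontinuity by Dowker et al.\ \cite{DGS1}. It remains to show that if $\zeta_c \ge \zeta$ at some critical point of index $\ne 0,n$, then $(\Mwop,g)$ is causally discontinuous. This is exactly what Example \ref{ex2} is designed to suggest but not settle. The plan is to take, as there, a point $p$ on the ``past-cone axis'' and a point $q$ with $\IpM(q) \subseteq \IpM(p)$, and to show that nevertheless $\ImM(p) \not\subseteq \ImM(q)$, so reflectivity fails. The crux is to certify $p \notin \ImM(q)$ against \emph{all} future-directed causal curves, not only those confined to a coordinate plane. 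For this I would construct a past set $\mathcal{P}$ with $\ImM(q) \subseteq \mathcal{P}$ but $p \notin \mathcal{P}$, equivalently a continuous function on $\Uwp$ that is non-decreasing along every future-directed causal curve and strictly separates $q$ from $p$; the natural candidates combine $f$ with a radial quantity built from the negative-$a_i$ coordinates, the heuristic being that when $\zeta_c \ge \zeta$ the lightcones in the most anisotropic two-plane are too wide for a causal curve issuing from $p$ to wind around $\pt$ and return below $q$.

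In both directions the main obstacle is the loss of the symmetries and distinguished coordinates that made the isotropic case tractable. For ``if'' this manifests as the openness of $\Ipm(\pt)$: the estimates of Subsection \ref{secopen} --- the $\nu$-dependent inequality that forces $A^2/a^2 \le 8/5$, and the monotonicity of the Cartesian coordinates under the gradient flows of $r$ and $\rho$ --- all break once $\hpert \ne 0$ or $\zeta_c$ is close to $\zeta$, so a genuinely new comparison argument is needed, perhaps comparing $g$ from above with a family of isotropic cones that degenerate only along the critical directions, since comparison with a single isotropic metric is precisely what fails as $\zeta_c \to \zeta$. For ``only if'' the obstacle is dual: excluding the ``unexpected'' timelike curves that leave every coordinate plane --- which is exactly the gap a monotone separating function, if one can be found, would close.
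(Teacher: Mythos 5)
This statement is a \emph{conjecture}, not a theorem: the paper never proves it, and in fact closes Section~\ref{secconc} by listing exactly the two steps that remain open. So there is no ``paper's own proof'' to compare against, and your proposal, read as a proof, cannot succeed for the same reason the paper's does not exist.

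That said, as a roadmap your proposal is faithful to the paper's. You correctly invoke the localization of Dowker et al.~\cite{DGS1} to reduce to one critical point per critical value, the index-$0,n$ case from Borde et al.~\cite{BDGSS}, Proposition~\ref{proploc} to obtain the generalized standard neighbourhood with a perturbation $\hpert$, and the perturbed analogue of Lemma~\ref{lemgiso} with some $\hat\zeta\in(\zeta_c,\zeta)$ in place of $\zeta$. You also correctly observe that the whole of Subsection~\ref{secpushup} uses only Condition~\ref{condopen} and Lemma~\ref{lempbdry}, so it survives both the perturbation and the removal of the $\zeta_c\le 8/5$ bound. All of this matches the paper's own discussion at the end of Section~\ref{secgennbhd} and in Section~\ref{secconc}. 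For the ``only if'' direction you likewise correctly identify that index $1,n-1$ is settled by~\cite{DGS1}, and that showing Example~\ref{ex2} (and its generalizations with $\zeta_c\ge\zeta$) to be causally discontinuous is the missing ingredient.

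The genuine gaps are the two you yourself flag, and they are precisely the two the paper leaves open. (1)~For the ``if'' direction, nothing in your sketch actually derives Condition~\ref{condopen} from Condition~\ref{condrelax}: once $\hpert\neq 0$ the flows of $\nabla r$ and $\nabla\rho$ no longer have the diagonal form~\eqref{DF}, the orthogonality $h(\nabla f,\nabla\rrho)=0$ is lost, and the $\nu$-dependent estimate in Step~2 of Lemma~\ref{lemtip} no longer closes; comparison against a single isotropic metric also stops working as $\zeta_c\to\zeta$, since $\hat\zeta/\zeta_c\to 1$ and the $\giso$-cones collapse onto the level sets of $f$. A ``genuinely new comparison argument,'' as you put it, is required and not supplied. (2)~For the ``only if'' direction, the candidate separating function or past set $\mathcal{P}$ that would certify $p\notin\ImM(q)$ against \emph{all} causal curves in four dimensions, not merely those confined to a coordinate two-plane, is named but not constructed; the paper likewise stops short, calling Example~\ref{ex2} only a ``potential counterexample.'' Until both of these are filled in, what you have is an accurate survey of what a proof would have to do, not a proof.
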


In order to prove Conjecture \ref{BSconjmod}, two steps remain. One is to show that Condition \ref{condrelax} is really necessary, by showing that Example \ref{ex2} (where it is violated) is causally discontinuous. The other remaining step is to generalize Theorem \ref{mainthm} by adding a perturbation to $h$ that vanishes at the critical points, and by removing the requirement that $\zeta_c \leq 8/5$. This is nontrivial, because causal continuity is, in general, not stable under perturbations (see Examples \ref{ex3} and \ref{ex4} in Appendix \ref{appcc}). Yet the second half of our proof (Section \ref{secpushup}) is robust under perturbations, and does not require $\zeta_c \leq 8/5$, so it would suffice to prove openness of the chronological relation in $\Mwp$ (Condition \ref{condopen}), and then causal continuity would follow.

\section*{Acknowledgements}

I would like to thank Elefterios Soultanis and Maximilian Ruep for very interesting discussions, and Annegret Burtscher for comments on the draft. I am also grateful to Bernardo Araneda and Simon Pepin Lehalleur for pointing me to reference \cite{KoSe}.

\appendix

\section{Neighborhood spacetimes} \label{appnbhd}

In this appendix, we review the causal structure of isotropic neighborhood spacetimes, as studied in Borde et al.\ \cite{BDGSS}. At the end, we prove Lemma \ref{lemapp}, which is new, although it follows quite straightforwardly from the analysis in  \cite{BDGSS}.

Let $\Mwp \subseteq \real^n$ be an open neighborhood of the origin, equipped with a coordinate system $(x^1, ..., x^\lambda, y^1, ..., y^{n - \lambda})$, where $\lambda \neq 0,1,n-1,n$, and 
\begin{align} \label{nbhdeqs}
 h &:= \sum_{i=1}^{\lambda} (dx^i)^2 + \sum_{j=1}^{n-\lambda} (dy^j)^2, 
 & f:= -\frac{1}{2} \sum_{i=1}^{\lambda} (x^i)^2 + \frac{1}{2} \sum_{j=1}^{n-\lambda} (y^j)^2.
\end{align}
Note that the origin $\pt = (0,...,0)$, is the only critical point of $f$ in this case. We write $\Mwop := \Mwp \setminus \{ \pt \}$, as usual (here $\Mwp$ has no boundary, but can be thought of as a neighborhood of a critical point in some cobordism). It is convenient to change to polar coordinates $(\rho,\Theta,r,\Phi)$, where
\begin{align*}
 &\rho := \sum_{j=1}^{n-\lambda} (y^j)^2,
 &r := \sum_{i=1}^{\lambda} (x^i)^2,
\end{align*}
and where $\Theta, \Phi$ denote the angular coordinates corresponding to the subspaces $\{r = 0  \}$ and $\{\rho = 0 \}$ respectively (thus each of $\Theta, \Phi$ is a collection of angular variables, rather than a single one). The Lorentzian metric \eqref{metricinv} is then given by
\begin{align}
\begin{split}
 g = &\left(r^2-(\zeta-1) \rho^2 \right)d\rho^2 +\left(\rho^2-(\zeta-1) r^2 \right)dr^2 + 2 \zeta \rho r d\rho dr \\
 &+ \left(\rho^2 + r^2 \right) \left(\rho^2 d\Theta^2 + r^2 d\Phi^2 \right).
 \label{nbhdg}
\end{split}
\end{align}
Because the coefficients in front of $d \Theta$ and $d \Phi$ are positive, any null geodesic in the $\Theta, \Phi = \text{const.}$ plane (with respect to the restricted metric), must also be a null geodesic in the full spacetime.

Thus we start by commenting on the situation with constant angles. Recall also that in 2 dimensions, any null curve is automatically a null geodesic. The implicit equations for any null geodesic can thus be found from \eqref{nbhdg}:
\begin{equation} \label{eqapp}
 \sqrt{\zeta - 1} (x^2 - y^2 ) = \pm 2xy + c_\pm,
\end{equation}
where $c_\pm$ are constants. In particular, the case of $c_\pm=0$ corresponds to geodesics that bound $\Ipm(\pt)$, which are radial lines of a certain slope (depicted as dashed lines in Figure \ref{figapp1b}). We can use this information to find the future and past sets of any point (depicted as colored regions in Figure \ref{figapp1b}).

Next we analyse the case when of $\rho =0$ and $\Theta = \text{const.}$ (the case $r =0$ and $\Phi = \text{const.}$ is analogous). We restrict to the case where $\Phi = \phi$ is just a single angular variable, hence again reducing the problem to two dimensions. The null geodesics on the $\rho =0, \Theta = \text{const.}$ plane (with respect to the induced metric) are then given by
\[
 r(\phi)=r_0 \operatorname{e}^{\pm \phi / \sqrt{\zeta-1} }.
\]
Again, we can use this to find the future and past sets of any point on the plane, with respect to the induced metric (see Figure \ref{figapp1a}). In this case, there are no null geodesics going through the origin $\pt$, and all points with $\rho =0$ lie in the past of $\pt$.

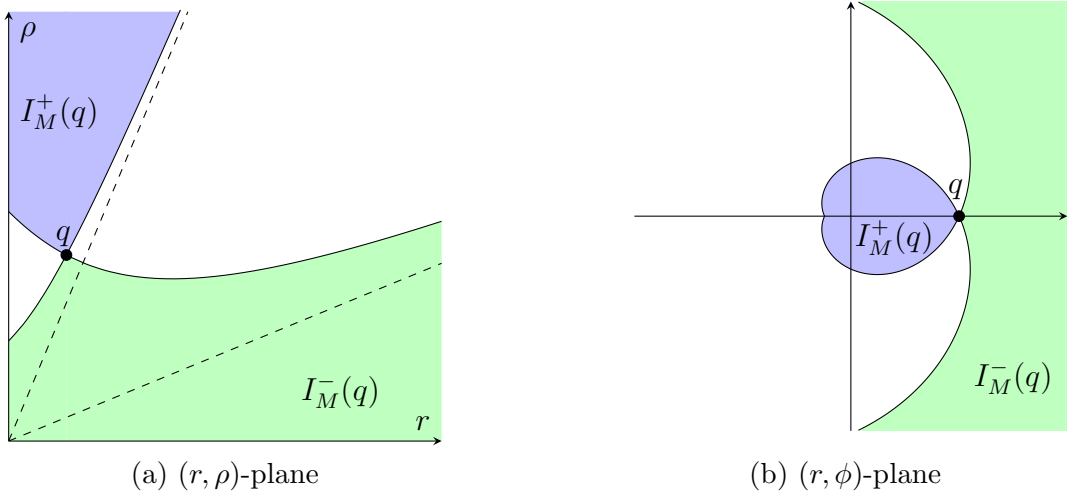
\begin{figure}
 \centering
 \begin{subfigure}[b]{0.49\textwidth}
 \centering
   \begin{tikzpicture}
    \begin{axis}[xmin=0,xmax=3,
                 ymin=0,ymax=3,
                 axis equal image,
                 axis on top=true,
                 axis x line=middle,
                 axis y line=middle,
                 xlabel={$r$},
                 ylabel={$\rho$},
                 xtick=\empty,
                 ytick=\empty,
                 clip=false]
     \filldraw[black] (axis cs: 0.4,1.3) circle (2pt);
     \node at (axis cs:0.38,1.43) {$q$};
     \draw[dashed] (axis cs:0,0) -- (axis cs: 3,3*0.414);
     \draw[dashed] (axis cs:0,0) -- (axis cs: 3*0.414,3);
     \addplot[domain=0:0.402,samples=10,name path=future1,smooth] {-x+sqrt(2*x^2+2.57)};
     \addplot[domain=0.4:3,samples=20,name path=past2,smooth] {-x+sqrt(2*x^2+2.57)};
     \addplot[domain=0:0.4,samples=5,name path=past1,smooth] {x+sqrt(2*x^2+0.49)};
     \addplot[domain=0.4:1.19,samples=10,name path=future2,smooth] {x+sqrt(2*x^2+0.49)};
     \path[name path = bottom1] (axis cs:0,0)--(axis cs:0.4,0);
     \path[name path = bottom2] (axis cs:0.4,0)--(axis cs:3,0);
     \path[name path = top1] (axis cs:0,3)--(axis cs:0.402,3);
     \path[name path = top2] (axis cs:0.39,3)--(axis cs:3*0.399,3);
     \addplot[green!25] fill between[of = bottom1 and past1];
     \addplot[green!25] fill between[of = bottom2 and past2];
     \addplot[blue!25] fill between[of = future1 and top1];
     \addplot[blue!25] fill between[of = future2 and top2];
     \draw (axis cs: 2.3, 0.35) node {$\ImM(q)$};
     \draw (axis cs: 0.35, 2.3) node {$\IpM(q)$};
     \end{axis}
   \end{tikzpicture}
  \caption{$(r,\rho)$-plane} \label{figapp1b}
  \end{subfigure} \hfill
 \begin{subfigure}[b]{0.49\textwidth}
 \centering
 \begin{tikzpicture}
   \begin{axis}[xmin=-2, xmax=2, ymin=-2, ymax=2, xtick=\empty, ytick = \empty,  xticklabels={}, xlabel={}, ylabel={}, yticklabels={},axis background/.style={fill=green!25},axis on top=true,axis equal image,
                 axis x line=middle,
                 axis y line=middle,clip=false]
        \addplot[domain=0:88,samples=88,smooth,name path = A,data cs=polar] (-x,{exp(x*pi/400)});
        \addplot[domain=0:88,samples=88,smooth,name path = B,data cs=polar] (x,{exp(x*pi/400)});
        \addplot[domain=0:180,samples=180,smooth,name path = C,data cs=polar] (-x,{exp(-x*pi/400)});
        \addplot[domain=0:180,samples=180,smooth,name path = D,data cs=polar] (x,{exp(-x*pi/400)});
        \path[name path = E] (axis cs:-2.1,2.1)--(axis cs:0.1,2.1);
        \path[name path = F] (axis cs:-2.1,-2.1)--(axis cs:0.1,-2.1);
        \addplot[white] fill between[of = A and B];
        \addplot[white] fill between[of = E and F];
        \addplot[blue!25] fill between[of = C and D];
        \filldraw[black] (axis cs:1,0) circle (2pt);
        \draw (axis cs: 1.5, -1.5) node {$\ImM(q)$};
        \draw (axis cs: 0.4, -0.2) node {$\IpM(q)$};
        \draw (axis cs: 0.97,0.25) node {$q$};
   \end{axis}
   \end{tikzpicture}
   \caption{$(r,\phi)$-plane}  \label{figapp1a}
 \end{subfigure}
  \caption{The causal structure of an isotropic neighborhood.}
\end{figure}

\begin{lem} \label{lemapp}
  Let $(\Mwp,h,f,\zeta)$ be an isotropic Morse neighborhood, with $h,f$ given by \eqref{nbhdeqs}. For every $q \in \Ipm(\pt)$, there exists a neighborhood $U$ of $\pt$ such that $U \setminus \{\pt\} \subseteq \ImpM(q)$.
\end{lem}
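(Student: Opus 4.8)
The plan is to reduce, by push-up in the nondegenerate spacetime $(\Mwop,g)$ together with the dilation invariance of the lightcones of \eqref{nbhdg}, to the case of a point $q$ sitting close to $\pt$ on the ``core'' $\{\rho=0\}\cong\real^{\lambda}$, and then to build explicit future-directed timelike curves from $q$ covering a small punctured ball around $\pt$. Concretely, I would treat $q\in\Imi(\pt)$ (the case $q\in\Ip(\pt)$ being the time reverse). Choosing a f.d.\ timelike curve $\gamma$ from $q$ to $\pt$ and using that $\gamma(s)\in\IpM(q)$ for $s<1$, push-up reduces the claim to $\gamma(s)$ in place of $q$, i.e.\ to $q$ arbitrarily close to $\pt$. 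From the analysis of \cite{BDGSS} recalled above — a f.d.\ timelike curve starting from a point with $f<0$ can be prolonged until it meets $\{\rho=0\}$, which is the isotropic instance of the mechanism behind Lemma \ref{lemtip2} — one further application of push-up lets us assume $q=(\varepsilon\omega_{0},0)$ with $\varepsilon$ small and $\omega_{0}\in S^{\lambda-1}$; and since $g$ is conformally invariant under $z\mapsto\kappa z$, we may normalize $\varepsilon=1$. The goal then becomes: $\IpM(q)\supseteq\{z\neq\pt:|z|<\delta\}$ for some $\delta>0$.

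The construction would exploit the shape of $g$ along the two cores. On $\{\rho=0\}\cong\real^{\lambda}$, writing $x=r\omega$, the induced metric is $-(\zeta-1)r^{2}dr^{2}+r^{4}d\Phi^{2}$, a metric cone over $S^{\lambda-1}$ with $r$ decreasing towards the future — equivalently, its null geodesics are the logarithmic spirals $r=r_{0}\exp(\pm\phi/\sqrt{\zeta-1})$ noted above. Since $\lambda\geq 2$, the sphere $S^{\lambda-1}$ is connected and a timelike curve driving $r\to 0$ can realize an arbitrary rotation of $\omega$; hence $\IpM(q)$ already contains a whole punctured ball $\{r\omega:0<r<r_{\ast},\ \omega\in S^{\lambda-1}\}$ of the core. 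Dually, $g$ restricted to $\{r=0\}\cong\real^{n-\lambda}$ is the cone $-(\zeta-1)\rho^{2}d\rho^{2}+\rho^{4}d\Theta^{2}$ over $S^{n-\lambda-1}$, now with $\rho$ increasing towards the future and $n-\lambda\geq 2$, so from a point of $\{r=0\}$ with $\rho$ small one reaches every point of $\{r=0\}$ with any larger $\rho$ and \emph{arbitrary} angle $\Theta$. It then remains (a) to exhibit f.d.\ timelike curves from $q$ that cross the critical level $\{f=0\}$ near $\pt$ and arrive at $\{r=0\}$ at arbitrarily small $\rho$, and (b) to reach a general $w$ with $r_{w},\rho_{w}>0$ by concatenating a core-descending curve with the rotational freedom of the two cones, guided by the explicit picture of the future/past sets in \cite{BDGSS} (Figures \ref{figapp1b}, \ref{figapp1a}). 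Openness of $\IpM(q)$ then upgrades ``reached'' to ``contained in a neighborhood'', and gluing the timelike segments yields the punctured ball.

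I expect (a)--(b) to be the main obstacle. Inside a single coordinate $2$-plane $\{\Theta,\Phi=\mathrm{const}\}$ the future of a point with $f<0$ is typically a proper cone that does \emph{not} contain a full neighborhood of $\pt$ (compare Example \ref{ex1}), so the curves witnessing (a) must genuinely leave such $2$-planes and use the transverse directions; one has to control the lightcone of \eqref{nbhdg} as it degenerates along $\{r=0\}$ and along $\{\rho=0\}$ and keep the curves timelike — and inside the chart — all the way through, which for $\zeta$ close to $1$ is delicate. This is exactly where the index hypotheses enter: $\lambda\neq 0,n$ makes both cores nontrivial, and $\lambda\neq 1,n-1$ makes $S^{\lambda-1}$ and $S^{n-\lambda-1}$ connected, which is what supplies the rotational freedom. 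Since the causal structure of isotropic neighborhoods has already been worked out in detail in \cite{BDGSS}, the task is to assemble their description rather than to redo it, which is why the lemma should follow ``straightforwardly''.
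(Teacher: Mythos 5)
There is a genuine gap: your proposal identifies the right ingredients (the cone structure along the two cores, the rotational freedom supplied by connectedness of $S^{\lambda-1}$ and $S^{n-\lambda-1}$, and the dilation invariance of the lightcones), but it explicitly stops short of the only nontrivial part. Your steps (a) and (b) --- constructing f.d.\ timelike curves from $q$ that cross $\{f=0\}$, reach $\{r=0\}$ at small $\rho$, and then sweep out a full-dimensional punctured ball --- are precisely what Lemma~\ref{lemapp} asserts, and you flag them as ``the main obstacle'' without carrying them out. Saying the lemma ``should follow by assembling'' \cite{BDGSS} is a statement of intent, not a proof. The paper's proof supplies exactly this construction as a short explicit chain of chronological steps
\[
q \ll (\epsilon_1,0,0,0) \ll (\epsilon_1\delta,\theta,0,0) \ll (\epsilon_1\delta\epsilon_2,\theta,\epsilon_3,\phi),
\]
first passing from $q$ to the $\{r=0\}$ core inside a constant-angle $2$-plane, then rotating $\Theta$ with a worst-case choice of $\delta$ at $\theta=\pi$, then branching off with arbitrary $\Phi$ at small $r$, and finally using the constant-angle $2$-plane picture once more to fill in the punctured box $\{\rho<\rho_0,\,r<r_0\}$.

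Your preliminary reduction to $q$ on the $\{\rho=0\}$ core is also a detour that actually makes the next step harder, not easier. In a constant-angle $2$-plane the null geodesics \eqref{eqapp} through a point with $\rho=0$ are hyperbolas that never reach the $r=0$ axis (setting $r=0$ in $\sqrt{\zeta-1}(r^2-\rho^2)=\pm 2r\rho+\sqrt{\zeta-1}\,r_0^2$ gives no real solution), so from such a $q$ one cannot perform the paper's first step $q\ll(\epsilon_1,0,0,0)$ within any $2$-plane; one is forced into a genuinely transverse construction, which you do not carry out. The paper avoids this entirely by starting from $\rho_q>0$. Two smaller points: the claim that ``a f.d.\ timelike curve starting from a point with $f<0$ can be prolonged until it meets $\{\rho=0\}$'' is not true of an arbitrary curve (it is a statement about the existence of \emph{some} curve, as in Lemma~\ref{lemtip2}); and the dilation normalization to $\varepsilon=1$ may leave the chart $\Mwp$, so one should only scale down, normalizing to some fixed $\varepsilon_0$ smaller than the chart radius.
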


\begin{proof}
 We show the case when $q \in \Imi(\pt)$. W.l.o.g.\ we may choose our coordinates such that $q = (\rho_q,0,r_q,0)$, where necessarily $\rho_q \neq 0$ (otherwise $q$ cannot be in $\Imi(\pt)$). We want to find $\rho_0, r_0$ such that all points $(\rho,\Theta,r,\Phi) \neq \pt$ with $\rho < \rho_0$, $r < r_0$ and $\Theta, \Phi$ arbitrary, are contained in $\IpM(q)$. By symmetry, we may choose our coordinates such that at most one of the $\Theta$- and one of the $\Phi$-angles may be different from zero, hence effectively reducing the problem to four dimensions.
 
 Our argument now resembles the one in the proof of \cite[Claim 1]{BDGSS}. Let $\ll$ denote the chronological relation in $\Mwop$. If $\rho_q, r_q \neq 0$, then
 \[
  x = (\rho_x,0,r_x,0) \ll (\epsilon_1,0,0,0)  \ll (\epsilon_1 \delta,\theta,0,0) \ll (\epsilon_1 \delta \epsilon_2,\theta,\epsilon_3,\phi).
 \]
 In every step where we have added an $\epsilon$, we have used our analysis of the causal structure in the case $\theta, \phi$ constant. In the step where we have added $\delta$, it is using our analysis of the $\rho =0$ and $\theta = \text{const.}$ case. In principle, $\delta$ depends on $\theta$. We see, however, that the ``worst case scenario'' (when $\delta$ has to be the smallest) is when $\theta = \pi$. Thus we can choose this largest value, so that the procedure works in all cases. Note also that in the last step, since we starting from the origin of the $(r,\phi)$-plane, we can choose any value for $\phi$ that we want. Setting $\rho_0 := \epsilon_1 \delta \epsilon_2$ and $r_0 := \epsilon_3$, and again considering the causal structure in the case $\theta, \phi$ constant, we are done.
\end{proof}

\section{Causal continuity} \label{appcc}

Let $(M,g)$ be a non-degenerate spacetime. We refer to \cite[Chap.\ 3]{BEE} for the basic concepts and notation of causality theory. The idea is that $(M,g)$ is causally continuous if the set valued functions $q \mapsto \IpmM(q)$ are continuous. There are various equivalent ways to make this precise \cite[pp.\ 59-71]{BEE}. In this paper, we use the following definition, which is perhaps the most standard one, even though it does not directly capture the intuition behind the concept.

\begin{defn} \label{defcc}
  A spacetime $(\Mwop,g)$ is called
  \begin{enumerate}
      \item \emph{distinguishing} if
      \begin{equation*}
      \ImM(p) = \ImM(q) \iff p=q \iff \IpM(p) = \IpM(q)
      \end{equation*}
      for all $p,q \in M$,
      \item \emph{reflecting} if
      \begin{equation*}
      \ImM(p) \subseteq \ImM(q) \iff \IpM(p) \supseteq \IpM(q)
      \end{equation*}
      for all $p,q \in M$,
      \item \emph{causally continuous} if it is distinguishing and reflecting.
  \end{enumerate}
\end{defn}

The following example shows that causal continuity is not stable under perturbations of the metric $g$, even if we only allow perturbations that make the lightcones narrower.

\begin{ex} \label{ex3}
 Let $M := \real^2 \setminus \{(x,t) \mid x \geq 2 \vert t \vert \}$ and $g_\alpha = -\alpha dt^2 + dx^2$. Then $(M,g_\alpha)$ is causally continuous for $\alpha \geq 2$ and causally discontinuous for $\alpha < 2$. This can be seen in Figure \ref{figapp2a}: Reflectivity is violated for pairs of points lying on the diagonal red line, one above and one below the origin (such as the depicted points $p,q$). The red line has slope $1/\alpha$, hence if $\alpha \geq 2$, half of the red line lies inside the removed wedge, and there is no violation of reflectivity anymore.
\end{ex}

The next example shows that causal continuity is not stable under widening of the lightcones, either.

\begin{ex} \label{ex4}
Let $M := \real^2 \setminus \{(x,t) \mid t \leq - 2 \vert x \vert \}$ and $g_\alpha = -\alpha dt^2 + dx^2$. Then $(M,g_\alpha)$ is causally continuous for $\alpha \leq \frac{1}{2}$ and causally discontinuous for $\alpha > \frac{1}{2}$. The argument is similar to the one in Example \ref{ex3} (see Figure \ref{figapp2b}).
\end{ex}

\begin{figure}
 \centering
 \begin{subfigure}[b]{0.49\textwidth}
 \centering
 \begin{tikzpicture}
   \begin{axis}[xmin=-2, xmax=2, ymin=-2, ymax=2, hide axis,axis equal image,clip=false]
        \draw[name path=wedgeupper] (axis cs: 0,0) -- (axis cs: 2,1);
        \draw[name path=wedgelower] (axis cs: 0,0) -- (axis cs: 2,-1);
        \draw[red,thick] (axis cs: -2,-2) -- (axis cs: 2,2);
        \filldraw[black] (axis cs: -1,-1) circle (2pt); 
        \node at (axis cs: -1,-1.2) {$p$}; 
        \filldraw[black] (axis cs: 1,1) circle (2pt); 
        \node at (axis cs: 1,0.8) {$q$}; 
        \path[name path = top] (axis cs: -2,2) -- (axis cs: 2,2);
        \path[name path = top2] (axis cs: 0,2) -- (axis cs: 2,2);
        \path[name path = bottom] (axis cs: -2,-2) -- (axis cs: 0,-2);
        \draw[name path = pastp] (axis cs: 0,-2) -- (axis cs: -1,-1);
        \addplot[green!25] fill between[of=bottom and pastp];
        \draw[name path = pastq] (axis cs: 1,1) -- (axis cs: 1.33,0.67);
        \draw[name path = futureq] (axis cs: 1,1) -- (axis cs: 0,2);
        \draw[name path = futurep] (axis cs: -1,-1) -- (axis cs: -2,0);
        \addplot[green!25] fill between[of=wedgeupper and pastq];
        \addplot[blue!25] fill between[of=futurep and top];
        \addplot[blue!50] fill between[of=futureq and top2];
        \addplot[gray!50] fill between[of = wedgelower and wedgeupper];
        \node at (axis cs: 1.5,0) {remove};
        \node at (axis cs: 1,1.7) {$\IpF(q)$};
        \node at (axis cs: -1,1) {$\IpF(p)$};
        \node at (axis cs: -1,-1.7) {$\ImF(p)$};
        \node[anchor = north] at (axis cs: 1,-1) {$\ImF(q)$};
        \draw[-stealth]  (axis cs: 1,-1) -- (axis cs: 0.9,0.6);
   \end{axis}
   \end{tikzpicture}
   \caption{Example \ref{ex3} with $\alpha = 1$.}  \label{figapp2a}
 \end{subfigure} \hfill
  \begin{subfigure}[b]{0.49\textwidth}
 \centering
   \begin{tikzpicture}
    \begin{axis}[xmin=-2, xmax=2, ymin=-2, ymax=2, hide axis,axis equal image,clip=false]
        \draw[name path=wedgeleft] (axis cs: 0,0) -- (axis cs: -1,-2);
        \draw[name path=wedgeright] (axis cs: 0,0) -- (axis cs: 1,-2);
        \draw[red,thick] (axis cs: -2,-2) -- (axis cs: 2,2);
        \filldraw[black] (axis cs: -1,-1) circle (2pt); 
        \node at (axis cs: -1,-1.2) {$p$}; 
        \filldraw[black] (axis cs: 1,1) circle (2pt); 
        \node at (axis cs: 1,0.8) {$q$}; 
        \path[name path = top] (axis cs: -2,2) -- (axis cs: 2,2);
        \path[name path = top2] (axis cs: 0,2) -- (axis cs: 2,2);
        \path[name path = bottom] (axis cs: -2,-2) -- (axis cs: 0,-2);
        \path[name path = bottom2] (axis cs: -1,-2) -- (axis cs: 1,-2);
        \path[name path = bottom3] (axis cs: 1,-2) -- (axis cs: 2,-2);
        \draw[name path = pastp] (axis cs: -1,-1) -- (axis cs: -0.67,-1.33);
        \addplot[green!25] fill between[of=bottom and pastp];
        \draw[name path = pastq] (axis cs: 1,1) -- (axis cs: 2,0);
        \draw[name path = futureq] (axis cs: 1,1) -- (axis cs: 0,2);
        \draw[name path = futurep] (axis cs: -1,-1) -- (axis cs: -2,0);
        \addplot[green!25] fill between[of=wedgeright and pastq];
        \addplot[green!25] fill between[of=bottom3 and pastq];
        \addplot[blue!25] fill between[of=futurep and top];
        \addplot[blue!50] fill between[of=futureq and top2];
        \addplot[gray!50] fill between[of = bottom2 and wedgeright];
        \node at (axis cs: 0,-1.8) {remove};
        \node at (axis cs: 1,1.7) {$\IpF(q)$};
        \node at (axis cs: -1,1) {$\IpF(p)$};
        \node at (axis cs: -1.28,-1.75) {$\ImF(p)$};
        \node at (axis cs: 1,0) {$\ImF(q)$};
   \end{axis}
   \end{tikzpicture}
  \caption{Example \ref{ex4} with $\alpha = 1$.} \label{figapp2b}
  \end{subfigure}
  \caption{A pair of points $p,q$ for which reflectivity is violated.}
\end{figure}
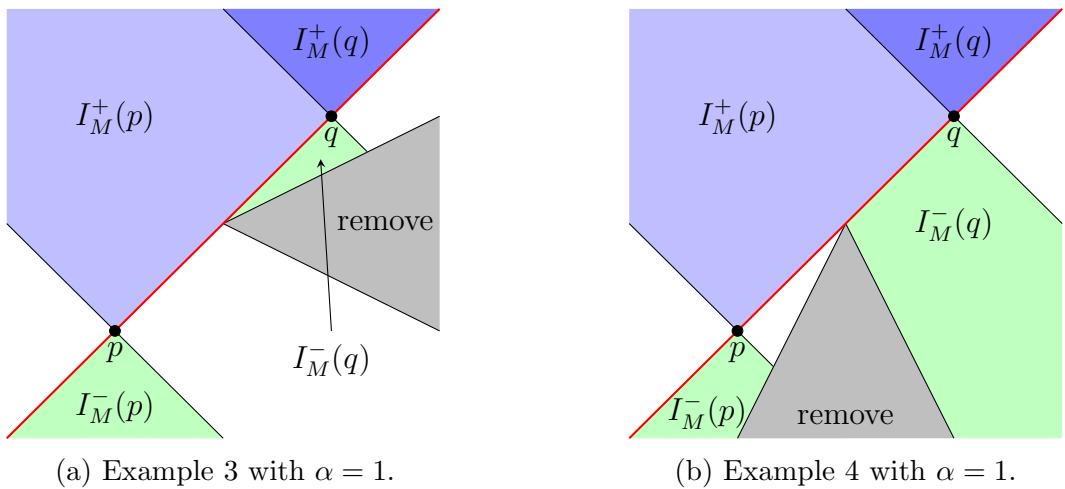

\bibliographystyle{abbrv}
\bibliography{Topochange}

\begin{thebibliography}{10}

\bibitem{AnDW}
A.~Anderson and B.~S. DeWitt.
\newblock {Does the Topology of Space Fluctuate?}
\newblock {\em Found. Phys.}, 16:91--105, 1986.

\bibitem{AGH}
L.~Andersson, G.~J. Galloway, and R.~Howard.
\newblock The cosmological time function.
\newblock {\em Classical Quantum Gravity}, 15(2):309--322, 1998.

\bibitem{BEE}
J.~K. Beem, P.~E. Ehrlich, and K.~L. Easley.
\newblock {\em Global {L}orentzian geometry}, volume 202 of {\em Monographs and
  Textbooks in Pure and Applied Mathematics}.
\newblock Marcel Dekker, Inc., New York, second edition, 1996.

\bibitem{BeSa1}
A.~N. Bernal and M.~S{\'a}nchez.
\newblock On smooth {C}auchy hypersurfaces and {G}eroch's splitting theorem.
\newblock {\em Comm. Math. Phys.}, 243(3):461--470, 2003.

\bibitem{BeSa2}
A.~N. Bernal and M.~S{\'a}nchez.
\newblock Smoothness of time functions and the metric splitting of globally
  hyperbolic spacetimes.
\newblock {\em Comm. Math. Phys.}, 257(1):43--50, 2005.

\bibitem{BDGSS}
A.~Borde, H.~F. Dowker, R.~S. Garcia, R.~D. Sorkin, and S.~Surya.
\newblock Causal continuity in degenerate spacetimes.
\newblock {\em Classical Quantum Gravity}, 16(11):3457--3481, 1999.

\bibitem{BDJS}
M.~Buck, F.~Dowker, I.~Jubb, and R.~Sorkin.
\newblock The {S}orkin{\textendash}{J}ohnston state in a patch of the trousers
  spacetime.
\newblock {\em Classical Quantum Gravity}, 34(5):055002, 2017.

\bibitem{BuGH}
A.~Burtscher and L.~Garc\'\i{}a-Heveling.
\newblock {Time functions on Lorentzian length spaces}.
\newblock \href{https://arxiv.org/abs/2108.02693}{arXiv:2108.02693[gr-qc]},
  2021.

\bibitem{Dow}
F.~Dowker.
\newblock {Topology change in quantum gravity}.
\newblock In {\em {Workshop on Conference on the Future of Theoretical Physics
  and Cosmology in Honor of Steven Hawking's 60th Birthday}}, pages 436--452, 6
  2002.

\bibitem{DoSu}
F.~Dowker and S.~Surya.
\newblock {Topology change and causal continuity}.
\newblock {\em Phys. Rev. D}, 58:124019, 1998.

\bibitem{DoGa}
H.~F. Dowker and R.~S. Garcia.
\newblock {A Handlebody calculus for topology change}.
\newblock {\em Classical Quantum Gravity}, 15:1859--1879, 1998.

\bibitem{DGS2}
H.~F. Dowker, R.~S. Garcia, and S.~Surya.
\newblock {$K$}-causality and degenerate spacetimes.
\newblock {\em Classical Quantum Gravity}, 17(21):4377--4396, 2000.

\bibitem{DGS1}
H.~F. Dowker, R.~S. Garcia, and S.~Surya.
\newblock Morse index and causal continuity. {A} criterion for topology change
  in quantum gravity.
\newblock {\em Classical Quantum Gravity}, 17(3):697--712, 2000.

\bibitem{GaHo}
S.~R. Garcia and R.~A. Horn.
\newblock {\em A Second Course in Linear Algebra}.
\newblock Cambridge Mathematical Textbooks. Cambridge University Press, 1
  edition, 2017.

\bibitem{Ger2}
R.~Geroch.
\newblock Domain of dependence.
\newblock {\em J. Mathematical Phys.}, 11:437--449, 1970.

\bibitem{Ger}
R.~P. Geroch.
\newblock {Topology in general relativity}.
\newblock {\em J. Math. Phys.}, 8:782--786, 1967.

\bibitem{HaSa}
S.~W. Hawking and R.~K. Sachs.
\newblock Causally continuous spacetimes.
\newblock {\em Comm. Math. Phys.}, 35:287--296, 1974.

\bibitem{Jan}
D.~W. {Janssen}.
\newblock {Quantum Fields on Semi-globally Hyperbolic Space-Times}.
\newblock {\em Comm. Math. Phys.}, 2022.
\newblock
  \href{https://doi.org/10.1007/s00220-022-04328-7}{https://doi.org/10.1007/s00220-022-04328-7}.

\bibitem{Joh}
S.~Johnston.
\newblock Feynman propagator for a free scalar field on a causal set.
\newblock {\em Phys. Rev. Lett.}, 103(18):180401, 4, 2009.

\bibitem{KoSe}
M.~Kontsevich and G.~Segal.
\newblock Wick rotation and the positivity of energy in quantum field theory.
\newblock {\em Q. J. Math.}, 72(1-2):673--699, 2021.

\bibitem{Kun}
W.~Kundt.
\newblock Non-existence of trouser-worlds.
\newblock {\em Comm. Math. Phys.}, 4(2):143--144, 1967.

\bibitem{KuSa}
M.~Kunzinger and C.~S\"{a}mann.
\newblock Lorentzian length spaces.
\newblock {\em Ann. Global Anal. Geom.}, 54(3):399--447, 2018.

\bibitem{LoSo}
J.~Louko and R.~D. Sorkin.
\newblock {Complex actions in two-dimensional topology change}.
\newblock {\em Classical Quantum Gravity}, 14:179--204, 1997.

\bibitem{MCT}
C.~Manogue, E.~Copeland, and T.~Dray.
\newblock {The trousers problem revisited}.
\newblock {\em Pramana}, 30:279--–292, 1988.

\bibitem{Mil}
J.~W. {Milnor}.
\newblock {\em {Morse theory. Based on lecture notes by M.\ Spivak and R.\
  Wells}}, volume~51 of {\em {Annals of Mathematics Studies}}.
\newblock Princeton University Press, 1963.

\bibitem{Min}
E.~Minguzzi.
\newblock {Limit curve theorems in Lorentzian geometry}.
\newblock {\em J. Math. Phys.}, 49:092501--092518, 2008.

\bibitem{San2}
M.~S{\'a}nchez.
\newblock {A class of cosmological models with spatially constant sign-changing
  curvature}.
\newblock {\em Port. Math.}, 2023.
\newblock DOI 10.4171/PM/2099.

\bibitem{Sor2}
R.~D. Sorkin.
\newblock Consequences of spacetime topology.
\newblock In {\em Proceedings of the Third Canadian Conference on General
  Relativity and Relativistic Astrophysics}, pages 137--163, 1989.

\bibitem{Sor}
R.~D. Sorkin.
\newblock {Forks in the road, on the way to quantum gravity}.
\newblock {\em Int. J. Theor. Phys.}, 36:2759--2781, 1997.

\bibitem{Sor3}
R.~D. Sorkin.
\newblock {Scalar Field Theory on a Causal Set in Histories Form}.
\newblock {\em J. Phys. Conf. Ser.}, 306:012017, 2011.

\bibitem{Whe}
J.~A. {Wheeler}.
\newblock {On the nature of quantum geometrodynamics}.
\newblock {\em {Ann. Phys.}}, 2:604--614, 1957.

\bibitem{Wit}
E.~Witten.
\newblock {A Note On Complex Spacetime Metrics}.
\newblock \href{https://arxiv.org/abs/2111.06514}{arXiv:2111.06514[hep-th]},
  2021.

\bibitem{Yod1}
P.~Yodzis.
\newblock Lorentz cobordism.
\newblock {\em Comm. Math. Phys.}, 26:39--52, 1972.

\bibitem{Yod2}
P.~Yodzis.
\newblock Lorentz cobordism. {II}.
\newblock {\em Gen. Relativity Gravitation}, 4(4):299--307, 1973.

\end{thebibliography}

\end{document}